\theoremstyle{plain}
\newtheorem{theorem}{Theorem}[section]
\newtheorem{definition}[theorem]{Definition}
\newtheorem{lemma}[theorem]{Lemma}
\newtheorem{corollary}[theorem]{Corollary}
\newtheorem{proposition}[theorem]{Proposition}
\newtheorem{claim}[theorem]{Claim}
\newtheorem{example}[theorem]{Example}
\newtheorem*{note}{Note}
\newcommand{\eps}{\varepsilon}
\newcommand{\E}{\mathbb{E}}
\newcommand{\one}{\mathbf{1}}
\newcommand{\allocsub}[1]{\mathbb{A}_{#1}}
\newcommand{\allocsubi}{\allocsub{i}}
\newcommand{\inspect}[1]{\mathbb{I}_{#1}}
\newcommand{\inspecti}{\inspect{i}}
\newcommand{\U}{\mathcal{U}}
\newcommand{\M}{\mathcal{M}}
\newcommand{\oo}{\alpha}
\newcommand{\Pro}{\mathbf{P}}
\newcommand{\W}{\mathcal{W}}
\newcommand{\strike}{\sigma}
\newcommand{\covered}{\kappa}
\newcommand{\R}{\mathbb{R}}
\newcommand{\N}{\mathbb{N}}
\DeclareMathOperator*{\argmax}{arg\,max}
\DeclareMathOperator*{\argmin}{arg\,min}
\DeclarePairedDelimiter{\set}{\{}{\}}
\DeclarePairedDelimiter{\sq}{[}{]}
\DeclarePairedDelimiter{\paren}{\lparen}{\rparen}
\DeclarePairedDelimiter{\floor}{\lfloor}{\rfloor}
\newcommand{\weitz}{\mathrm{Weitz}}
\newcommand{\OPT}{\mathrm{OPT}}
\newcommand{\PNOI}{\mathbf{P}}
\newcommand{\TPNOI}{\mathbf{TP}}
\newcommand{\DTPNOI}{\mathbf{DTP}}
\newcommand{\STDP}{\mathbf{ST}}
\newcommand{\order}{\mathbf{ord}}
\newcommand{\ALG}{\mathrm{ALG}}
\newcommand{\OAL}{\mathrm{OAL}}
\newcommand{\LOAL}{\mathrm{LOAL}}
\newcommand{\poly}{\mathrm{poly}}
\newcommand{\NEG}{\mathsf{NEG}}
\newcommand{\phasetrans}[1]{\mathbb{PT}_{#1}}
\newcommand{\class}{\mathcal{C}}
\newcommand{\indexthreshold}{index-threshold sequence}
\newcommand{\GP}{\mathcal{P}}
\title{Pandora's Problem with Nonobligatory Inspection: Optimal Structure and a PTAS}
\author{Hedyeh Beyhaghi\thanks{Carnegie Mellon University. Email: \texttt{hedyeh@cmu.edu}.} \and%
Linda Cai\thanks{Princeton University. Email: \texttt{tcai@princeton.edu}. }}
\date{}
\begin{document}
\maketitle
\begin{abstract}
Weitzman~\cite{Weitzman79} introduced Pandora's box problem as a mathematical model of sequential search with inspection costs, in which a searcher is allowed to select a prize from one of $n$ alternatives. Several decades later, Doval~\cite{Doval18} introduced a close version of the problem, where the searcher does not need to incur the inspection cost of an alternative, and can select it uninspected. Unlike the original problem, the optimal solution to the nonobligatory inspection variant is proved to need adaptivity~\cite{Doval18}, and by recent work of~\cite{FuLL22}, finding the optimal solution is NP-hard.

Our first main result is a structural characterization of the optimal policy: We show there exists an optimal policy that follows only two different pre-determined orders of inspection, and transitions from one to the other at most once. Our second main result is a polynomial time approximation scheme (PTAS). Our proof involves a novel reduction to a framework developed by~\cite{FuLX18}, utilizing our optimal two-phase structure. Furthermore, we show Pandora's problem with nonobligatory inspection belongs to class NP, which by using the hardness result of~\cite{FuLL22}, settles the computational complexity class of the problem. Finally, we provide a tight 0.8 approximation and a novel proof for \emph{committing policies}~\cite{BeyhaghiK19} (informally, the set of nonadaptive policies) for general classes of distributions, which was previously shown only for discrete and finite distributions~\cite{GuhaMS08}.

\end{abstract}

\section{Introduction}\label{sec:intro}

Pandora's box problem, defined by Weitzman~\cite{Weitzman79}, is a model of sequential search, in which a searcher is presented a list of options to choose from and obtaining information about the value of each option is costly. 
More formally, in a Pandora's box problem, a searcher is allowed to select a prize from one of $n$ initially closed boxes. The values of the prizes inside the boxes are independent random variables, sampled from (not necessarily identical) distributions that are known to the searcher. The searcher chooses a sequence of operations, each of which is either opening a box or selecting a box. Opening box $i$ has an associated cost $c_i$ and results in learning the value $v_i$ of the prize contained inside. Selecting box $i$ results in a payoff of $v_i$ and immediately ends the search process. 
The searcher's goal is to design an adaptive policy (i.e., a choice of which operation to perform next, for every possible past history of operations and their outcomes) to maximize its expected utility, defined as the expectation of the prize selected, minus the sum of the inspection costs paid while opening boxes. 
Weitzman shows that in a model of the problem where acquiring a box is only allowed after opening it, referred to as the \emph{obligatory inspection} model, the optimal solution is nonadaptive and has a simple index-based structure. 

However, in many real-world environments such as hiring or school search, the agent can acquire a box (select an option) ``blind", i.e. without opening it and paying the inspection cost. Such scenarios motivate the \emph{nonobligatory inspection} model, introduced by Doval~\cite{Doval18}\footnote{A few papers \cite{GuhaMS08, ChangL09, AttiasKLS17, Doval18} have studied the same model in different contexts---see the related work section. \cite{Doval18} introduced the model in the context of search theory as a variant of Weitzman's model.}, where the searcher is allowed to acquire a box without opening it first. 
Prior literature presented evidence of  complexity of the optimal solution for Pandora's box problem with nonobligatory inspection.
In particular, Doval 
presents an example of a problem instance (Problem 3 in~\cite{Doval18}) with three boxes --- A, B, and C --- such that the optimal policy first opens box A, but the question of whether it subsequently opens box B before C or vice-versa depends on the value of the prize discovered inside box A, making the order of inspection adaptive. Furthermore, recently~\cite{FuLL22} showed that finding the optimal solution is NP-hard. It is even unknown whether the problem belong to class NP.

We study Pandora's box with nonobligatory inspection model and its optimal structure, and provide structural, complexity class, and approximation scheme results. In what follows, we overview our main results and techniques.

\subsection{Our Results}

\subsubsection{Structure of the Optimal Policy}

We show that despite the seemingly complicated nature of optimal policy, e.g., adaptive order of visiting boxes, and computational hardness, it has a simple structure. In fact, we show that there exists an optimal policy that follows only two different pre-determined orders and transitions from one to the other at most at one point.

\paragraph{A two-phase structure.} We prove that the optimal policy sets an initial ordering $\pi$, and a cutoff index $k$. It opens boxes one at a time according to this ordering until it either: (a) sees a sufficiently large value, in which case it concludes by using Weitzman's policy with obligatory inspection on the unopened boxes, or (b) reaches box $k$ without seeing a sufficiently large value, in which case it accepts box $k$ without inspection. Observe, for example, that this implies that there is just a single box $k$ that will ever be accepted without inspection.\footnote{Although the property that there is a unique box to be claimed closed has been shown previously by~\cite{GuhaMS08} for discrete and finite distributions, the two-phase structure is a novel contribution.} In other words, the optimal solution consists of two phases, where in each phase, the order of visiting boxes is pre-determined and nonadaptive. Whenever the maximum observed value, hereafter called the \emph{outside option} and denoted by $\oo$, exceeds the threshold, the policy switches to the second phase.

This result is summarized in the following statement, and also illustrated as~\Cref{algoTwoStageThreshold} in~\Cref{sec:opt}. The theorem is proved in \Cref{sec:opt}.

\begin{restatable}{theorem}{thmStructure}\label{thm:structure}
There exists an optimal policy specified by an ordering $\pi: [n] \rightarrow [n]$ of the $n$ boxes, a threshold $\tau: [n] \rightarrow \R$ for each index, and index $k$, where $0 \leq k \leq n$, such that while it has not terminated runs the following procedure for $j = 1, \ldots, k,$ sequentially. 
\begin{itemize}
    \item 
    If $j < k$ and if the maximum observed value is less than the next threshold, $\oo = \max_{1 \leq i < j}v_{\pi(i)} \leq \tau(j)$, then the policy will open box $\pi(j)$.
    \item
    If $j = k$ and if the maximum observed value is less than the next threshold, $\oo = \max_{1 \leq i < j}v_{\pi(i)} \leq \tau(j)$, then the policy will claim box $\pi(j)$ closed and terminate.
    \item 
    Otherwise, if $\oo > \tau(j)$, then run Weitzman's optimal policy with outside option $\oo$ on unobserved boxes $\pi(j), \pi(j+1), \ldots, \pi(n),$ and terminate.
\end{itemize}
\end{restatable}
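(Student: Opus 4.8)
The plan is to start from an arbitrary optimal policy and transform it into the stated form by exchange and ``surgery'' arguments on its decision tree, tracking a value function to certify that no utility is lost. After any history the only information relevant to the continuation is the set $S$ of still-closed boxes and the outside option $\oo$ (the maximum revealed value), since already-opened boxes are interchangeable; so it suffices to consider policies given by a deterministic map of $(S,\oo)$ into $\{\text{quit}\}\cup\{\text{open }i:i\in S\}\cup\{\text{take }i\text{ closed}:i\in S\}$. Writing $f(S,\oo)$ for the optimal continuation value, I would record the facts I need: $f(S,\oo)\ge\oo$; $f(S,\cdot)$ is nondecreasing with $f(S,\oo')-f(S,\oo)\le\oo'-\oo$ for $\oo\le\oo'$; and $f(S,\oo)=\oo$ once $\oo\ge\max_{i\in S}\max(\sigma_i,\E[v_i])$, where $\sigma_i$ is Weitzman's reservation value ($\E[(v_i-\sigma_i)^+]=c_i$). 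Defining also the \emph{backup value} $\rho_i$ by $\E[(\rho_i-v_i)^+]=c_i$, a one-line comparison using $f\ge\oo$ shows that whenever $\oo\ge\rho_i$, opening box $i$ and continuing optimally is at least as good as taking $i$ closed; hence a box is taken closed only at states with $\oo<\rho_i$.

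\textbf{The Weitzman regime.} I would then prove that for each $S$ there is a threshold $\hat\tau(S)$ so that for all $\oo>\hat\tau(S)$ the optimal continuation from $(S,\oo)$ takes no box closed, and therefore --- being optimal for the induced obligatory-inspection problem (with a quit option) --- is exactly Weitzman's index policy on $S$ with outside option $\oo$. The point is that ``the optimal continuation takes no box closed'' is a threshold event in $\oo$: the take-closed value is independent of $\oo$ while the explore-or-quit value is nondecreasing in $\oo$. This yields the third case of the theorem, and the thresholds $\tau(j)$ will be read off as $\hat\tau$ of the residual box set at step $j$.

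\textbf{The backup regime.} The heart of the argument is the complementary region $\oo\le\hat\tau(S)$. Here I would show that there is a single box $b$ (depending on $S$) that is the only box ever taken closed, together with a single permutation of $S$ --- which I expect to be the $\sigma$-decreasing order with $b$ pulled to a position $k$ --- along which the policy opens boxes, taking $b$ closed exactly when position $k$ is reached with $\oo$ still below its threshold, and otherwise switching to Weitzman as soon as $\oo$ crosses the current threshold. Three ingredients: (a) at any state where the policy takes a box closed it may as well take the one maximizing $\E[v_i]$ over the current closed set, since that action is terminal; (b) a Weitzman-style adjacent-transposition exchange forces the exploratory opens into $\sigma$-decreasing order at no loss; (c) a surgery step shows that if two distinct boxes are taken closed on two branches, or ``take closed'' is chosen at two different positions of the order, one branch can be rerouted so that a single box and position suffice. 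Ingredient (c), together with checking that the resulting thresholds are genuine single cutoffs, is where I expect the real difficulty: the value of keeping a box unopened as a backup is entangled with the option value of the remaining exploration, and the rerouting must remain consistent with the adaptive quit-to-Weitzman rule. I would approach this with an amortized ``capped-value'' accounting in the spirit of Weitzman's proof that the index policy attains $\E[\max_i\min(v_i,\sigma_i)]$, modified to credit the designated backup box its full expected value $\E[v_b]$ rather than $\min(v_b,\sigma_b)$; proving that the best single-backup policy attains $f$ would both pin down $b$ and $k$ and certify optimality.

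\textbf{Assembly.} Finally I would assemble the policy by induction on $|S|$: at $(S,\oo)$, if $\oo>\hat\tau(S)$ play Weitzman on $S$; otherwise the optimal first action is to open the first box $\pi_S(1)$ of the exploration order --- consistency of this first box across the whole range $\oo\le\hat\tau(S)$ being what ingredients (b) and (c) secure --- after which the continuation from $S\setminus\{\pi_S(1)\}$ is the two-phase policy supplied by the inductive hypothesis. Concatenating orders, shifting indices, and setting $\tau_S(1)=\hat\tau\bigl(S\setminus\{\pi_S(1)\}\bigr)$ gives the two-phase policy for $S$, and the bookkeeping from the previous steps shows it loses nothing. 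Applying this at $S=[n]$ with the initial (empty) outside option yields the claimed optimal policy, with $k$ the position at which the exploration order reaches the backup box, and $k=0$ in the degenerate case where quitting or committing immediately is already optimal.
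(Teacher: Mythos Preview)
Your threshold-event observation is the right starting point, but the justification you give is incomplete: comparing ``take some box closed now'' (independent of $\oo$) against ``explore or quit now'' (nondecreasing in $\oo$) only governs the \emph{first} action, not the global property ``the optimal continuation never takes any box closed.'' Proving that global property is a threshold event in $\oo$ requires an induction on $|S|$, and the paper's induction simultaneously delivers the key lemma you are missing: the value function is \emph{flat} below the threshold, i.e.\ $f(S,\oo)=f(S,0)$ for every $0\le\oo\le\hat\tau(S)$. This flatness is the whole argument. Once you have it, there is nothing left to do in the backup regime: since the optimal continuation value from $(S,\oo)$ equals that from $(S,0)$, the policy at $(S,\oo)$ may simply take whatever action is optimal at $(S,0)$, a single fixed action independent of $\oo$. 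Iterating yields a deterministic sequence $(i_1,\text{open}),\ldots,(i_{k-1},\text{open}),(i_k,\text{close})$, and that is the phase-one order together with the unique backup box --- no exchange arguments, no surgery, and no amortized accounting.

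Your ingredients (b) and (c) are therefore unnecessary, and (b) in the strong form you state it --- that the phase-one permutation is the $\sigma$-decreasing order with the backup box pulled to position $k$ --- cannot be right: it would pin the policy down to $O(n^2)$ candidates and solve the problem in polynomial time, contradicting the NP-hardness of~\cite{FuLL22}. The paper asserts only that \emph{some} order $\pi$ exists; identifying it is precisely the hard part. Your Assembly paragraph is close in spirit to the paper's actual proof, but the consistency of the first action across all $\oo\le\hat\tau(S)$, which you attribute to the unfinished (b) and (c), is exactly what the flatness lemma gives for free. Prove flatness by induction on $|S|$ --- the inductive step conditions on the value revealed by the first open and applies the hypothesis to $S\setminus\{i\}$ branch by branch --- and the rest of your outline goes through.
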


This result identifies the possibilities of claiming a closed box and claiming the outside option as either-or alternatives when the searcher decides the next action. In the first phase, i.e., while the maximum observed value is below the threshold, the optimal policy ignores the outside option completely, acts as if it were $0$, and relies only the closed box $\pi(k)$ as an alternative to opening boxes. In the second phase, however, there is a possibility of reverting to the outside option and no possibility of claiming a closed box.

\paragraph{Computing thresholds.}
We prove given the ordering of the first phase, $\pi$, the thresholds $\tau$ are computed in polynomial time with the following procedure. The threshold for box $\pi(j)$ is set to minimum $\oo$, such that running Weitzman's algorithm for $\pi(j), \pi(j+1), \ldots, \pi(n)$ with outside option $\oo$ has the same utility as following the (optimal) policy for $\pi(j), \pi(j+1), \ldots, \pi(n)$ with outside option $0$.

\subsubsection{Complexity Class} 
Pandora's box with nonobligatory inspection problem has been known to belong to PSPACE. There has been no evidence that showed the problem is not PSPACE-complete~\cite{BeyhaghiK19}, and as shown by \cite{FuLL22}, the problem is at least NP-hard. The two-phase structure of the optimal policy implies that this problem belongs to NP, and therefore is NP-complete. As stated, given any order $\pi$, the thresholds $\tau$ can be computed in polynomial time, and therefore the utility with respect to the order is verifiable in polynomial time. The proof of~\Cref{cr:NP} can be found in~\Cref{app:intro}.

\begin{restatable}{corollary}{crNP} \label{cr:NP}
Pandora's box with nonobligatory inspection belongs to class NP, and is NP-complete.  
\end{restatable}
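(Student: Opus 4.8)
The plan is to establish membership in NP by exhibiting a polynomially-sized certificate together with a polynomial-time verifier, and then to invoke the NP-hardness of~\cite{FuLL22} to conclude NP-completeness. By \Cref{thm:structure}, there is an optimal policy that is completely described by the triple $(\pi, \tau, k)$, where $\pi$ is a permutation of $[n]$, $k \in \{0, 1, \ldots, n\}$, and $\tau$ is a threshold vector. Moreover, the ``Computing thresholds'' paragraph tells us that once $\pi$ and $k$ are fixed, the thresholds $\tau$ are forced: $\tau(j)$ is the minimum outside option $\oo$ for which running Weitzman's (obligatory-inspection) policy on $\pi(j), \ldots, \pi(n)$ with outside option $\oo$ matches the utility of the optimal policy on that suffix with outside option $0$. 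Hence the certificate need only consist of the pair $(\pi, k)$, which has size $O(n \log n)$; the verifier recomputes $\tau$ itself.

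The verification procedure I would describe is: (1) from $(\pi, k)$, compute the thresholds $\tau(j)$ for $j = 1, \ldots, k$ using the stated procedure — this requires, for each suffix, evaluating Weitzman's index policy as a function of the outside option and solving for the crossover point, which is a monotone one-dimensional problem and runs in polynomial time; (2) compute the expected utility $U(\pi, \tau, k)$ of the two-phase policy of \Cref{thm:structure}; (3) compare $U(\pi, \tau, k)$ against the target value supplied in the decision instance. For step (2) one writes the expectation by conditioning on the first index $j$ at which the running maximum $\oo$ first exceeds $\tau(j)$ (or on reaching box $k$ with $\oo \le \tau(k)$): in the former case the continuation value is the known closed-form value of Weitzman's policy with outside option $\oo$ on the remaining boxes, in the latter it is $\E[v_{\pi(k)}]$ (or $0$ if the search reaches exhaustion); each conditional probability and conditional continuation value is an integral/summation against the given independent distributions and is computable to the required precision in polynomial time. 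Summing these contributions gives $U$ exactly (for discrete/finite distributions) or to sufficient accuracy.

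The one subtlety to address carefully — and the step I expect to be the main obstacle — is the model of computation and representation of the input distributions, together with associated precision/bit-complexity bookkeeping. To claim membership in NP cleanly, one should work with distributions specified by finitely many atoms with rational values and probabilities (as in the hardness instances of~\cite{FuLL22}, which is all that is needed for NP-completeness), and then verify that all intermediate quantities — the Weitzman reservation values, the thresholds $\tau(j)$, and the utility $U$ — are rationals of polynomially bounded bit-length, so that the entire verification, including the comparison with the target, is exact and polynomial-time. I would note that the Weitzman index of a box is the root of a piecewise-linear equation in the outside option and hence rational of controlled size, and that the crossover defining $\tau(j)$ is likewise the solution of a piecewise-linear equation; induction on the suffix length then bounds the bit-complexity throughout. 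With this in place, the verifier accepts exactly when $U(\pi,\tau,k)$ meets the target, and by \Cref{thm:structure} some certificate achieves the optimum, establishing that the problem is in NP; combined with the NP-hardness of~\cite{FuLL22}, this yields NP-completeness.
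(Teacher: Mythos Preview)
Your proposal is correct and follows essentially the same approach as the paper: the certificate is the phase-one ordering (you also include $k$, whereas the paper enumerates all $n+1$ values of $k$ inside the verifier---either choice works), and verification recomputes the thresholds and the resulting expected utility by backward induction over the suffix, comparing against the target. The paper's proof is terser and implicitly assumes discrete finite supports (it iterates over $\theta \in \Theta$ to locate each threshold), whereas you are more explicit about the bit-complexity bookkeeping needed to make the verifier exact; this extra care is appropriate and does not change the underlying argument.
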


\subsubsection{PTAS}
We provide the first\footnote{Alongside with an independent concurrent work of Fu, Li, and Liu---see related work for more discussion.} polynomial time approximation scheme for Pandora's box with nonobligatory inspection problem. Prior to our work, the best approximation results were $0.8$ approximation~\cite{GuhaMS08} for discrete and finite distributions, and $0.8-\eps$ approximation for general distributions~\cite{FuLL22}. The proof of~\Cref{thm:ptas} can be found in~\Cref{sec:ptas}. 
\begin{restatable}{theorem}{THMPTAS}\label{thm:ptas}
There exists a PTAS for the  Pandora's box with nonobligatory inspection problem. 
\end{restatable}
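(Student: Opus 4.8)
The plan is to leverage \Cref{thm:structure} to collapse the search space and then feed the resulting object into the PTAS framework of \cite{FuLX18}. By \Cref{thm:structure} it suffices to optimize over two-phase policies, and such a policy is completely determined by the phase-one ordering $\pi$ together with the cutoff index $k$, since the thresholds $\tau$ are then pinned down and computable in polynomial time. I would therefore recast the problem as a sequential decision process played along $\pi$: after having opened $\pi(1),\dots,\pi(j-1)$ the state is just the current outside option $\oo=\max_{i<j}v_{\pi(i)}$ (a single nonnegative real), and the moves available at step $j$ are to open $\pi(j)$ (paying $c_{\pi(j)}$ and updating $\oo\leftarrow\max(\oo,v_{\pi(j)})$), to claim the designated box closed, or to hand the remaining boxes to Weitzman's obligatory-inspection policy with outside option $\oo$ and stop; the value of this last move is a monotone, polynomial-time-computable function of $\oo$ and the remaining set. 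Thus an instance becomes an (exponentially large) stochastic dynamic program whose \emph{state is one-dimensional}.

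Next I would show this dynamic program fits — after some massaging — into the class for which \cite{FuLX18} give a $(1-\eps)$-approximation. Their template applies once one exhibits (i) a bounded scalar state, (ii) per-step reward and transition maps that are monotone and, after rescaling, Lipschitz, and (iii) polynomially many actions per step. Conditions (i)--(ii) are essentially free here: after capping values at a suitable multiple of $\OPT$ (which costs only $\eps\cdot\OPT$, using the committing-policy machinery to lower bound $\OPT$) the state $\oo$ lives in a bounded interval, the transition $\oo\mapsto\max(\oo,v)$ is $1$-Lipschitz and monotone, and both the ``claim closed'' and ``switch to Weitzman'' payoffs are monotone in $\oo$. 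The discretization is the familiar one: round $\oo$ to a geometric grid of $\poly(n/\eps)$ values together with a single bucket absorbing everything below an $\eps$-fraction of $\OPT$, and argue that this perturbs the utility of every policy by at most $\eps\cdot\OPT$.

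The real obstacle is condition (iii): the ``action'' at each step is \emph{which box to open next}, and the Weitzman continuation value depends on the entire remaining set, so a priori both the action space and the implicit state are exponential. Overcoming this is where the two-phase structure does the heavy lifting. I would (a) guess the $O(1/\eps)$ boxes that matter most — in particular the closed-box candidate $\pi(k)$ and the few boxes contributing most to the objective — at a cost of $n^{O(1/\eps)}$ enumeration; and (b) argue that the remaining ``small'' boxes may be placed in a canonical order (decreasing Weitzman reservation value) with negligible loss, and that their joint effect on both the phase-one dynamics and the phase-two Weitzman value is mediated only through the scalar $\oo$ — so that, conditioned on a guess, the action set collapses to polynomial size and the state stays one-dimensional. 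This ``collapse of the combinatorial dimension,'' justified via Weitzman's index theorem and the either--or structure of the two phases in \Cref{thm:structure}, is the crux; I expect it to be the most delicate part, whereas everything downstream is the standard \cite{FuLX18} discretization bookkeeping.

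Finally, assembling the PTAS: enumerate the $n^{O(1/\eps)}$ guesses; for each, run the discretized one-dimensional dynamic program to compute a near-optimal two-phase policy consistent with that guess in time $\poly(n,1/\eps)$; return the best policy found. Correctness follows from \Cref{thm:structure} (the true optimum is a two-phase policy, hence captured up to the guessed boxes by some branch of the enumeration) together with the $\eps\cdot\OPT$ rounding bound; the total running time is $\poly(n)\cdot (n/\eps)^{O(1/\eps)}$, which is polynomial for every fixed $\eps$, as required.
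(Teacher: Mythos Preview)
Your high-level plan---reduce to two-phase policies via \Cref{thm:structure} and feed the result into the \cite{FuLX18} framework---matches the paper's, but two of the three ``challenges'' you wave away are precisely where the real work is, and your proposed treatment of them does not go through.

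First, the cost term. You model the per-step move ``open $\pi(j)$, pay $c_{\pi(j)}$'' directly, but the \cite{FuLX18} framework requires the per-step reward $g(V_j,a_j)$ to have \emph{nonnegative} expectation; a raw $-c_{\pi(j)}$ term violates this. More subtly, even if one tries to absorb the cost into a net reward, any discretization of the scalar state $\oo$ can distort a difference $V_j-c$ arbitrarily badly in multiplicative terms. The paper's main technical idea here is to eliminate costs altogether before discretizing: it introduces \emph{stage-non-exposed} policies (the two-phase analogue of the non-exposed property of \cite{KleinbergWW16}), proves the optimum can be taken stage-non-exposed, and rewrites the phase-transition payoff at step $j$ as $\E_{v_{i_j}>\tau_j}[\kappa_{i_j}]+\E_{v_{i_j}>\tau_j}[(\weitz_{\geq j+1}-v_{i_j})^+]$, which is manifestly nonnegative. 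Your proposal has no analogue of this step.

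Second, your handling of the ``switch to Weitzman'' action and the order. Encoding phase two as a single terminal action whose payoff equals the Weitzman value on the remaining set does \emph{not} fit the framework: $g$ may depend only on the current internal value and the current action, not on the set of previously probed elements. The paper sidesteps this by having the agent, once in phase two, continue to probe boxes one at a time, collecting $(\widetilde{\kappa_i}-V_j)^+$ from each; the framework already tracks which elements have been probed, so no extra state is needed. Relatedly, your fix for the action-space issue---guess $O(1/\eps)$ important boxes and put the rest in decreasing-$\sigma$ order---is unjustified: \Cref{thm:structure} gives no reason the phase-one order should be close to Weitzman's order, and the paper does \emph{not} argue anything of the sort. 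Instead it lets the \cite{FuLX18} PTAS search over orderings directly; the $n^{O(\poly(1/\eps))}$ enumeration in the paper is not over ``important boxes'' but over candidate discretization supports $\W$ for the internal value, needed because capping at a constant multiple of $\OPT$ fails (the gap between $\OPT$ and $\E[\max_i v_i]$ can be $\Theta(n)$).
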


\subsubsection{Tight Approximation for Committing Policies}
Committing policies were defined by~\cite{BeyhaghiK19} as a set of $n+1$ order-nonadaptive policies each with at most one fixed box that the agent can only claim it closed. They showed that the best of these policies provide a $1-1/e$ approximation of the optimum with a tight $0.8$ bound for two boxes. However, the same problem was studied sooner by Guha et al.~\cite{GuhaMS08} in the context of wireless networks. The main contribution of \cite{GuhaMS08} is a $0.8$ approximation to the Pandora's problem with nonobligatory inspection when the support of each box value distribution is discrete and finite. We prove the $0.8$ approximation for all (including continuous) distributions as a corollary of \Cref{thm:structure}. The proof of~\Cref{thm:comitting} can be found in~\Cref{app:intro}.

\begin{restatable}{theorem}{thmComitting} \label{thm:comitting}
At least one of the possible $n+1$ committing policies, achieves at least $0.8$ of the optimal utility for Pandora's box with nonobligatory inspection problem.
\end{restatable}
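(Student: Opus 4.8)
The plan is to feed the structural result of \Cref{thm:structure} into a comparison with only two of the $n+1$ committing policies. Fix an optimal policy $\OPT$ of the two‑phase form guaranteed by \Cref{thm:structure}, with phase‑1 order $\pi$, thresholds $\tau$, cutoff $k$, and let $b:=\pi(k)$ be its \emph{unique} blind‑acceptance box. Let $\mathcal P_0$ be the committing policy that never claims any box closed; since a no‑commitment committing policy is exactly an instance of Pandora's box with obligatory inspection on all $n$ boxes, its optimum is order‑nonadaptive (Weitzman~\cite{Weitzman79}) with value $\E[\max_{i\in[n]}\covered_i]$, where $\covered_i:=\min(\strike_i,v_i)$ and $\strike_i$ is the reservation value of box $i$. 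Let $\mathcal P_b$ be the optimal committing policy whose closed box is $b$. Both are among the $n+1$ committing policies, so it suffices to show $\max\bigl(\E[\mathcal P_0],\E[\mathcal P_b]\bigr)\ge 0.8\cdot\OPT$. (This is where \Cref{thm:structure} buys us the improvement over the generic $1-1/e$ of~\cite{BeyhaghiK19}: knowing that $\OPT$ ever blind‑accepts a single fixed box $b$ lets us charge almost all of $\OPT$ to just $\mathcal P_b$.)

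\emph{Step 1: a committing policy that mirrors $\OPT$ up to the phase‑2 surplus.} From $\OPT$ I build a committing policy $\mathcal Q$ with closed box $b$: run phase $1$ exactly as $\OPT$ (fixed order $\pi$, the same threshold stopping rule), and at the step $j\le k$ where $\OPT$ would switch to its phase‑2 Weitzman run with outside option $\oo$, instead simply stop and take the best opened value $\oo$; if that step is never reached, claim $b$ closed just as $\OPT$ does. Then $\mathcal Q$ is order‑nonadaptive with closed box $b$, so $\E[\mathcal P_b]\ge\E[\mathcal Q]$. On the event $E$ that $\OPT$ reaches step $k$ without switching, $\mathcal Q$ and $\OPT$ make identical moves and hence collect identical utility; on $\overline E$ they open the same prefix and pay the same costs, but $\OPT$ additionally gets the ``Weitzman surplus'' of its phase‑2 run over the outside option. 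Using the capped‑value identity for Weitzman's policy with an outside option, this surplus on $\overline E$ equals $(\max_{i\in R}\covered_i-\oo)^+$ with $R$ the set of boxes still unopened at the switch, so $\E[\mathcal Q]=\OPT-\Delta$ with $\Delta=\E\bigl[\mathbf 1[\overline E]\,(\max_{i\in R}\covered_i-\oo)^+\bigr]\ge 0$. In particular $\E[\mathcal P_b]\ge\OPT-\Delta$; also trivially $\E[\mathcal P_b]\ge\E[v_b]$ (claim $b$ closed immediately).

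\emph{Step 2: bounding the surplus by $\mathcal P_0$, and a small optimization.} Since $R\subseteq[n]$ and each $\covered_i\ge 0$, we get the crude bound $\Delta\le\E[\mathbf 1[\overline E]\max_{i\in[n]}\covered_i]\le\E[\mathcal P_0]$; to reach the constant $4/5$ one instead charges the part of $\OPT$'s phase‑2 value lying in the outside option $\oo$ to $\mathcal Q$ (already accounted above) and only the part strictly above $\oo$ to a fresh Weitzman run on all boxes, and also uses that the phase‑1 opening cost is identically sunk for $\OPT$ and $\mathcal Q$; this yields an inequality relating $\Delta$, $\E[\mathcal P_0]$, the expected outside option at the switch, and the expected phase‑1 cost. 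Crucially, carrying this out through the capped values $\covered_i$ rather than through explicit discrete supports is exactly what makes the argument distribution‑free, extending the $0.8$ bound of~\cite{GuhaMS08} from discrete distributions to arbitrary ones. Combining these inequalities with the trivial bounds ($\E[\mathcal P_b]\ge\OPT-\Delta$, $\E[\mathcal P_b]\ge\E[v_b]$) reduces the claim to minimizing $\max(\E[\mathcal P_0],\E[\mathcal P_b])/\OPT$ over a handful of scalar unknowns (the switching probability $\Prob[\overline E]$, the expected outside option, $\E[v_b]$, $\Delta$, the expected phase‑1 cost); that minimization evaluates to $4/5$, with the extremal configuration matching the known tight two‑box instance of~\cite{BeyhaghiK19, GuhaMS08}.

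The main obstacle is the quantitative tightness of Step 2 rather than the structural reduction (which \Cref{thm:structure} hands us for free): the naive bound $\Delta\le\E[\mathcal P_0]$ gives only a $1/2$ ratio, so the accounting must avoid double‑counting — the outside option $\oo$ is collected by $\mathcal Q$ and simultaneously sits inside $\OPT$'s phase‑2 value, and the phase‑1 costs must be cancelled identically on the $\OPT$ and $\mathcal Q$ sides — and it must express the residual surplus purely in terms of the capped values so that no assumption on the distributions enters. Once the right inequalities are in place, the final scalar optimization is routine and pins the approximation constant at $0.8$.
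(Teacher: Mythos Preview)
Your Step~1 is fine: the policy $\mathcal Q$ you build is indeed a committing policy with closed box $b$, and the identity $\E[\mathcal Q]=\OPT-\Delta$ with $\Delta=\E\bigl[\mathbf 1[\overline E]\,(\max_{i\in R}\covered_i-\oo)^+\bigr]$ follows directly from \Cref{pr:w_utility}. So $\E[\mathcal P_b]\ge\OPT-\Delta$ is a valid starting point.

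The gap is Step~2. You acknowledge that the crude bound $\Delta\le\E[\mathcal P_0]$ only gives a $1/2$ ratio, and then assert that ``more careful accounting'' plus a ``scalar optimization'' over $(\Prob[\overline E],\E[v_b],\Delta,\dots)$ evaluates to $4/5$ --- but you never write down the constraints those scalars must satisfy, and in fact your single inequality $\E[\mathcal P_b]\ge\OPT-\Delta$ together with $\Delta\le\E[\mathcal P_0]$ does not contain enough information to force $4/5$. There is no visible mechanism in your setup by which the inspection cost $c_b$ enters and then cancels, and that cancellation is exactly what drives the constant from $1/2$ up to $4/5$. As written, the proposal is a proof of $1/2$ with a promissory note for $4/5$.

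The paper's argument is structurally different and avoids this obstacle. Instead of one surrogate $\mathcal Q$, it builds \emph{two} modifications of the optimal two-phase policy $A$ around the unique backup box $B$: $A_{\text{open}}$, which \emph{opens} $B$ whenever $A$ would claim it closed, and $A_{\text{closed}}$, which \emph{claims $B$ closed} whenever $A$ would open it. Letting $\beta=\Prob[A\text{ claims }B\text{ closed}]$, one gets $U(A_{\text{open}})\ge\OPT-\beta c_B$ and $U(A_{\text{closed}})\ge\OPT-(1-\beta)\bigl(U(W)-c_B\bigr)$. The point of the pair is that the $c_B$ terms have opposite signs; the convex combination $(1-\beta)U(A_{\text{open}})+\beta\,U(A_{\text{closed}})$ makes them cancel and yields
\[
(1-\beta)\,U(A_{\text{open}})+\beta\,U(A_{\text{closed}})\;\ge\;\OPT-\beta(1-\beta)\,U(W).
\]
Since $A_{\text{open}}$ never claims a box closed and $A_{\text{closed}}$ never opens $B$, both are dominated by some committing policy, as is $U(W)$; hence $\OPT^{\text{commit}}\bigl(1+\beta(1-\beta)\bigr)\ge\OPT$, and minimizing over $\beta\in[0,1]$ gives the $4/5$. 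Note that $A_{\text{closed}}$ is \emph{not} your $\mathcal Q$: it follows $A$ all the way into phase~2 and only deviates at the moment $B$ would be opened, which is what produces the $U(W)-c_B$ term you are missing. If you want to salvage your route, you would need a second inequality that plays the role of $A_{\text{open}}$ and introduces $c_b$ with the opposite sign; your current list of ``scalar unknowns'' does not supply one.
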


\subsection{Our Techniques}

\subsubsection{Optimal Structure}

We first consider a standard generalization of Pandora's box problem, where an outside option is given for free, and the searcher can select it at any point (as an alternative to selecting one of the boxes).\footnote{For the original problem, this outside option is initially set to $0$.} This generalization provides a unified format for the original problem and the subproblems. Then, we study the behavior of optimal searcher and the optimal expected utility, for any set of uninspected boxes, as a function of the outside option. Our key lemma (\Cref{lm:thresholds_exist}) proves that for any set of uninspected boxes, there is a threshold, such that for outside options above the threshold, the optimal policy never claims a closed box, and for outside options below the threshold, the optimal expected utility is constant. The constant optimal expected utility property implies that the optimal policy with any outside option below the threshold can just mimic the action of an optimal policy with outside option $0$. On the other hand, since having an outside option above the threshold coincides with not ever claiming a closed box, in this situation, the optimal policy can mimic the action of Weitzman's policy. Furthermore, we extract additional properties of the outside options, which imply that as the searcher inspects boxes and the outside option (maximum observed value) is updated, there is at most one point where the outside option switches from being below the threshold of uninspected boxes to above. Altogether, these structural properties conclude our main structural result, \Cref{thm:structure}.   

\subsubsection{PTAS} \label{subsubsec:intro:ptas}

As a consequence of \Cref{thm:structure} (and also by \cite{GuhaMS08} for discrete and finite distributions), there is an optimal policy that has at most one fixed box that it may claim closed. Therefore, based on which box the fixed one is (if any) we can limit the search to one of $n+1$ possible optimal policies.\footnote{Note that although this construction seems similar to committing policies~\cite{BeyhaghiK19}, in contrast, here the policies can be order-adaptive (similar to the two-phase optimal policy), and the fixed box may be opened or claimed closed.} In other words, we consider all the $n+1$ possibilities, find a PTAS for each, and output the one with the highest expected utility.

Our proof involves a novel reduction to a framework by~\cite{FuLX18}. We first overview the framework, how it is used for stochastic probing problems, and the challenges in tailoring it to our problem. We conclude by a summary of how we overcame the challenges and performed the reduction.

\cite{FuLX18} establishes a general framework for online stochastic problems and devises a PTAS for this general formulation. The stochastic dynamic program formulation in~\cite{FuLX18} models a general online probing setting, where there is a set of elements, and the agent's goal is to adaptively probe the elements to maximize the expected reward.\footnote{For a formal discussion of~\cite{FuLX18} framework, see \Cref{sec:ptas}.} Whenever the agent probes an element, they get an immediate reward, and their internal state is updated. At the end of the process, the agent also gets a final reward dependent on their internal state. This framework has been successfully applied to many stochastic probing problems, the most relevant to our problem being Probemax (choose $m$ elements to probe adaptively and get the maximum value among elements probed) and committed Pandora's box problem (similar to Pandora's problem with obligatory inspection, but elements are forfeited forever if not selected). These two problems share two critical aspects of Pandora's problem with nonobligatory inspection, respectively: 1) the agent gets the maximum value among all elements probed and 2) there is a cost of inspection. Although, this poses a reduction from our problem to~\cite{FuLX18} framework as a plausible approach, we face additional technical barriers not present in prior reductions for Probemax and committed Pandora's box. While resolving these technical barriers, we uncover additional structure for our problem that may be relevant beyond our specific PTAS reduction.

\paragraph{Challenge 1: negative terms reflecting costs.} We define the internal state to represent the best value (or an approximation of the value) that the agent has seen in the past. However, almost all previous problems that reduce to~\cite{FuLX18} that use internal states to represent element value do not have cost of inspection.\footnote{For different choices of defining the internal state, see \Cref{sec:ptas}. Although committed Pandora's cost problems, involves paying inspection costs, they do not use the internal state to represent the cost.} The framework requires the internal states to be supported on a set of constant size. This will necessitate a discretization of the values. The canonical way to discretize the values is to round them (up or down) to an approximate value. However, since the reward at each step is the difference between internal state and the cost incurred, $V_j - c$ (where $V_j$ denotes the internal state at step $j$, and $c$ is the cost), rounding values to a nearby approximate value may completely distort the difference, restraining us from a small multiplicative approximation loss.

\paragraph{Prior techniques for eliminating costs.} 
\cite{KleinbergWW16} introduce a reduction from Pandora's box with obligatory inspection to a maximization problem without costs. They also introduce a property of policies called \emph{non-exposure} and show that the optimal policy of the obligatory inspection variant satisfies it. Informally speaking, a policy is non-exposed if it selects any inspected box whose value is above the threshold of the box. In any non-exposed policy, whenever a box is selected the gain is equal to a \emph{virtual value} defined as a function of the revealed value and properties of the box.\footnote{For a formal discussion see \Cref{sec:model} and \Cref{sec:ptas}.} 
The insight from~\cite{KleinbergWW16} for removing cost from the expected utility function has been successfully utilized in~\cite{Segev021} to prove equivalence of Pandora's box with commitment and free order prophets. Also, in Pandora's box with nonobligatory inspection problem, previously~\cite{BeyhaghiK19} used ideas from~\cite{KleinbergWW16} to provide utility upper bound and additional structure for the problem.

\paragraph{Failure of previous techniques, and a new reduction.} Unfortunately, the optimal policy for Pandora's problem with nonobligatory inspection may not always be non-exposed (See \Cref{ex:nonexposed} in \Cref{app:intro}). However,
given our knowledge about the two-phase structure of the optimal policy, we draw parallels between our two-phase policy and the non-exposed policies, and introduce \emph{stage-non-exposed} policies. Basically, we argue although the optimal policy might not select a box when its value is above the threshold, the optimal policy will always enter phase two and gains its respective utility. It is easy to calculate the expected utility during and after the phase transition.

\paragraph{Challenge 2: discretizing values.} Recall that by \Cref{thm:structure}, our two-phase policy is determined by an order over the boxes and their thresholds. To define the internal states of~\cite{FuLX18} framework, after our  cost-elimination reduction, we need to discretize the values observed and the potential thresholds onto a $O(\poly(1/\eps))$ sized-support. We show that the optimal thresholds are fairly robust to minor changes and can be rounded down to a multiples of $\eps \cdot \OPT$ between $0$ and $\OPT/\eps$, where $\OPT$ is the expected optimal utility. However, discretizing the values proved more challenging. The standard way to discretize an element value $v_i$ is to truncate the value space at $\E[\max_i v_i]/\eps$ (the truncation at $\E[\max_i v_i]/\eps$ is esssential to ensure that the probability of the value of \emph{any} element being above the truncated upper limit is at most $\eps$), and then discretize the values into increments of $\eps \cdot \E[\max_i v_i]/\eps$. However, since there is a potentially super constant gap between optimal utility $\OPT$ and the expected maximum value $\E[\max_i v_i]$, the standard discretization methods do not work. I.e., discretizing the values into multiples of $\eps \cdot \E[\max_{i} v_i]$ is too coarse to generate meaningful approximation guarantees, and discretizing $v_i$ into multiples of $\eps \cdot \OPT$ will yield good approximation for the agent utility, but the resulting support will have a super constant size. We resolve this issue by taking advantage of contribution of $v_i$ in the utility formula and internal states of~\cite{FuLX18} framework. We conclude that although we cannot truncate the distibution to a constant multiple of $\OPT$, \emph{for any fixed order}, selecting only a constant support on this large range, and discretizing onto it has a limited loss.

\paragraph{Challenge 3: Dependence of the discretized support on inspection order.}
At this point, given a fixed order of boxes, we resolved how to discretize the values onto a subset of constant support (although within a large range), to preserve the agent's utility reasonably. The next challenge is that we do not know the optimal order, to be able to select the descritization support! To resolve this issue, we show there is a bounded number of discretization methods. First, we show we can bound the multiplicative gap between the optimal expected utility $\OPT$ and the expected maximum $\E[\max_{i} v_i]$ by $n$.\footnote{The reason is that for each $i$, $OPT \geq \E[v_i]$, because the optimal policy can claim any box closed, and $\E[\max_{i} v_i] \leq \sum \E[v_i]$.} Then, as we have mentioned before, the support can always be truncated at $\E[\max_{i} v_i]/\eps$. Thus, the number of distinct supports of constant size is bounded by $n^{O(\poly{(1/\eps)})}$; i.e., there are this many discretization methods. Therefore, as input to \cite{FuLX18} framework, we try all of these possibilities of discretization, run all the PTAS outputs (one for each discretization method), and use the discretization that resulted in the highest agent utility from the PTAS policy.

\subsection{Related Work}

\paragraph{Prior work.} Pandora's problem (with obligatory inspection) was first proposed and analyzed in~\cite{Weitzman79}, which shows that an elegant nonadaptive policy (which opens boxes in a pre-defined order with pre-defined thresholds, and selects the first box with value above its threshold) is optimal. \cite{KleinbergWW16} provide a new interpretation of the problem and study various applications. Since the introduction of Pandora's problem, multiple papers in different communities~\cite{GuhaMS08, ChangL09, AttiasKLS17, Doval18} independently introduced and studied a stochastic probing problem that is in essence equivalent to Pandora's problem, but with nonobligatory inspection. This variant is then further studied in~\cite{BeyhaghiK19, FuLL22}. We will overview the prior works that are most related to our work. 

\cite{Doval18} explicitly formulates the nonobligatory inspection problem as a generalization to the original Pandora's problem and shows that the optimal policy may have a complicated structure. In particular, unlike the original Pandora's problem, there exists distributions for which no nonadaptive policy is optimal. This inspired the theory community to work on approximation algorithms and hardness results, as well as developing other variants of Pandora's problem. In addition, she provides sufficient conditions on the parameters of the problem under which she characterizes the optimal policy.

\cite{GuhaMS08} focus on discrete and finite distributions, and provide a structural result showing that in the optimal policy, at most one box will ever be claimed closed. They also provide a $0.8$ approximately optimal solution. Due to the discrete nature of the environment, they focus on optimal decision trees, where each node in the tree represents the remaining unispected boxes and the maximum observed value (outside option).
For their structural result, they start with an arbitrary optimal policy, and replace subtrees with higher outside options by subtrees with lower outside options while maintaining optimality. In our structural result, we use a similar idea. In particular,
after we characterize the optimal utility as a function of outside option, our optimal policy mimics the action of an optimal policy with outside option $0$ in the constant part of the utility function.
However, in contrast to \cite{GuhaMS08}, our techniques work for general distributions, and we give an  explicit characterization of the optimal policy.

\cite{KleinbergWW16} provided an alternative proof for Pandora's problem by reducing it to a maximization problem without cost. This also helps them compute the expected utility from Weitzman's policy, which we make extensive use of. A more detailed discussion can be found in~\Cref{sec:model} and~\Cref{sec:ptas}.

\paragraph{Concurrent Work.}  
Concurrent and independent of our present work, Fu, Li, and Liu also obtain a PTAS for Pandora's problem with nonobligatory inspection.\footnote{We learned this through personal correspondence with the authors.} To the best of our knowledge, their concurrent work contains a structural result, and their proof for the PTAS contains some similar ideas (e.g. their work also uses the~\cite{FuLX18} framework,  and they use similar techniques with regard to discretizing the random variables). In addition, Fu, Li, and Liu prove that finding the optimal policy for the Pandora's problem with nonobligatory inspection is NP-hard. An initial manuscript of their paper~\cite{FuLL22} includes the hardness result as well as an improved approximation ratio for committing policies over~\cite{BeyhaghiK19}.

\paragraph{Additional Related Work.} Finally, there is a growing body of work that extends Pandora's box problem to various other settings, such as Pandora's box with additional order constraints~\cite{BoodaghiansFLL20}, with correlated values distribution~\cite{GTTZ20}, where the agent needs to commit to taking the box or forfeiting it forever at each step~\cite{FuLX18, Segev021}, where each box could be partially opened at a reduced cost~\cite{AouadJS20}, where each box could be inspected using different methods each at a different cost (a generalization of the nonobligatory inspection model)~\cite{Beyhaghi19}, where the cost of inspection model is generalized to various combinatorial optimization problems~\cite{Singla18}, etc. This recent trend illustrates a general community interest in exploring online decision problems that models cost of inspection. 

\subsection{Organization}
The rest of the paper is organized as follows. In \Cref{sec:model}, we introduce the model and provide preliminaries. In~\Cref{sec:opt}, we characterize the structure of the optimal policy and prove~\Cref{thm:structure}. In~\Cref{sec:ptas}, we provide a PTAS for Pandora's problem with non-obligatory inspection. In~\Cref{app:intro},~\Cref{app:opt} and~\Cref{app:ptas}, we provide missing proofs from~\Cref{sec:intro},~\Cref{sec:opt} and~\Cref{sec:ptas}, respectively.

\section{Model and Preliminaries}\label{sec:model}

An agent has a set of $n$ boxes. This set is denoted by $\M$. Box $i$, $1 \le i \le n$, contains a prize, $v_i$, distributed according to distribution $F_i(v_i)$ with expected value $\E v_i$. The support of the distribution of box $i$ is $\Theta_i$, and $\Theta = \cup \Theta_i$ is the union of all supports. Prizes inside boxes are independently distributed.
Box $i$ has inspection cost $c_i$. While $F_i$ and $c_i$ are known; $v_i$ is not.

The agent sequentially inspects boxes, and search is with recall. Given a set of uninspected boxes, $\U$, and a vector of realized sampled prizes, $v$, the agent decides whether to stop or to continue search; if she decides to continue search she decides which box in $\U$
to inspect next. If she decides to inspect box $i$, she pays cost $c_i$ to instantaneously learn her value $v_i$.
If she decides to stop search, she can choose to select whichever box she pleases, regardless of whether it is inspected or not.
We use $\inspecti$ as an indicator for box $i$ being inspected and $\allocsubi$ as an indicator for the agent obtaining box $i$. Since one box can be obtained, $\sum_i \allocsubi \le 1$. The agent is an expected utility maximizer, where utility, $u$, is defined as the value of the box selected minus the sum of inspection costs paid. Given $v$, the vector of realized sampled prizes, and the two vectors of indicator variables, $\allocsub{ }$ and $\inspect{ }$, respectively indicating which boxes were selected and inspected, we have:

\[u(v,\allocsub{ },\inspect{ }) = \sum_i (\allocsubi v_i - \inspecti c_i) .\]

An important variant of the problem, in which inspection is required was introduced and optimally solved by Weitzman~\cite{Weitzman79}. He showed that when $\allocsubi \leq  \inspecti$, an index-based policy is the optimal solution. In this policy, the agent inspects boxes in decreasing order of their indices, $\strike_i$, where
$\strike_i$ is the unique solution to
\[ {\E}_{v_i\sim F_i}\left[ (v_i - \strike_i)^+ \right] = c_i \]
and is also known as the reservation value of box $i$.
The search stops either when one of the realized values is above the
reservation value of every remaining uninspected box,
or when the agent has inspected all of the boxes. Kleinberg et al.~\cite{KleinbergWW16} develop a new interpretation of Weitzman's characterization. They introduce a family of random variables $\covered_i := \min \{ v_i, \strike_i \}$
defined for each box $i$. These random variables are used to reduce Pandora's problem with obligatory inspection to a problem without costs, and provide an upper bound on its optimal expected utility. They also introduced an important property of polices for the original Pandora's box problem called non-exposed, which they show that the Weitzman's policy satisfies and hence prove the upper bound is tight. We provide the definition and related statements below.
\begin{definition}\label{def:non-exposed}
    \cite{KleinbergWW16} A policy is \textit{non-exposed} if it is guaranteed to select any inspected box $i$ which have value $v_i > \sigma_i$. Namely, $(\inspecti - \allocsubi ) \cdot (v_i - \sigma_i)^+$ is always exactly equal to $0$.  
\end{definition}
\begin{lemma} \label{lem:perboxUtility}
    \cite{KleinbergWW16} For any policy that satisfies $\allocsubi \leq \inspecti$ pointwise, $\E[\allocsubi v_i - \inspecti c_i] \leq \E[\allocsubi \kappa_i]$, furthermore, this holds with equality for every box $i$ if and only if the policy is non-exposed. 
\end{lemma}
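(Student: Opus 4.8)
The plan is to prove the inequality separately for each box $i$ via one short algebraic identity, combined with a single probabilistic observation about the timing of the inspection decision. The observation is that whether box $i$ ever gets opened is decided \emph{before} its value is revealed: everything the policy knows at the moment it commits to opening (or not opening) box $i$ is a function of the values of the other boxes and of the policy's internal randomness, all of which are independent of $v_i$. Consequently $\inspecti$ is measurable with respect to a $\sigma$-algebra that is independent of $v_i$, so $\inspecti$ and $v_i$ are independent. (For a randomized policy one additionally uses that the external randomness is independent of all the $v_j$, which is part of the model; note that no analogous independence holds — or is needed — for $\allocsubi$, which may certainly depend on $v_i$.)

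Granting this, I would rewrite the cost term using the defining equation of the reservation value, $c_i = \E_{v_i\sim F_i}\big[(v_i-\sigma_i)^+\big]$, together with independence:
\[
\E[\inspecti c_i] = \Pr[\inspecti = 1]\cdot \E\big[(v_i-\sigma_i)^+\big] = \E\big[\inspecti\,(v_i-\sigma_i)^+\big].
\]
Next, since $\kappa_i = \min\{v_i,\sigma_i\} = v_i - (v_i-\sigma_i)^+$, we have $\allocsubi v_i = \allocsubi\kappa_i + \allocsubi(v_i-\sigma_i)^+$. Subtracting the previous display yields the identity
\[
\E[\allocsubi v_i - \inspecti c_i] = \E[\allocsubi\kappa_i] + \E\big[(\allocsubi-\inspecti)(v_i-\sigma_i)^+\big].
\]
Because $\allocsubi\le\inspecti$ pointwise and $(v_i-\sigma_i)^+\ge 0$, the last integrand is nonpositive pointwise, which gives the claimed inequality $\E[\allocsubi v_i - \inspecti c_i]\le\E[\allocsubi\kappa_i]$. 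Moreover equality holds for box $i$ exactly when $(\inspecti-\allocsubi)(v_i-\sigma_i)^+ = 0$ almost surely — i.e. whenever box $i$ is opened but not selected, its value is at most $\sigma_i$ — which is precisely the non-exposed condition of \Cref{def:non-exposed}; imposing it for all $i$ gives the stated characterization.

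The only genuinely delicate step is the independence claim: it must be argued from the definition of an adaptive policy that the event ``box $i$ is inspected'' is a function only of the realized values of the boxes inspected before box $i$ (none of which is box $i$ itself at the moment of the decision) and of external randomness, hence independent of $v_i$. Everything after that is bookkeeping. A minor point to keep track of is that the equality characterization is naturally an almost-sure statement, whereas \Cref{def:non-exposed} phrases non-exposure as an identity holding everywhere; this distinction is immaterial for the expectations in the lemma, so I would simply state the equivalence up to null events.
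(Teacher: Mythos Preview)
The paper does not provide its own proof of this lemma; it simply cites it as a result from \cite{KleinbergWW16}. Your argument is correct and is essentially the standard one from that reference: the independence of $\inspecti$ and $v_i$ (since the decision to open box $i$ is taken before $v_i$ is revealed and depends only on values of other boxes and external randomness), the defining identity $c_i = \E[(v_i-\sigma_i)^+]$, and the decomposition $v_i = \kappa_i + (v_i-\sigma_i)^+$ combine exactly as you wrote to give the pointwise-nonpositive correction term $(\allocsubi-\inspecti)(v_i-\sigma_i)^+$, with equality precisely characterizing non-exposure.
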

\begin{proposition} \label{pr:w_utility} \cite{KleinbergWW16}
  Weitzman's policy on boxes $1 \le i \le n$ with distributions $F_i$ and
  inspection costs $c_i$, achieves expected utility $\E[\max_i \covered_i]$;
  the expected utility of any other policy subject to obligatory inspection cannot exceed this bound.
\end{proposition}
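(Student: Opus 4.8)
The plan is to derive both halves of the proposition from \Cref{lem:perboxUtility}, which already trades the cost-bearing per-box payoff $\allocsubi v_i-\inspecti c_i$ for the tractable surrogate $\allocsubi\kappa_i$, combined with the precise structure of Weitzman's inspection order and stopping rule. For the upper bound I would fix any policy respecting obligatory inspection, i.e.\ $\allocsubi\le\inspecti$ pointwise, sum the inequality of \Cref{lem:perboxUtility} over the boxes, and use that a realized allocation selects at most one box. For achievability I would prove two facts about Weitzman's policy: (a) it is \emph{non-exposed} in the sense of \Cref{def:non-exposed}, so that every per-box inequality of \Cref{lem:perboxUtility} holds with equality; and (b) it always selects a box attaining $\max_i\kappa_i$, so that the sum of the surrogates equals that maximum pointwise.

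For the upper bound, summing \Cref{lem:perboxUtility} gives
\[
\E[u]=\sum_i\E[\allocsubi v_i-\inspecti c_i]\le\sum_i\E[\allocsubi\kappa_i]=\E\Big[\textstyle\sum_i\allocsubi\kappa_i\Big].
\]
Since $\sum_i\allocsubi\le 1$ and, under the usual convention that the searcher may end by walking away with value $0$ (equivalently a dummy box of value and cost $0$ is present), $\max_i\kappa_i\ge 0$ almost surely, the integrand satisfies $\sum_i\allocsubi\kappa_i\le\max_i\kappa_i$ pointwise, so $\E[u]\le\E[\max_i\kappa_i]$. That completes the upper-bound half.

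For achievability, index the boxes so that $\sigma_1\ge\sigma_2\ge\cdots\ge\sigma_n$, the order in which Weitzman inspects, and recall that Weitzman stops (selecting the best value observed) exactly when that best value is at least the largest reservation value among the not-yet-inspected boxes. \emph{Non-exposure:} if box $i$ is inspected then the policy did not stop just before reaching it, so $\max_{j<i}v_j<\sigma_i$; hence if moreover $v_i>\sigma_i$, box $i$ becomes the running best, exceeds $\sigma_{i+1}\ge\cdots\ge\sigma_n$, and Weitzman stops and selects box $i$ — which is precisely $(\inspecti-\allocsubi)(v_i-\sigma_i)^+=0$, while if $v_i\le\sigma_i$ this identity is trivial. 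Thus by \Cref{lem:perboxUtility}, $\E[u]=\E[\sum_i\allocsubi\kappa_i]$. \emph{Selecting a $\kappa$-maximizer:} let $t$ be the number of boxes inspected and $s\le t$ the selected one, so $v_s=\max_{j\le t}v_j$ and $v_s\ge\sigma_\ell$ for every uninspected $\ell>t$. I would verify $\kappa_s\ge\kappa_i$ for all $i$ in three cases. If $i>t$: $\kappa_i\le\sigma_i\le\sigma_{t+1}\le v_s$ and $\sigma_s\ge\sigma_{t+1}$, so $\kappa_s=\min\{v_s,\sigma_s\}\ge\sigma_{t+1}\ge\kappa_i$. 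If $s<i\le t$ (box $s$ inspected before box $i$): non-stopping just before box $i$ forces $v_s<\sigma_i\le\sigma_s$, so $\kappa_s=v_s\ge v_i\ge\kappa_i$. If $i<s$ (box $i$ inspected before box $s$): non-stopping just before box $s$ forces $v_i<\sigma_s$, so $\kappa_i\le v_i<\sigma_s$ and also $\kappa_i\le v_i\le v_s$, hence $\kappa_i\le\min\{v_s,\sigma_s\}=\kappa_s$. Therefore $\sum_i\allocsubi\kappa_i=\kappa_s=\max_i\kappa_i$ pointwise, and taking expectations, $\E[u]=\E[\max_i\kappa_i]$, matching the upper bound.

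I expect the main obstacle to be the last step — showing that Weitzman's policy selects a box maximizing $\kappa_i$ — since this is where one must exploit the interplay between the inspection order (decreasing $\sigma_i$) and the stopping rule, and where the tie-breaking convention (what happens when the running best exactly equals a remaining reservation value) must be fixed so that the strict inequalities above are valid; for continuous distributions ties occur with probability zero, and for discrete ones one adopts a consistent rule. By contrast, non-exposure and the upper bound are short once \Cref{lem:perboxUtility} is available.
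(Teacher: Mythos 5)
The paper does not prove \Cref{pr:w_utility}; it cites it from \cite{KleinbergWW16} and builds on it, so there is no in-paper proof to compare against. Your reconstruction is correct and is essentially the argument of \cite{KleinbergWW16}: the upper bound follows by summing \Cref{lem:perboxUtility} and using $\sum_i\allocsubi\le 1$ (with the $0$-outside-option/dummy-box convention to ensure $\max_i\kappa_i\ge 0$ pointwise), non-exposure of Weitzman's policy gives equality in \Cref{lem:perboxUtility}, and your three-case analysis (uninspected $i>t$, inspected-after $s<i\le t$, inspected-before $i<s$) correctly establishes that the selected box attains $\max_i\kappa_i$ pointwise, with the tie-breaking caveat you flag being the right thing to pin down.
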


In order to represent the internal states of Pandora's box problem, we consider a generalization, in which we are given a set of uninspected boxes $\U$ and the setting is exactly the same as the original problem, except that we are also given an \emph{outside option} $\oo$ for free. We denote this problem, i.e., Pandora's box problem with nonobligatory inspection for unispecteded boxes $\U$ and outside option $\oo$, by $\Pro(\U, \oo)$. Using the same notation, our original problem is $\Pro(\M, 0)$. Similarly, we denote the \emph{state} of the problem with the set of uninspected boxes $\U$ and the maximum observed value $\oo$ as $(\U, \oo)$. Due to this formulation we use outside option and maximum observed interchangeably and denote them by $\oo$.

Without loss of optimality, we only consider policies whose actions only depend on the set of unispected boxes and the maximum observed value (outside option). Also, when studying optimal policies, we consider those that are \emph{pointwise} optimal, i.e., optimal for any state $(\U, \oo)$ they reach, even those with probability $0$. We denote the optimal expected utility of problem $\Pro(\U, \oo)$ by $\OPT(\U, \oo)$. Furthermore, without loss of optimality, we focus on \emph{deterministic} policies.

For policy $A$ and current state $(\U, \oo)$ we define the following functions:
\begin{itemize}
    \item $I^{A}: (\U, \oo)  \rightarrow (\M \cup \{\perp\})$ outputs the index of the next box considered.
    \item $G^{A}: (\U, \oo)  \rightarrow \{\text{Open}, \text{Close}, \text{Stop}\}$ outputs the operation on the next box, where the operations include open the box, claim the box closed, or terminate the policy without probing.
    \item \emph{Action} $H^A(\U, \oo) := (I^{A}(\U, \oo), G^{A}(\U, \oo))$ indicates the next box and operation. An action is called \emph{terminal} if the operation $G^A(\U, \oo)$ is equal to $\text{Close}$ or $\text{Stop}$.
\end{itemize}

\begin{definition}[state transition]
For any policy $A$, we will use $ST_{A}(\U, \oo)$ to denote all valid state transitions from state $(\U, \oo)$ when using policy $A$. Formally, 
\begin{itemize}
    \item 
    when $G^A(\U, \oo) = \text{Open}$, $ST_{A}(\U, \oo) = \{(\U \setminus \{i\}, \oo') \mid i = I^A(\U, \oo), \exists v \in \Theta_{i}, \oo' = \max(a, v)\}$;
    \item 
    when $G^A(\U, \oo) = \text{Close}$, $ST_{A}(\U, \oo) = \{(\U \setminus \{i\}, \E\sq*{v_{i}}) \mid i = I^A(\U, \oo)\}$;
    \item 
    when $G^A(\U, \oo) = \text{Stop}$, $ST_{A}(\U, \oo) = \emptyset$. 
\end{itemize}
\end{definition}

\begin{definition}[plausible sequence of states]
We will call a sequence of states $(\U_0, \oo_0), (\U_1, \oo_1), \cdots, (\U_k, \oo_k)$ \textbf{plausible} for policy $A$ if $\forall j \in [k]$, $\paren*{(\U_j, \oo_j) \rightarrow (\U_{j+1}, \oo_{j+1})} \in ST_{A}(\U_j, \oo_j)$. 
\end{definition}

\begin{definition}[Reachable State]
For any policy $A$, we will use $RS_{A}(\U, \oo)$ to denote all states that are reachable by policy $A$ from state $(\U, \oo)$. Formally, a state $(\U', \oo') \in RS_{A}(\U, \oo)$ if and only if there exists a plausible sequences of states for $A$ that start at $(\U, \oo)$ and ends at $(\U', \oo')$. For the sake of simplicity, we will use $RS(A)$ to denote all states that are reachable by policy $A$ from state $(\M, 0)$. 
For instance, if policy $A$ opens box $i$ first, then for any $i' \in \M, i' \neq i$, $(\M \setminus \{i'\}, 0)$ is not reachable by policy $A$ from $(\M, 0)$ since $A$ must inspect $i$ as its first action.  
\end{definition} 

\begin{definition}[use a backup box]
We will say that a policy $A$ \emph{uses a backup box} for problem $\Pro(\U, \oo)$ if either $A$ claims a box closed up front (namely $G^{A}(\U, \oo) =$ Close), or there exists a state $(\U', \oo') \in ST_{A}(\U, \oo)$ such that $A$ uses a backup box for problem $\Pro(\U', \oo')$. 
\end{definition}

\section{Structure of the Optimal Policy} \label{sec:opt} 

The main contribution of this section is proving the two-phase structure of the optimal policy stated in \Cref{thm:structure}. First, we study the optimal expected utility as a function of the outside options. As an immediate observation, the optimal utility is an increasing function of the outside option; however, as we show, there is more structure to it. Specifically, in state $(\U, \oo)$, for any set of uninspected boxes $\U$, there exists a threshold $\tau(\U)$ such that the optimal utility for $\Pro(\U, \oo)$ is the same for any outside option $\oo$ that does not exceed the threshold, and is strictly higher for those exceeding the threshold. Furthermore, there is always a policy that uses a backup box when the outside option is below threshold, while no optimal policy uses a backup box when the outside option exceeds the threshold. Then, we show in any optimal policy of $\Pro(\M, 0)$, there is at most one transition point when before this point the outside option (current maximum observed value) is always below the threshold of the current uninspected boxes, and after the point, it is always above. Finally, using this structure, we show there exists an optimal policy that while the outside option is below the threshold, takes the next action as if the outside option were $0$, and after the transition point, follows Weitzman's policy, proving the structure of \Cref{thm:structure}.

Full proofs of the section are in \Cref{app:opt}.

\begin{restatable}{observation}{obsMonotone}\label{obs:monotone}
$\OPT(\U, \oo)$ is increasing in $\oo$. 
\end{restatable}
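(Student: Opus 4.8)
The plan is to prove this monotonicity by a direct policy‑mimicking (coupling) argument: I will show that any policy for the instance with the smaller outside option can be simulated, essentially verbatim, in the instance with the larger outside option, with no loss of expected utility. Concretely, fix an arbitrary set $\U$ of uninspected boxes and values $\oo_1 \le \oo_2$, and let $A$ be a pointwise‑optimal deterministic policy for $\Pro(\U, \oo_1)$ achieving $\OPT(\U,\oo_1)$. I will build a policy $A'$ for $\Pro(\U, \oo_2)$ with $\E[\text{utility of }A'] \ge \E[\text{utility of }A]$, which immediately gives $\OPT(\U,\oo_2) \ge \OPT(\U,\oo_1)$.

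\textbf{Key steps.} The policy $A'$ \emph{shadows} $A$: it maintains an internal copy of the state that $A$ would be in, and at each step takes the same box and the same operation ($I^A$ and $G^A$) that $A$ prescribes for that shadowed state. First I would argue, by induction along any plausible sequence of states, that if after a common prefix of openings (on the same realized values) $A$ is in state $(\U', \oo')$ and $A'$ is in state $(\U', \oo'')$, then $\U'$ is the same for both and $\oo'' \ge \oo'$. The base case is $\oo_2 \ge \oo_1$, and the inductive step is just $\max(\oo'', v) \ge \max(\oo', v)$ when the commonly opened box reveals value $v$; note also that the sequence of opened boxes, and hence the total inspection cost paid, is identical in the two runs. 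Next I would handle the terminal action of $A$: if $A$ claims box $i$ closed, $A'$ does likewise, contributing $\E[v_i]$ in both; and if $A$ stops and selects some box/option of value $w$, then $w \le \oo'$ (since $\oo'$ is the maximum observed value and dominates the outside option and every inspected value), so $A'$ — whose outside option $\oo''$ satisfies $\oo'' \ge \oo' \ge w$ — can stop and obtain at least $w$. Thus in every realization $A'$ pays the same cost as $A$ and collects at least as large a prize, so $\E[\text{utility of }A'] \ge \E[\text{utility of }A] = \OPT(\U,\oo_1)$, completing the proof.

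\textbf{Main obstacle.} This statement is essentially routine, so there is no deep obstacle; the only thing requiring care is the bookkeeping in the inductive step — making precise that $A'$ can legitimately track and reproduce $A$'s decisions (opening an arbitrary uninspected box is always a valid action, so the simulation never gets stuck) and checking the terminal cases against the model's definition of the $\text{Close}$ and $\text{Stop}$ operations. I would also remark explicitly that the observation asserts only weak monotonicity: the later structural results show $\OPT(\U,\oo)$ is in fact constant for $\oo \le \tau(\U)$, so no strict‑increase claim is being made here, and the coupling argument above is exactly what is needed.
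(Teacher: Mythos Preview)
Your proposal is correct and follows essentially the same approach as the paper: the paper's proof also constructs a policy for $\Pro(\U,\oo')$ that ``pretends the outside option is $\oo$'' and mimics an optimal policy for $\Pro(\U,\oo)$, concluding that for any fixed sequence of box values it pays the same costs and collects at least as large a prize. Your write-up is somewhat more detailed in spelling out the inductive invariant and the terminal cases, but the argument is the same.
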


\begin{definition}\label{def:thresholds}[$\tau(\U)$, threshold for uninspected boxes]
With abuse of notation, let $\tau(\U) \geq 0$ be the value that satisfies the following properties if there exists an optimal policy of $\Pro(\U, 0)$ that uses a backup box with positive probability.
\begin{enumerate}
    \item There exists an optimal policy of $\Pro(\U, \oo)$ that uses a backup box with positive probability if $0 \leq \oo \leq \tau(\U)$, and there does not exist any optimal policy of $\Pro(\U,\oo)$ that uses a backup box if $\oo > \tau(\U)$.
    \item $\tau(\U) = \argmax_{\oo \in \R_{\geq 0}} \set*{\OPT(\U, \oo) = \OPT(\U,0)}$.
\end{enumerate}
If no optimal policy of $\Pro(\U, 0)$ uses a backup box with positive probability, let $\tau(\U) = \NEG$. For ease of notation we assume $0 > \NEG$. 
\end{definition}

\Cref{lm:thresholds_exist} asserts that for any set of boxes such a threshold exists.

\begin{restatable}{lemma}{lmThresholdsExist} \label{lm:thresholds_exist}
For each set of boxes $\U$, $\tau(\U)$, as defined in \Cref{def:thresholds}, exists. 
\end{restatable}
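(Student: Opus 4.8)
The plan is to fix a set of uninspected boxes $\U$ and show the two required properties in Definition \ref{def:thresholds} single out a well-defined nonnegative value $\tau(\U)$ (or, in the degenerate case, $\NEG$). First I would dispose of the easy case: if no optimal policy of $\Pro(\U,0)$ uses a backup box with positive probability, then by definition $\tau(\U) = \NEG$ and there is nothing more to prove. So assume some optimal policy of $\Pro(\U,0)$ does use a backup box with positive probability. The key structural input is the behavior of $\OPT(\U,\oo)$ as a function of $\oo$: by Observation \ref{obs:monotone} it is nondecreasing, and since the agent may always decline the outside option, $\OPT(\U,\oo) \geq \OPT(\U,0)$ with the outside option contributing at most $\oo$ to any realization, so $\OPT(\U,\oo) \leq \OPT(\U,0) + \oo$ — in particular $\OPT(\U,\cdot)$ is continuous (it is a pointwise maximum over policies of affine-in-$\oo$, hence convex, and a finite convex function on $\R_{\ge 0}$ is continuous) and eventually strictly increasing (for $\oo$ larger than every box's reservation value and every conditional expectation, taking the outside option outright is optimal and $\OPT(\U,\oo)=\oo$).

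Next I would define $\tau^{*} := \sup\{\oo \ge 0 : \OPT(\U,\oo) = \OPT(\U,0)\}$; this set is nonempty (it contains $0$) and bounded above (by the eventual strict growth), so $\tau^{*}$ is a finite nonnegative real, and by continuity the supremum is attained, so $\OPT(\U,\tau^{*}) = \OPT(\U,0)$, which is exactly property 2 with $\tau(\U) = \tau^{*}$. It remains to verify property 1 — that $\tau^{*}$ is also the threshold separating ``some optimal policy uses a backup box'' from ``no optimal policy uses a backup box.'' For the direction $\oo \le \tau^{*}$: since $\OPT(\U,\oo) = \OPT(\U,0)$ on this range, an optimal policy for $\Pro(\U,0)$ that uses a backup box with positive probability can be run verbatim in $\Pro(\U,\oo)$ (the extra outside option can only help, and since the value does not change it must still be optimal), so an optimal backup-using policy exists. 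For the direction $\oo > \tau^{*}$: here $\OPT(\U,\oo) > \OPT(\U,0)$ strictly, and I claim any policy that uses a backup box achieves at most $\OPT(\U,0)$ on $\Pro(\U,\oo)$. The intuition is that a backup box is accepted uninspected only when the current maximum observed value (which is $\ge$ the hypothetical starting outside option only through boxes in $\U$, and the outside option $\oo$ is irrelevant to the internal comparison) is low; one should be able to argue by induction on $|\U|$ that replacing the subtree rooted where a backup box is claimed closed by instead taking the outside option $\oo$ (or continuing optimally) never decreases utility, and strictly increases it when $\oo > \tau^{*}$ — this is the ``replace high-outside-option subtrees'' style argument attributed to \cite{GuhaMS08} in the introduction. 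Hence no optimal policy of $\Pro(\U,\oo)$ uses a backup box when $\oo > \tau^{*}$, giving property 1.

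The main obstacle I anticipate is precisely the last step — rigorously showing that when $\oo > \tau^{*}$ \emph{every} optimal policy avoids backup boxes, not merely that some optimal policy does. The subtlety is that a policy can use a backup box deep inside a subtree reached with tiny probability, so one needs the pointwise-optimality convention and an induction on $|\U|$ that says: in state $(\U', \oo')$ with $\oo' > \tau(\U')$, claiming any box $i \in \U'$ closed yields value $\E[v_i] \le \OPT(\U' \setminus \{i\}, 0) \le \OPT(\U', 0) < \OPT(\U', \oo')$, so the action is strictly suboptimal; combined with the definition of $\tau$ on the smaller instances this should propagate. One must also handle the bookkeeping that $\tau(\U') = \NEG$ for some subsets (handled by the convention $0 > \NEG$, so ``$\oo' > \tau(\U')$'' is automatic there, matching that no optimal subpolicy uses a backup box) and confirm the monotonicity $\OPT(\U'\setminus\{i\},0) \le \OPT(\U',0)$, which holds since the $\Pro(\U',0)$ agent can simply ignore box $i$. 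Modulo this induction, the lemma follows.
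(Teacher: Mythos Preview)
Your overall strategy---define $\tau^*$ as the right endpoint of the flat region of $\OPT(\U,\cdot)$ and then verify the backup-box dichotomy---is dual to the paper's, which instead defines $\tau(\U)$ as the largest outside option for which some optimal policy still uses a backup box and then proves $\OPT(\U,\tau(\U))=\OPT(\U,0)$. Both starting points are legitimate, and your treatment of the easy direction ($\oo\le\tau^*$) matches the paper's second step. The continuity and 1-Lipschitz observations are also fine.

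The genuine gap is in the hard direction, and it is precisely the place you flagged. Your sketched induction says: at a state $(\U',\oo')$ with $\oo'>\tau(\U')$, claiming a box closed is strictly suboptimal, and ``this should propagate.'' But propagation fails: from $(\U,\oo)$ with $\oo>\tau^*(\U)$, opening box $i$ takes you to $(\U\setminus\{i\},\max(\oo,v_i))$, and you have no reason to believe $\max(\oo,v_i)>\tau(\U\setminus\{i\})$---the threshold of the subinstance can be larger than $\oo$. So the inductive hypothesis does not apply at the next state, and the argument stalls. (Incidentally, the chain $\E[v_i]\le \OPT(\U'\setminus\{i\},0)\le\OPT(\U',0)$ is also wrong: removing the very box you would claim closed can strictly decrease the optimum; what you actually need, and what is true, is just $\E[v_i]\le \OPT(\U',0)$.)

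The paper's induction runs in the contrapositive direction and avoids this trap. It proves, by induction on $|\U|$: \emph{if some optimal policy for $\Pro(\U,\oo)$ uses a backup box, then $\OPT(\U,\oo)=\OPT(\U,0)$.} When the first action opens $i$, the existence of a backup-box branch forces some realization $v_i'$ with $\max(\oo,v_i')\le\tau(\U\setminus\{i\})$, hence $\oo\le\tau(\U\setminus\{i\})$. That single inequality is the missing idea: it lets you split $\E_{v_i}[\OPT(\U\setminus\{i\},\max(\oo,v_i))]$ at $\tau(\U\setminus\{i\})$, replace the low part by $\OPT(\U\setminus\{i\},0)$ via the inductive hypothesis, and recognize the result as the value of a policy for $\Pro(\U,0)$, yielding $\OPT(\U,\oo)\le\OPT(\U,0)$. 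Once you have this contrapositive, your $\tau^*$ and the paper's $\tau(\U)$ coincide and both properties follow.
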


\begin{proof}
If there is no optimal policy that uses a backup box with positive probability for $\Pro(\U, 0)$, $\tau(\U) = \NEG$ and exists by definition. Therefore, for the remainder of the proof, we only focus on the case that there is an optimal policy for $\Pro(\U, 0)$ that uses a backup box.

The proof consists of two main steps. In the first step, we show that for any set of boxes $\U$, there exists a threshold $\tau(\U)$, such that for outside option $\oo > \tau(\U)$, no optimal policy for $\Pro(\U, \oo)$ uses a backup box with positive probability, and when $\oo \leq \tau(\U)$, $\OPT(\U, \oo) = \OPT(\U, 0)$. In the second step, we show that $\tau(\U)$ from the first step is equal to $\argmax_\oo \{\OPT(\U,\oo)=\OPT(\U,0)\}$, and there is an optimal policy using backup boxes with positive probability for outside option below the threshold.

The proof of the first step is by induction over the size of $\U$, the number of boxes in the problem. Let $\tau(\U)$ be the largest value such that an optimal policy with outside option $\tau(\U)$ uses a backup box. If there is a single box, this means that the optimal utility of $\Pro(\U,\tau(\U))$ is equal to the expected value of the box, which is equal to the no outside option scenario $\Pro(\U,0)$. Using \Cref{obs:monotone}, this concludes the base case of the induction. For $|\U|>1$, there are two possibilities. If an optimal policy of $(\U, \tau(\U))$ claims a closed box in the first step, the argument is similar to $|\U| =1$. Otherwise, if the optimal policy starts with opening box $i$ and observing value $v_i$, designing the optimal policy for the remainder of the boxes is equivalent to designing the optimal policy for the boxes other than $i$ with an outside option that is the maximum of $\tau(\U)$ and $v_i$ (Equality \ref{eq:1}). Since the optimal policy for $\Pro(\U, \tau(\U))$ uses a backup box, there exists some value $v'_i$ for which the subproblem (the problem for $\U \setminus \{i\}$) uses a backup box, implying $\tau(\U \setminus \{i\}) \geq \tau(\U)$. We split the utility into the two parts where the outside option is equal to $\tau(\U \setminus \{i\})$, i.e., $v_i \leq \tau(\U \setminus \{i\})$, and where it is equal to $v_i$, i.e., $v_i > \tau(\U \setminus \{i\})$ (Equality \ref{eq:2}). By induction hypothesis, the part where $v_i \leq \tau(\U \setminus \{i\})$, has optimal utility equal to the subproblem with outside option $0$ (Equality \ref{eq:3}).
The sum of the two parts equals to utility of the problem given the set of boxes $\U$, and no outside option, where the first action is opening box $i$, and the rest of the action follows an optimal policy for $\Pro(\U, v_i)$ (Equality \ref{eq:4}).
This constructs a policy for $\Pro(\U, 0)$ and has optimal utility at most $\OPT(\U, 0)$ (Inequality \ref{ineq:5}). By \Cref{obs:monotone}, the inequality is in fact an equality. This concludes the first step of the proof.
\begin{align}
    \OPT(\U, \tau(\U)) &= - c_i + \E_{v_i\sim F_i}\sq*{\OPT(\U \setminus \{i\}, \max\{\tau(\U), v_i\})} \label{eq:1}\\
    &= - c_i + \E\sq*{\OPT(\U \setminus \{i\}, \max\{\tau(\U), v_i\}) ~\Big\vert~ v_i \leq \tau(\U \setminus \{i\})}\cdot \Pr \sq*{v_i \leq \tau(\U \setminus \{i\})} \label{eq:2}\\
    &\quad \quad \:\:\,  + \E\sq*{\OPT(\U \setminus \{i\}, \max\{\tau(\U), v_i\}) ~\Big\vert~ v_i > \tau(\U \setminus \{i\})}\cdot \Pr \sq*{v_i > \tau(\U \setminus \{i\})} \nonumber \\
    &=  - c_i + \OPT(\U \setminus \{i\}, 0) \cdot \Pr \sq*{v_i \leq \tau(\U \setminus \{i\})} \label{eq:3}\\
    &\quad \quad \:\:\,  + \E\sq*{\OPT(\U \setminus \{i\}, v_i) ~\Big\vert~ v_i > \tau(\U \setminus \{i\})}\cdot \Pr \sq*{v_i > \tau(\U \setminus \{i\})} \nonumber \\
    &=  - c_i + \E_{v_i \sim F_i}[\OPT(\U \setminus \{i\}, v_i)]\label{eq:4}\\ 
    &\leq \OPT(\U, 0). \label{ineq:5}
\end{align}
Now, we move on to the second step of the proof. So far, we showed that there exists $\tau(\U)$ such that no optimal policy with strictly larger outside option uses a backup box; and all optimal policies with outside option below the threshold have the same utility. We first show for any nonnegative outside option $\oo'$ below the threshold, there exists an optimal policy that uses a backup box. This is straight-forward because $\OPT(\U, \oo') = \OPT(\U, 0)$ implies that following any optimal policy of $\OPT(\U, 0)$ is optimal for $\Pro(\U, \oo')$. Since we assumed there exists an optimal policy of $\Pro(\U, 0)$ that uses a backup box, there exists one that uses a backup box for any $\Pro(\U, \oo')$ where $0 \leq \oo \leq \tau(\U)$. Since all the problems with outside options $\oo' \geq 0$ satisfying $\OPT(\U, \oo') = \OPT(\U, 0)$ have an optimal policy that uses a backup box, for outside options $\oo'' \geq 0$ that no optimal policy uses a backup box, $\OPT(\U, \oo'') > \OPT(\U, 0)$. This concludes the proof. 

\end{proof}

The following lemma shows that in any optimal policy, the thresholds from \Cref{def:thresholds} for any set of uninspected boxes is such that once the maximum observed value exceeds the threshold at a stage, it always exceeds the thresholds at later stages.
\begin{restatable}{lemma}{lmOnceExceedsAlwaysExceeds}\label{lm:once_exceeds_always_exceeds}
Let $\OAL$ be an arbitrary optimal policy for problem $\Pro(\U, \oo)$ and let $(\U, \oo), (\U_1, \oo_1), \cdots, (\U_k, \oo_k)$ be any plausible sequence of states for $\OAL$. If $\oo > \tau(\U)$,  then $\oo_j > \tau(\U_j)$ for all $j \in [k]$.
\end{restatable}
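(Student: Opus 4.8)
The plan is to argue by induction on $j$, reducing to the single-step claim: if $(\U', \oo') \in ST_{\OAL}(\U'', \oo'')$ is a valid transition used by the optimal policy, and $\oo'' > \tau(\U'')$, then $\oo' > \tau(\U')$. Since $\oo_{j+1} \geq \oo_j$ always (the maximum observed value only increases, and a $\text{Close}$ transition sets it to $\E[v_i] \geq 0$), the main content is that the threshold cannot jump up past the outside option in one step. So fix $(\U'', \oo'')$ with $\oo'' > \tau(\U'')$; by \Cref{lm:thresholds_exist} and \Cref{def:thresholds}, no optimal policy for $\Pro(\U'', \oo'')$ uses a backup box, hence in particular $\OAL$ restricted to this subproblem does not claim any box closed, and $G^{\OAL}(\U'', \oo'')$ is either $\text{Open}$ or $\text{Stop}$. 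If it is $\text{Stop}$ there are no outgoing transitions and there is nothing to prove, so assume $\OAL$ opens some box $i = I^{\OAL}(\U'', \oo'')$, and let $(\U', \oo') = (\U'' \setminus \{i\}, \max(\oo'', v))$ for some $v \in \Theta_i$.

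Now I claim $\oo' > \tau(\U')$. Suppose not, i.e. $\oo' \leq \tau(\U')$; note $\oo' \geq \oo'' > \tau(\U'')$, so in particular $\tau(\U') \geq \oo' > \tau(\U'')$, and $\tau(\U') \neq \NEG$, so there is an optimal policy for $\Pro(\U', 0)$ that uses a backup box. Since $\OAL$ is pointwise optimal, its behavior on the subproblem $\Pro(\U', \oo')$ is optimal there; and because $\oo' \leq \tau(\U')$, \Cref{def:thresholds} part 1 tells us there exists an optimal policy for $\Pro(\U', \oo')$ using a backup box. The key point is that from $\OAL$'s perspective, after opening box $i$ from state $(\U'', \oo'')$ and seeing value $v$, the continuation subproblem is exactly $\Pro(\U', \oo')$, and by pointwise optimality of $\OAL$ we may assume $\OAL$'s continuation on $\Pro(\U', \oo')$ is itself an optimal policy that uses a backup box (we can swap in such a policy without changing $\OAL$'s overall expected utility, since both continuations are optimal for $\Pro(\U', \oo')$ and this swap only affects histories that pass through state $(\U', \oo')$). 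But then $\OAL$ uses a backup box for $\Pro(\U'', \oo'')$ — it opens box $i$ and then, on the branch where it reaches $(\U', \oo')$, it uses a backup box — contradicting that no optimal policy for $\Pro(\U'', \oo'')$ uses a backup box when $\oo'' > \tau(\U'')$. Hence $\oo' > \tau(\U')$, completing the single-step claim; iterating along the plausible sequence gives $\oo_j > \tau(\U_j)$ for all $j \in [k]$.

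I expect the main obstacle to be making the swap argument fully rigorous: one must check that replacing $\OAL$'s continuation on $\Pro(\U', \oo')$ by a backup-box-using optimal policy for that same subproblem yields a well-defined policy whose expected utility equals $\OPT(\U'', \oo'')$, so that the modified policy is genuinely an optimal policy for $\Pro(\U'', \oo'')$ that uses a backup box. This uses that the overall expected utility of a policy decomposes over the first action plus the expectation over $v \sim F_i$ of the continuation utility (an identity of the same flavor as Equalities \eqref{eq:1}–\eqref{eq:4}), together with the fact that only the $(\U', \oo')$-branch is altered and it is altered to another optimal continuation. A minor secondary point is handling the $\text{Close}$ transition in the single-step reduction and confirming $\E[v_i] \geq 0$ (equivalently, that values are nonnegative, as implicit in the model and in \Cref{def:thresholds} requiring $\tau(\U) \geq 0$), so that $\oo' \geq \oo''$ still holds on that branch; but since we already established that a policy optimal for $\Pro(\U'', \oo'')$ with $\oo'' > \tau(\U'')$ never claims a box closed, the $\text{Close}$ case does not actually arise along plausible sequences here.
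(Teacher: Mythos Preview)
Your proposal is correct and follows essentially the same approach as the paper: both argue by contradiction that if some later state had $\oo_j \leq \tau(\U_j)$, one could swap in an optimal backup-box-using policy at that state, producing an optimal policy for $\Pro(\U,\oo)$ that uses a backup box and thereby contradicting $\oo > \tau(\U)$. The only cosmetic difference is that you proceed by single-step induction while the paper argues directly at an arbitrary offending index $j$; the ``swap'' step you flag as the main obstacle is exactly what the paper isolates and proves as a standalone lemma (\Cref{coroModifyOptimal}), so your concern is well placed and fully addressed there.
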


\begin{proof}[Proof sketch.]
The proof is by contradiction. We show if $\oo_j < \tau(\U_j)$, then $\Pro(\U_j, \oo_j)$, and therefore, $\Pro(\U, \oo)$ have optimal policies that use backup boxes, which implies $\oo \leq \tau(\U)$. 
\end{proof}

The following lemma states that there exists an optimal policy that whenever the outside option (maximum observed value) exceeds the threshold of the unispected boxes, runs Weitzman's policy, and whenever the maximum observed value is less than the threshold of the unispected boxes takes the same action. 

\begin{restatable}{lemma}{lmLastStructuralLemma}\label{lm:last_structural_lemma}
There exists an optimal policy $\OAL$ for problem $\Pro(\M, 0)$ that satisfies the following: for any reachable state $(\U, \oo) \in RS(\OAL)$, 
\begin{itemize}
    \item 
    When $\oo \leq \tau(\U)$, $H^{\OAL}(\U, \oo) = H^{\OAL}(\U,0)$;
    \item 
    When $\oo > \tau(\U)$, $H^{\OAL}(\U, \oo) = H^{W}(\U, \oo)$. In fact, for any $(\U', \oo') \in RS_{W}(\U, \oo)$, $H^{\OAL}(\U', \oo') = H^{W}(\U', \oo')$, where $H^W(\U, \oo)$ represents the action Weitzman's policy would take given that $\U$ is the set of uninspected boxes and $\oo$ is the maximum value obtained so far by the algorithm.
\end{itemize}
\end{restatable}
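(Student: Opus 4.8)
The plan is to construct the required policy $\OAL$ explicitly and recursively, driven by the recipe the previous two lemmas suggest: below the threshold the policy ignores the outside option, and above it the policy runs Weitzman's policy. Concretely, for every $\U\subseteq\M$ fix a pointwise optimal policy $P^*_\U$ for $\Pro(\U,0)$ (choosing one that uses a backup box with positive probability whenever such a policy exists). Define $\OAL$ at a state $(\U,\oo)$ by: take the Weitzman action $H^{W}(\U,\oo)$ if $\oo>\tau(\U)$, and take the action $H^{P^*_\U}(\U,0)$ if $\oo\le\tau(\U)$. In either case the single box involved is removed and the policy moves to a state whose uninspected set is strictly smaller, where $\OAL$ is already defined; the base case is $\U=\emptyset$, where the only action is to stop. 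This is a single well-defined policy. Note that at $(\U,0)$ the branches are consistent with the intended conclusion: since $\tau(\U)\ge 0$ by \Cref{def:thresholds} whenever $\tau(\U)\neq\NEG$, at any reachable $(\U,\oo)$ with $\oo\le\tau(\U)$ we have $0\le\oo\le\tau(\U)$, so $(\U,0)$ also lies in the second branch and $H^{\OAL}(\U,0)=H^{P^*_\U}(\U,0)$.

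The first thing to prove is that $\OAL$ is \emph{pointwise} optimal, i.e.\ it attains $\OPT(\U,\oo)$ from every state; I would do this by induction on $|\U|$, the size of the uninspected set at the state in question. The base case is immediate. For the step, split on the branch. Suppose $\oo\le\tau(\U)$ and $H^{P^*_\U}(\U,0)$ opens some box $i$. By the induction hypothesis $\OAL$'s expected utility from $(\U,\oo)$ equals $-c_i+\E_{v_i}\!\left[\OPT\!\left(\U\setminus\{i\},\max(\oo,v_i)\right)\right]$, which by \Cref{obs:monotone} is at least $-c_i+\E_{v_i}\!\left[\OPT(\U\setminus\{i\},v_i)\right]$, and the latter is at least the expected utility of $P^*_\U$ from $(\U,0)$, namely $\OPT(\U,0)$; since $\oo\le\tau(\U)$, \Cref{lm:thresholds_exist} gives $\OPT(\U,0)=\OPT(\U,\oo)$, and as $\OPT(\U,\oo)$ upper bounds every policy, all these inequalities collapse to equalities. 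If instead $H^{P^*_\U}(\U,0)$ is terminal, its value from $(\U,\oo)$ is at least its value from $(\U,0)=\OPT(\U,0)=\OPT(\U,\oo)$, again forcing optimality.

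If $\oo>\tau(\U)$, two points are needed. First, running Weitzman's policy on $\U$ with outside option $\oo$ is optimal for $\Pro(\U,\oo)$: by \Cref{lm:thresholds_exist} no optimal policy for $\Pro(\U,\oo)$ uses a backup box, so it is without loss to restrict to policies that never claim a box closed; such policies satisfy $\allocsubi\le\inspecti$ and may additionally take the free value $\oo$, i.e.\ they are exactly the obligatory-inspection policies on $\U$ augmented by a pre-inspected zero-cost box of value $\oo$, among which Weitzman's policy is optimal by \Cref{pr:w_utility}. Second, I need $\OAL$ to follow Weitzman at \emph{every} state reachable from $(\U,\oo)$ under Weitzman, not merely to agree on the first action. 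Since Weitzman's policy is itself an optimal policy for $\Pro(\U,\oo)$ and $\oo>\tau(\U)$, \Cref{lm:once_exceeds_always_exceeds} shows every $(\U',\oo')\in RS_{W}(\U,\oo)$ satisfies $\oo'>\tau(\U')$; hence the construction places $\OAL$ in its Weitzman branch at each such state, so $H^{\OAL}(\U',\oo')=H^{W}(\U',\oo')$ for all $(\U',\oo')\in RS_{W}(\U,\oo)$. This simultaneously proves the ``in fact'' clause of the statement and shows $\OAL$ attains Weitzman's (optimal) utility from $(\U,\oo)$, completing the induction.

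With pointwise optimality in hand, the two bullets are read off the construction: at a reachable $(\U,\oo)$ with $\oo\le\tau(\U)$ we have $\tau(\U)\ge 0$ and hence $H^{\OAL}(\U,\oo)=H^{P^*_\U}(\U,0)=H^{\OAL}(\U,0)$, and the case $\oo>\tau(\U)$ is precisely the previous paragraph. I expect the main obstacle to be the $\oo>\tau(\U)$ case: one must argue that above the threshold the problem genuinely collapses to obligatory inspection with an outside option (so that Weitzman is truly optimal, not just a plausible heuristic) and, more delicately, that the recursively defined $\OAL$ never ``falls off'' Weitzman's trajectory at later steps — which is exactly the purpose of \Cref{lm:once_exceeds_always_exceeds}. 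Keeping the interplay between the below-threshold mimicry and the inductive optimality hypothesis well founded also takes some care, though it is routine once the policy is defined uniformly on all states as above.
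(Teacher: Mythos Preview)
Your argument is correct. The construction is well defined, the induction on $|\U|$ goes through, and your handling of the above-threshold case via \Cref{lm:once_exceeds_always_exceeds} applied to Weitzman's policy (once you have checked it is optimal there) is exactly what is needed to keep $\OAL$ on Weitzman's trajectory.

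Your route differs from the paper's. The paper starts from an arbitrary optimal policy for $\Pro(\M,0)$, fixes a single sequence of actions $(i_1,g_1),\ldots,(i_k,g_k)$ (the first actions of optimal policies for $\Pro(\U_j,0)$), and then repeatedly invokes a ``splicing'' lemma (\Cref{coroModifyOptimal}) to replace subtrees of the policy, step by step, so that the action at $(\U_j,\oo_j)$ for $\oo_j\le\tau(\U_j)$ matches that at $(\U_j,0)$; a second pass splices in Weitzman above the thresholds. You instead define $\OAL$ globally in one shot and prove pointwise optimality directly, bypassing \Cref{coroModifyOptimal} entirely. What you gain is a shorter, more transparent argument; what the paper's approach buys is that it only ever manipulates policies along the single spine $\U_0\supset\U_1\supset\cdots\supset\U_k$, whereas your construction requires fixing a policy $P^*_\U$ for every $\U\subseteq\M$, even though only those along one chain are ever used. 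Both approaches rest on the same two pillars, \Cref{lm:thresholds_exist} and \Cref{lm:once_exceeds_always_exceeds}.
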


\begin{proof}[Proof sketch.]
If no optimal policy uses a backup box for $\Pro(\M, 0)$ with positive probability, then Weitzman's policy is an optimal policy satisfying the statement. Note that for any reachable state $(\U, \oo)$ of Weitzman's policy, $\oo > \tau(\U)$, otherwise there is an optimal policy that claims a closed box, which is in contradiction with the initial assumption.  

Now, suppose there exists an optimal policy that uses a backup box for $\Pro(\M, 0)$. Let $(i_1, g_1), \cdots, (i_k, g_k)$ be the the first action of pointwise\footnote{As mentioned in \Cref{sec:model}, a policy is pointwise optimal if it is optimal for any reachable state, even those with probability $0$.} optimal deterministic policies for problems $\Pro(\U_0, 0)$, $\Pro(\U_1, 0)$, \ldots, $\Pro(\U_k, 0)$, respectively, where $\U_0 = \M$, $\U_j = \M \setminus \{i_1, \ldots, i_j\}$, and $k$ is the first time in the sequence, where the action taken, i.e., $g_k$, is terminal. Note that since for each problem in the sequence the outside option is $0$, claiming a closed box has at least as much utility as taking the outside option. Therefore, we assume $g_k = \text{Close}$.

The remaining step of the proof constructs $\OAL$ that follows the sequence of $(i_j, g_j)$ as long as the maximum observed value is below the threshold, and follows Weitzman's policy whenever it is above the threshold. Note that by \Cref{lm:thresholds_exist}, the optimal utility of $\Pro(\U_j, \oo_j)$ is equal to $\Pro(\U_j, 0)$ when $\oo_j \leq \tau(\U_j))$, and therefore following the optimal action for $\Pro(\U_j, 0)$ is also optimal for $\Pro(\U_j, \oo_j)$. Also, when $\oo_j > \tau(\U_j)$, no optimal policy uses a backup box with positive probability, and conditioned on not using a backup box, following Weitzman's policy is optimal. The formal discussion can be found in \Cref{app:opt}.
\end{proof}

\begin{proof}[Proof of \Cref{thm:structure}]
Let $\OAL$ be an optimal policy for the problem $\Pro(\M, 0)$ satisfying the conditions in \Cref{lm:last_structural_lemma}. If $\tau(\M) = \NEG$, by \Cref{lm:once_exceeds_always_exceeds} and \Cref{lm:last_structural_lemma}, $\OAL$ follows Weitzman's policy, implying the statement of the theorem. Now, suppose $\tau(\M) \geq 0$. By definition $\OAL$ uses a backup box. Let $(i_1, g_1), \cdots, (i_k, g_k)$ be the sequence of actions that $\OAL$ takes as long as the observed values are below the threshold (where by \Cref{lm:last_structural_lemma}, these observed values are assumed to be $0$). Without loss of optimality, we may assume that $g_k$ is the first time in this sequence that $\OAL$ claims a closed box and $g_1, \ldots, g_{k-1}$ correspond to opening boxes. This is trivial, since by assumption this sequence includes an action corresponding to claiming a closed box, all actions before claiming a closed box are opening boxes, and once a box is claimed closed the policy is at a terminal state. By  \Cref{lm:last_structural_lemma}, while the observed values are below the threshold, $\OAL$ takes $(i_1, g_1), \cdots (i_k, g_k)$. By \Cref{lm:once_exceeds_always_exceeds}, the maximum observed value at most at one point switches from being below the threshold to above the threshold, and once the maximum observed value is above the threshold, by  \Cref{lm:last_structural_lemma}, $\OAL$ follows Weitzman's policy.

We conclude by defining the parameters in the statement of the theorem. $k$ corresponds to the time where $\OAL$ claims a closed box when all observed values until that time were below their thresholds. $k$ is $0$ if no optimal policy uses a backup box. 
For $i \leq k$, $\pi(i)$ corresponds to the box visited at time $i$ by $\OAL$ if all the observed values were below their thresholds. Finally, $\tau(i) = \tau(\M \setminus \{\pi(1), \ldots, \pi(i-1)\})$, when $\tau(\M \setminus \{\pi(1), \ldots, \pi(i-1)\} \geq 0$, and is equal to a negative value, 
otherwise.
\end{proof}

Algorithmically, the optimal policy that satisfies the conditions in~\Cref{thm:structure} belongs to a class of policies that given an initial order and thresholds over the boxes, only switches its order of inspection (between the initial order provided and Weitzman's order) at most once. We term this class of polices two-phase policies, which is described in~\Cref{algoTwoStageThreshold}. (Notice that the thresholds $\tau_j$ could be negative.)
\begin{algorithm}\caption{Two-Phase Policy(InitialOrder=$i_1, \cdots, i_k, i^*$, Thresholds=$\tau_1, \cdots, \tau_{k}$)} \label{algoTwoStageThreshold}
\begin{algorithmic}[1]
\State{Let $\U_j = \M \setminus \{i_1, \cdots, i_{j}\}$.}
\For{$j = 1, \cdots, k$}
\State{Open box $i_j$, observe value $v_{i_j}$ from the box.}
\If{$v_{i_j} > \tau_j$}
    \State{Run Weitzman's policy on remaining boxes from state $(\U_j, v_{i_j})$.}
    \State{\algorithmicreturn}
\EndIf
\EndFor
\State{Claim box $i^*$ closed.}
\end{algorithmic}
\end{algorithm}
\section{PTAS}\label{sec:ptas}
In this section, we will present a PTAS for Pandora's problem with nonobligatory inspection (denoted as the problem $\PNOI := \PNOI(\M, 0)$). We will eventually reduce our problem to the general stochastic dynamic program formulation in~\cite{FuLX18}, but we need several intermediate steps to overcome difficulties caused by 1) our reward function having a negative cost term, and 2) the values of the boxes needing discretization. 
We will describe our reduction in the following order. In \Cref{subsec:FLX}, we will introduce the stochastic dynamic program formulation in~\cite{FuLX18} and its relevance to our problem. In \Cref{subsec:fix_backup}, we will reduce $\PNOI$ to its variant that fixes the unique box $i^*$ that may be claimed close, hereafter referred to as the \emph{backup box}. This variant, which we call $\PNOI_{i^*}$, enables us to focus on a fixed backup box for future reductions. In \Cref{subsec:index-threshold-seq}, we  introduce the notion of a pre-specified order threshold sequence that is relevant to all steps in our reduction. In \Cref{subsec:Pi*}, we focus on $\PNOI_{i^*}$ problem, and rephrase it using the new notion.
In \Cref{subsec:approxtwostage}, we will prove that the thresholds in the optimal two-phase policy are robust to additive perturbations. Hence, we can reduce the search space for the thresholds to $O(\frac{1}{\eps})$ without much loss in the utility. In \Cref{subsec:cost}, we will reduce the $\PNOI_{i^*}$ problem to a problem that always has \textit{nonnegative} reward at each step, which we call Tweaked $\PNOI_{i^*}$ (abbreviated as $\TPNOI_{i^*}$). This resolves our concern about the negative cost terms in our reward function. In \Cref{subsec:discretization}, we will discretize the $\TPNOI_{i^*}$ problem so that the value space of the system has constant support and call the resulting problem $\DTPNOI_{i^*}$. Finally, in \Cref{subsec:STDP}, we formulate the $\DTPNOI_{i^*}$ problem as the stochastic dynamic program ($\STDP_{i^*}$) specified in~\cite{FuLX18}, for which there exists a PTAS.

All the missing proofs of the section are in \Cref{app:ptas}.


\subsection{The Stochastic Dynamic Program Formulation in~\cite{FuLX18}} \label{subsec:FLX} 
Here, we formally introduce the stochastic dynamic program, which is specified by a tuple $(\mathcal{V}, \mathcal{A}, f, g, h, n)$, and admits a PTAS with parameter $\eps$. We will also discuss several constraints on the parameters that are crucial to the existence of the PTAS (those text will be in italic).

\begin{itemize}
    \item 
    $\mathcal{V}$ describes the set of all possible internal values, \textit{which needs to be of a size that only depends on $\eps$}. 
    \item 
    $\mathcal{A} = \bigcup \mathcal{A}_i \cup \{\perp\}$ describes the action set, where $\mathcal{A}_i$ describe different ways to probe element $i$, and $\perp$ represents not probing anything. For each element $i$, \textit{$\mathcal{A}_i$ must be of a size that only depends number of elements and $\eps$ and is polynomial in the number of elements}. Moreover, the agent can never probe the same element twice (namely pick two actions from the same $\mathcal{A}_i$ set).  
    \item 
    $f$ describes how the value of the system changes from step $j$ to $j+1$. (i.e. The internal value at step $j+1$ is $V_{j+1} = f(V_{j}, a_j)$,where $a_j$ is the action at step $j$.)  \textit{The value of the system must be non-decreasing in $j$.}
    \item 
    $g(V_j, a_j)$ describes the immediate reward the agent gets at step $j$, given internal value $V_j$ and that the agent takes action $a_j$. Notice that $g$ can only depend on the value and action \textit{at} step $j$, but not the value and action \textit{before} step $j$. Furthermore, $g(V_j, a_j)$ can be stochastic but must have \textit{nonnegative expected value}.
    \item 
    Finally, $n$ represents the maximum steps the policy can take before terminating. $h(V_{n+1})$ describes the final additional reward at the end of the process, which depends on the value of the system before the policy terminates. $h(V_{n+1})$ must be \textit{pointwise nonnegative}. 
    \item 
    At the end of the process, the agent gets total reward $h(V_{n+1}) + \sum_{t=1}^n g(V_j, a_t)$. Here if the agent decides to terminate the process early at step $j^*$, we could view it as the agent taking a null action for all steps $j' > j^*$, and getting zero immediate rewards for those steps.
\end{itemize}

\subsection{Algorithmic Representation and Fixing Backup Box}\label{subsec:fix_backup} 
As we have seen in previous sections, the optimal policy (or at least there exists one that) is a two-phase policy described in \Cref{algoTwoStageThreshold} with some initial order and threshold $(i_1, \cdots, i_k, i^*, \tau_1, \cdots \tau_k)$, where $i^*$ is the unique box that may be claimed closed, hereafter referred to as the \emph{backup box}. In particular, when $\tau_1$ is negative, the two-phase policy does not use any backup box. In this case, the two-phase policy must be the Weitzman's policy. Otherwise, when the two-phase policy uses the backup box with non-zero probability, there are only $n = |\M|$ choices for the backup box. In this case, all of the $\tau_j$s (for $1 \leq j \leq k$) are nonnegative. To make our life easier in our reductions, we will mainly study a variant of the $\PNOI$ problem (which we will call $\PNOI_{i^*}$), where we are only allowed to claim a specific box $i^*$ closed without inspection. If for each $i^* \in \M$ we could find an approximately optimal policy $\ALG^{(i^*)}$ for problem $\PNOI_{i^*}$ with nonnegative thresholds, then simply taking the utility maximizing policy among $\ALG^{(i^*)}$ for each $i^* \in \M$ and the Weitzman's policy gives an approximately optimal policy for problem $\PNOI$. From now on, we will consider two-phase policies with a predetermined backup box $i^*$ and nonnegative thresholds $\tau_1, \cdots \tau_k$ (illustrated in \Cref{algoTwoStageThreshold2}). From this point on, we will use $\OPT := \OPT(\M, 0)$ to denote the optimal expected utility of problem $\PNOI := \PNOI(\M, 0)$. Similarly, we will use $\OPT_{i^*}$ to denote the optimal expected utility of problem $\PNOI_{i^*}$, which fixes the backup box $i^*$. 

\begin{algorithm}\caption{Two-Phase Policy with $i^*$ Backup ($\order=(i_1, \cdots, i_k, \tau_1, \cdots, \tau_{k})$)} \label{algoTwoStageThreshold2}
\begin{algorithmic}[1]
\State{Let $\U_j = \M \setminus \{i_1, \cdots, i_{j}\}$.}
\For{$j = 1, \cdots, k$}
\State{Open box $i_j$, observe value $v_{i_j}$ from the box.}
\If{$v_{i_j} > \tau_j$}
    \State{Run Weitzman's policy on remaining boxes from state $(\U_j, v_{i_j})$.}
    \State{\algorithmicreturn}
\EndIf
\EndFor
\State{Claim box $i^*$ closed.}
\end{algorithmic}
\end{algorithm}

\subsection{Index-Threshold Sequence, Classes of Policies, and Utilities }\label{subsec:index-threshold-seq} 

First, we introduce index-threshold sequence which is crucial for all the reduction steps and various classes of policies to be defined. Having fixed a backup box, $i^*$, and a position for the backup box in the order, $k+1$, the index-threshold sequence determines the boxes visited in order before the backup box and their respective thresholds.

\begin{definition}[Index-Threshold Sequence, $\order = (i_1, \cdots, i_k, \tau_1, \cdots \tau_k)$]
    We will define an \indexthreshold{} as an ordered sequence of box indices followed by an ordered sequence of threshold values of the same length. We will use $\order = (i_1, \cdots, i_k, \tau_1, \cdots \tau_k)$ to denote a specific \indexthreshold. 
\end{definition}

As we shall soon see in our reductions, for each problem $\GP \in \{\PNOI_{i^*}, \TPNOI_{i^*}, \DTPNOI_{i^*}, \STDP_{i^*}\}$, we will construct a class of policies $\class_{\GP}$ such that an \indexthreshold{} $\order = (i_1, \cdots, i_k, \tau_1, \cdots \tau_k)$ completely determines a specific policy with this class. Furthermore, there exists an optimal policy to problem $\GP$ that lies in the set $\class_{\GP}$. For instance, for the problem $\PNOI_{i^*}$, $\class_{\PNOI_{i^*}}$ would be the class of all two-phase policies with backup box $i^*$. $\class_{\TPNOI_{i^*}}, \class_{\DTPNOI_{i^*}}$ and $\class_{\STDP_{i^*}}$ will actually contain closely related policies to the two-phase policies. If a policy $\ALG$ belongs to the class of policies $\class_{\GP}$ and is determined by $\order$, we will say that $\ALG$ is parameterized with $\order$. 

We will also define a property of policies called \emph{below-threshold-nonadaptive} that holds for any policy in all policy classes $\class_{\GP}$ that we will define. Note that unlike two-phase property that specifies the action when a value exceeds the threshold (following Weitzman's policy), below-threshold-nonadaptive property is more general and does not specify the action in this case. This property only specifies the case where the values are below the thresholds and captures policies that are nonadaptive where the values are below the thresholds.
\begin{definition}[Below-Threshold-Nonadaptive] \label{def:belowThreshold}
A policy $\ALG$ parameterized with $\order = (i_1, \cdots, i_k, \tau_1, \cdots \tau_k)$ is \textit{below-threshold-nonadaptive} if 
\begin{enumerate}
    \item 
    $\ALG$ opens boxes in fixed order $i_1, i_2, \cdots, i_k$ while the value of none of the previously opened boxes have exceeded their thresholds.  
    \item 
    Given that before step $j$, the value of none of the previously opened boxes have exceeded their thresholds, $\ALG$'s expected utility from steps $\geq j$ is independent of what values it sees in steps $< j$. 
\end{enumerate}
\end{definition}

\begin{definition}[$U_{\GP}(\order)_{\geq j}$ and $U_{\GP}(\order)$]
Let $\ALG$ be the algorithm parametrized by $\order$ in class $\class_{\GP}$. We will now define $U_{\GP}(\order)_{\geq j}$ as the expected utility $\ALG$ gets at step $j$ from future steps, conditioned on the fact that in step $1, \cdots, j-1$, the value of the boxes are below the thresholds for the step. Since all polices we consider are below-threshold-nonadaptive, namely the utility of these policies are independent of previous values as long as they have not seen a box with above threshold value, $U_{\GP}(\order)_{\geq j}$ is well defined. We will use $U_{\GP}(\order)$ to denote the expected utility from $\order$ overall (namely, $U_{\GP}(\order) = U_{\GP}(\order)_{\geq 1}$).     
\end{definition}

We will make extensive use of these utility notations in our proofs, especially when comparing achievable utility between related problem formulations.

\begin{definition}
    For a set $\U$ of boxes and a fixed outside option $\oo$, we will define 
    \begin{align*}
    \weitz_{\U} := \max_{w \in \U} \kappa_w \quad \text{and} \quad \weitz_{\U}(\oo) := \max\set*{\max_{w \in \U} \kappa_w, \oo}.
\end{align*}
\end{definition}
Consequently, $\E[\weitz_{\U}]$ and $\E[\weitz_{\U}(\oo)]$ will be equal to the utility of Weitzman's policy (with no outside option) and that with an outside option $\oo$, respectively.

When analyzing the utility of a policy parameterized with $\order = (i_1, \cdots, i_k, \tau_1, \cdots, \tau_{k})$ at stage $j$ with uninspected boxes $\U_j = \M \setminus \{i_1, \cdots, i_{j-1}\}$, we use $\weitz_{\geq j}$ to represent $\weitz_{\U_j}$ and $\weitz_{\geq j}(\oo)$ to represent $\weitz_{\U_j}(\oo)$. 

\begin{note}[Expectations on $\weitz_{\geq j}$ and $\weitz_{\geq j}(\oo)$] When we take expectation over terms $\weitz_{\geq j}$ and $\weitz_{\geq j}(\oo)$, we will always take expectation over $v_w : w \in \U_j$, irrespective and independent of the range of $\oo$ we are taking expectation over. Hence, we will omit the subscript $v_j : j \in \U_j$ when taking expectations. E.g. when we use notation $\E\sq*{\weitz_{\geq j}}$, we mean $\E_{v_w : w \in \U_j}\sq*{\weitz_{\geq j}}$, and when we use notation $\E_{v > T}\sq*{\weitz_{\geq j}(v)}$, we mean $\E_{v > T, v_w : w \in \U_j} \sq*{\weitz_{\geq j}(v)}$. 
\end{note}

\subsection{$\PNOI_{i^*}$} \label{subsec:Pi*}
We will begin by defining $\class_{\PNOI_{i^*}}$, which will simply be the set of all two-phase policies with nonnegative thresholds. Recall that the two-phase policy (Algorithm~\ref{algoTwoStageThreshold}) for problem $\PNOI$ is determined by initial box order and thresholds $(i_1, \cdots, i_k, i^*, \tau_1, \cdots, \tau_k)$. Given that problem $\PNOI_{i^*}$ fixes the back up box, the class of two-phase policy $\PNOI_{i^*}$ is determined by $\order = (i_1, \cdots, i_k, \tau_1, \cdots, \tau_k)$. This proves validity of our choice of $\class_{\PNOI_{i^*}}$. 

We will also write out the utility recurrence formula for a two-phase policy $\ALG$ parameterized by $\order$ at step $j$. At step $j$, $\ALG$ inspects box $i_j$ and pays cost $c_{i_j}$. Then with probability $\Pr[v_{i_j} \leq \tau_j]$, the algorithm ignores the current value and transition to step $j+1$ in phase one. With probability $\Pr[v_{i_j} > \tau_j]$, the algorithm transitions into phase two and gets the same utility as Weitzman's policy would with outside option $v_{i_j}$. Hence we have the following recurrence: 
\begin{align*}
    U_{\PNOI_{i^*}}(\order)_{\geq j} &= \Pr[v_{i_j} \leq \tau_j] \cdot U_{\PNOI_{i^*}}(\order)_{\geq (j+1)} + \Pr[v_{i_j} > \tau_j] \cdot \E_{v_{i_j} > \tau_j}\sq*{\weitz_{\geq(j+1)}(v_{i_j})} - c_{i_j}.
\end{align*}
Finally, we have the following property for the optimal policy. 

\begin{restatable}{claim}{coroThresholdUtilityEq}
\label{coro:thresholdUtilityEq}
There exists an optimal two-phase policy $\ALG$ parametrized by $\order = \{i_1, \cdots, i_k, \tau_1, \cdots, \tau_k\}$ such that for all $j \in [k]$, $U_{\PNOI_{i^*}}(\order)_{\geq j} = \weitz_{\geq j}(\tau_j)$. 
\end{restatable}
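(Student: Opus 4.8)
\textbf{Proof proposal for \Cref{coro:thresholdUtilityEq}.}

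The plan is to start from an optimal two-phase policy (which exists by \Cref{thm:structure}) and to show we can choose its thresholds $\tau_j$ so that each $\tau_j$ equals the critical outside option at which running Weitzman's policy on the remaining boxes ties with continuing optimally in phase one. Concretely, fix an optimal two-phase policy parameterized by $\order=(i_1,\dots,i_k,\tau_1,\dots,\tau_k)$ with backup box $i^*$. I would argue by downward induction on $j$ (from $j=k$ down to $j=1$) that we may replace $\tau_j$ by $\tau(\U_j)$, the threshold of \Cref{def:thresholds} for the uninspected set $\U_j=\M\setminus\{i_1,\dots,i_{j-1}\}$, without losing optimality, and that with this choice $U_{\PNOI_{i^*}}(\order)_{\geq j}=\E[\weitz_{\geq j}(\tau_j)]=\weitz_{\geq j}(\tau_j)$ in the paper's notation. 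The base case $j=k+1$ is immediate since after step $k$ the only remaining move in phase one is to claim $i^*$ closed, contributing $\E[v_{i^*}]$, and this is exactly $\OPT(\U_{k+1},0)$; by the characterization of thresholds (\Cref{lm:thresholds_exist}, property 2 of \Cref{def:thresholds}) this equals $\E[\weitz_{\U_{k+1}}(\tau(\U_{k+1}))]$ since at the threshold the optimal backup-using utility coincides with Weitzman's utility with that outside option.

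The inductive step is where the real work lies. Assuming $U_{\PNOI_{i^*}}(\order)_{\geq j+1}=\E[\weitz_{\geq j+1}(\tau_{j+1})]$, I plug this into the recurrence displayed just before the claim:
\[
U_{\PNOI_{i^*}}(\order)_{\geq j}=\Pr[v_{i_j}\leq\tau_j]\,\E[\weitz_{\geq j+1}(\tau_{j+1})]+\Pr[v_{i_j}>\tau_j]\,\E_{v_{i_j}>\tau_j}[\weitz_{\geq j+1}(v_{i_j})]-c_{i_j}.
\]
The key observation is that $\E[\weitz_{\geq j}(\oo)]$, as a function of the outside option $\oo$, is exactly the utility of Weitzman's policy on $\U_j$ with outside option $\oo$, and that opening box $i_j$ first and then running Weitzman on $\U_{j+1}$ is itself (a variant of) Weitzman's policy on $\U_j$ — so the right-hand side, when $\tau_j$ is set to the correct critical value, telescopes into $\E[\weitz_{\geq j}(\tau_j)]$. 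More precisely, I would show: (i) if $\tau_j=\tau(\U_j)$, then for $v_{i_j}\le\tau_j$ the continuation utility $\E[\weitz_{\geq j+1}(\tau_{j+1})]$ equals $\OPT(\U_{j+1},0)$, which by optimality of the two-phase policy at $\Pro(\U_j,0)$ makes the whole step-$j$ utility equal to $\OPT(\U_j,0)$; and (ii) by \Cref{def:thresholds}(2), $\OPT(\U_j,0)=\OPT(\U_j,\tau(\U_j))$, and at the threshold $\OPT(\U_j,\tau(\U_j))$ coincides with the no-backup (Weitzman) value $\E[\weitz_{\geq j}(\tau(\U_j))]$, since just above the threshold no optimal policy uses a backup box and the optimal utility is continuous and equals Weitzman's. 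Combining (i) and (ii) closes the induction. A small point to check is that replacing $\tau_j$ by $\tau(\U_j)$ does not change the policy's realized behavior in a way that breaks below-threshold-nonadaptivity — but this is automatic, since the replacement only relabels which values trigger the phase transition, and on the event that we stay in phase one the continuation is unchanged.

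The main obstacle I anticipate is handling the boundary/continuity issue at the threshold cleanly: \Cref{def:thresholds} tells us $\OPT(\U,\oo)$ is constant below $\tau(\U)$ and strictly larger above, and that optimal policies stop using backup boxes above $\tau(\U)$, but to get the clean equality $U_{\PNOI_{i^*}}(\order)_{\geq j}=\E[\weitz_{\geq j}(\tau_j)]$ I need that \emph{at} $\oo=\tau(\U_j)$ the optimal utility is simultaneously equal to $\OPT(\U_j,0)$ and to the Weitzman value $\E[\weitz_{\geq j}(\tau(\U_j))]$. The first is \Cref{def:thresholds}(2) directly; the second requires arguing that the optimal-utility function is left-continuous at the threshold and that the Weitzman value, being $\E[\max\{\max_{w\in\U_j}\kappa_w,\oo\}]$, is continuous in $\oo$, so the two curves — equal for $\oo>\tau(\U_j)$ — must also agree at $\oo=\tau(\U_j)$. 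I would also need to make sure the induction is applied only along states actually reachable in phase one (the uninspected sets $\U_j$ that arise from the fixed order $i_1,\dots,i_k$), which is exactly the setting of \Cref{lm:last_structural_lemma}, so no new reachability subtlety arises beyond what that lemma already provides.
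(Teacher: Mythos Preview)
Your approach is substantially more elaborate than the paper's, and it carries a real gap. The paper's proof is essentially one line: at step $j$ the optimal two-phase policy decides between (a) forfeiting $v_{i_j}$ and continuing in phase one, yielding $U_{\PNOI_{i^*}}(\order)_{\geq(j+1)}$, or (b) switching to phase two, yielding $\E[\weitz_{\geq(j+1)}(v_{i_j})]$. Optimality forces the crossover to occur precisely where these tie, and since $\alpha\mapsto\E[\weitz_{\geq(j+1)}(\alpha)]$ is continuous and unbounded, one may simply \emph{redefine} $\tau_j$ to be the value at which this equality holds; this relabeling does not change which realizations trigger the phase switch (up to a measure-zero set) and hence preserves optimality. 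No induction and no appeal to \Cref{def:thresholds} or \Cref{lm:thresholds_exist} is needed.

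The gap in your route is the identification $\tau_j=\tau(\U_j)$. The threshold $\tau(\U_j)$ of \Cref{def:thresholds} is defined in terms of the \emph{unrestricted} optimum $\OPT(\U_j,0)$, whereas the claim concerns the problem $\PNOI_{i^*}$ with a \emph{fixed} backup box $i^*$ that need not be globally optimal. In general $U_{\PNOI_{i^*}}(\order)_{\geq j}$ may be strictly smaller than $\OPT(\U_j,0)$ (the latter could be attained only by using some other backup box $i^{**}\neq i^*$ on $\U_j$), so your step (i) — ``by optimality of the two-phase policy at $\Pro(\U_j,0)$ the step-$j$ utility equals $\OPT(\U_j,0)$'' — fails for such $i^*$. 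Likewise, the equality $\OPT(\U_j,\tau(\U_j))=\E[\weitz_{\geq j}(\tau(\U_j))]$ that you derive is about the unrestricted problem and does not transfer to $\PNOI_{i^*}$. One could salvage the argument by re-running the Section~\ref{sec:opt} analysis with a $\PNOI_{i^*}$-specific threshold $\tau_{i^*}(\U_j)$, but that is overkill: the indifference-point argument the paper uses is self-contained and works directly for $\PNOI_{i^*}$.
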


\subsection{Discretizing Action (Threshold) Space} \label{subsec:approxtwostage}

To have a polynomial sized action space ($\mathcal{A}$), we need to discretize the thresholds. In this section, we will prove that the utility from the optimal \indexthreshold{} $\order$ is fairly robust to fluctuation in threshold values for the problem $\PNOI_{i^*}$.

Our first claim says that there exists an optimal two-phase policy for problem $\PNOI_{i^*}$ where all the thresholds $\tau_j$ are no larger than $\OPT$. This claim provides us with an upper bound to the search space for optimal thresholds. 
\begin{restatable}{claim}{claimTLessThanOPT}
\label{claim:TLessThanOPT}
For problem $\PNOI_{i^*}$, any optimal two-phase policy parametrized by $\order^* = (i_1, \cdots, i_k, \tau_1, \cdots \tau_k)$ that satisfies~\Cref{coro:thresholdUtilityEq} must satisfy for all $j \in [k]$, $\tau_j \leq \OPT$. 
\end{restatable}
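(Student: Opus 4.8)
The plan is to argue by contradiction using the utility characterization from \Cref{coro:thresholdUtilityEq}. Suppose toward a contradiction that some optimal two-phase policy parametrized by $\order^* = (i_1, \cdots, i_k, \tau_1, \cdots, \tau_k)$ satisfying \Cref{coro:thresholdUtilityEq} has $\tau_j > \OPT$ for some $j \in [k]$; let $j$ be the \emph{largest} such index. By \Cref{coro:thresholdUtilityEq} we know $U_{\PNOI_{i^*}}(\order^*)_{\geq j} = \weitz_{\geq j}(\tau_j)$, and since $\weitz_{\geq j}(\tau_j) = \max\{\max_{w}\kappa_w, \tau_j\} \ge \tau_j > \OPT$ pointwise, this already forces $U_{\PNOI_{i^*}}(\order^*)_{\geq j} > \OPT$. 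The key point is that $U_{\PNOI_{i^*}}(\order^*)_{\geq j}$ is the expected utility, starting fresh at step $j$ (conditioned on not having exceeded any earlier threshold), of a genuine feasible policy for the residual instance $\Pro(\U_j, 0)$: it inspects $i_j$, and either transitions to Weitzman on the remaining boxes or proceeds down the order to eventually claim $i^*$ closed. Hence $U_{\PNOI_{i^*}}(\order^*)_{\geq j} \le \OPT(\U_j, 0)$.

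So the crux is to show $\OPT(\U_j, 0) \le \OPT$, i.e., that the optimal utility on a \emph{subset} of the boxes (with no outside option) never exceeds the optimal utility on the full instance. This is the main obstacle, but it should follow from a simple coupling/monotonicity argument: any policy for $\Pro(\U_j, 0)$ can be simulated as a policy for $\Pro(\M, 0)$ that simply ignores the boxes in $\M \setminus \U_j$ and never inspects them; the utility is identical, so $\OPT(\M,0) \ge \OPT(\U_j,0)$. (One should double-check there is no subtlety with the cost terms — there isn't, since unopened boxes contribute nothing — and that $\OPT$ here means $\OPT(\M,0)$, which the excerpt fixes in \Cref{subsec:fix_backup}. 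Note also $\OPT \geq \OPT_{i^*}$ is not needed; we compare directly to $\OPT = \OPT(\M,0)$.) Combining, $\OPT < U_{\PNOI_{i^*}}(\order^*)_{\geq j} \le \OPT(\U_j, 0) \le \OPT$, a contradiction.

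A cleaner packaging, which I would actually write up, avoids even choosing the largest bad index: for \emph{any} $j \in [k]$, \Cref{coro:thresholdUtilityEq} gives $U_{\PNOI_{i^*}}(\order^*)_{\geq j} = \weitz_{\geq j}(\tau_j) \ge \tau_j$, and the left side is the value of a feasible policy on $\Pro(\U_j, 0)$, hence is at most $\OPT(\U_j, 0) \le \OPT(\M, 0) = \OPT$. Therefore $\tau_j \le \OPT$ for every $j$. The only ingredients are the already-established \Cref{coro:thresholdUtilityEq}, the definition of $\weitz_{\geq j}(\cdot)$, and the elementary observation that restricting the box set cannot increase the optimal utility; I expect the write-up to be only a few lines.
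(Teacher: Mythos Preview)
Your argument is correct and lands on the same two ingredients as the paper: from \Cref{coro:thresholdUtilityEq} one gets $\tau_j \leq \E[\weitz_{\geq j}(\tau_j)] = U_{\PNOI_{i^*}}(\order^*)_{\geq j}$, and then one bounds the right-hand side by $\OPT$. The only difference is how that last bound is obtained. The paper argues internally to the fixed policy, using the chain $U_{\PNOI_{i^*}}(\order^*)_{\geq (j+1)} \leq U_{\PNOI_{i^*}}(\order^*)_{\geq j} \leq \cdots \leq U_{\PNOI_{i^*}}(\order^*)_{\geq 1} = \OPT_{i^*} \leq \OPT$ (the monotonicity in $j$ coming from optimality of $\order^*$: deleting box $i_j$ from the order cannot help). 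You instead step outside and use subset monotonicity of the optimum, $U_{\PNOI_{i^*}}(\order^*)_{\geq j} \leq \OPT(\U_j,0) \leq \OPT(\M,0) = \OPT$. Both routes are one-liners; yours avoids invoking optimality of the tail of $\order^*$ and is arguably the cleaner packaging.
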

Next, we prove that we can just search through \indexthreshold s with thresholds in increments of $\eps \cdot \OPT$, and find a good $\order$ whose associated two-phase policy gets at least $\OPT_{i^*} - \eps \cdot \OPT$ utility from $\PNOI_{i^*}$ problem. This enables us to restrict ourselves to considering thresholds of multiples of $\eps \cdot \OPT$ during our reductions in the next few sections. 
\begin{restatable}{proposition}{claimConstantT}
\label{claim:constantT}
Let $\order^* =(i_1, \cdots, i_{k}, \tau_1, \cdots, \tau_{k})$ be the parameter associated with an optimal two-phase policy for problem $\PNOI_{i^*}$ that satisfies Claim~\ref{coro:thresholdUtilityEq}. Then there exists another \indexthreshold{} $\order' = (i_1, \cdots, i_{k}, \tau_1', \cdots, \tau_{k}')$ with thresholds $\{\tau_j'\}_{j \in [k]}$ supported on $\W_L = \{0, \eps \cdot \OPT, 2 \cdot \eps \cdot \OPT, \cdots, \OPT\}$ such that $ U_{\PNOI_{i^*}}(\order') \geq \OPT_{i^*} - \eps \cdot \OPT$. 
\end{restatable}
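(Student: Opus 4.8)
\textbf{Proof proposal for Proposition~\ref{claim:constantT}.}

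The plan is to start from the optimal index-threshold sequence $\order^* = (i_1, \cdots, i_k, \tau_1, \cdots, \tau_k)$ satisfying Claim~\ref{coro:thresholdUtilityEq} and Claim~\ref{claim:TLessThanOPT}, and round each threshold \emph{down} to the nearest multiple of $\eps \cdot \OPT$, i.e. set $\tau_j' = \floor{\tau_j / (\eps \OPT)} \cdot \eps \OPT$. Since $0 \le \tau_j \le \OPT$ by Claim~\ref{claim:TLessThanOPT}, each $\tau_j'$ lands in $\W_L = \{0, \eps\OPT, 2\eps\OPT, \cdots, \OPT\}$, so $\order'$ is a valid index-threshold sequence with the same box order as $\order^*$. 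I would then bound the utility loss $U_{\PNOI_{i^*}}(\order^*) - U_{\PNOI_{i^*}}(\order')$ by analyzing the two policies step by step, using the recurrence for $U_{\PNOI_{i^*}}(\order)_{\geq j}$ stated in \Cref{subsec:Pi*}.

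The key step is a backward induction on $j$ (from $j = k+1$ down to $j = 1$) showing that lowering a single threshold from $\tau_j$ to $\tau_j' \in [\tau_j - \eps\OPT, \tau_j]$, while keeping all later thresholds unchanged, costs at most $\eps\OPT \cdot \Pr[\text{phase one survives to step } j]$ in utility; summing over $j$ and noting the survival events are nested (hence the probabilities sum to at most something bounded, or more simply each term is at most the probability of a disjoint ``first exit at step $j$'' event) gives total loss at most $\eps\OPT$. Concretely, at step $j$ the only difference in behavior between $\order^*$ and $\order'$ (conditioned on reaching step $j$ in phase one, which happens with the same probability under both since earlier thresholds are being handled by the induction) occurs on the event $v_{i_j} \in (\tau_j', \tau_j]$: under $\order^*$ the policy continues in phase one and collects $U_{\PNOI_{i^*}}(\order^*)_{\geq j+1}$, whereas under $\order'$ it switches to Weitzman with outside option $v_{i_j} \in (\tau_j', \tau_j]$ and collects $\E[\weitz_{\geq j+1}(v_{i_j})]$. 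By Claim~\ref{coro:thresholdUtilityEq}, $U_{\PNOI_{i^*}}(\order^*)_{\geq j+1} = \weitz_{\geq j+1}(\tau_{j+1})$, and since $\weitz_{\geq j+1}(\cdot)$ is a nondecreasing $1$-Lipschitz function of the outside option, on this event the two quantities differ by at most $|\tau_j - \tau_{j+1}| + (\tau_j - \tau_j') \le \OPT$ in the worst case — but more carefully, both are at most $\weitz_{\geq j+1}(\tau_j)$ which is at most $\OPT$-ish; the cleanest bound is that switching to phase two early rather than late changes the conditional utility by at most the gap in the relevant argument, and the probability mass of the bad event is at most $\Pr[v_{i_j} \in (\tau_j', \tau_j]]$, so the contribution telescopes.

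I expect the main obstacle to be making the per-step loss bound genuinely tight enough to sum to $\eps\OPT$ rather than $k \cdot \eps\OPT$: a naive ``each threshold move costs $\eps\OPT$'' argument loses a factor of $k$. The fix is to observe that the ``first box to exceed its (rounded) threshold'' partitions the probability space, so the events on which step $j$'s rounding actually matters are mutually exclusive across $j$, and each contributes at most $\eps\OPT$ times its probability; alternatively, one argues that on the event $v_{i_j}\in(\tau_j',\tau_j]$ the policy $\order'$ still achieves utility at least $U_{\PNOI_{i^*}}(\order^*)_{\geq j+1} - \eps\OPT$ because Weitzman with outside option $\tau_j' \ge \tau_j - \eps\OPT$ beats the optimal phase-one continuation value $\weitz_{\geq j+1}(\tau_{j+1})$ minus $\eps\OPT$ (using $\tau_{j+1}\le\tau_j$, which should follow from Claim~\ref{coro:thresholdUtilityEq} and monotonicity of the utility recurrence, or can be arranged WLOG). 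Chaining these inequalities through the recurrence then yields $U_{\PNOI_{i^*}}(\order') \ge U_{\PNOI_{i^*}}(\order^*) - \eps\OPT = \OPT_{i^*} - \eps\OPT$, which is the claim. If the $\tau_{j+1} \le \tau_j$ monotonicity is not available for free, the backup plan is to carry the accumulated additive slack as an explicit term in the induction hypothesis and show it never exceeds $\eps\OPT$ because the ``loss event'' probabilities at distinct steps sum to at most $1$.
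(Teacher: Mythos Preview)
Your approach is correct and is exactly the paper's: set $\tau_j' = \lfloor \tau_j/(\eps\,\OPT)\rfloor \cdot \eps\,\OPT$ and prove by backward induction that $U_{\PNOI_{i^*}}(\order')_{\geq j} \geq U_{\PNOI_{i^*}}(\order^*)_{\geq j} - \eps\,\OPT$; the slack does not accumulate because both the inductive-hypothesis branch ($v_{i_j}\le\tau_j'$) and the ``early-switch'' branch ($\tau_j' < v_{i_j}\le\tau_j$) are each lower-bounded by $U_{\PNOI_{i^*}}(\order^*)_{\geq j+1}-\eps\,\OPT$, and their probabilities sum to $\Pr[v_{i_j}\le\tau_j]\le 1$. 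One simplification over your write-up: you do not need $\tau_{j+1}\le\tau_j$. The bound you need on the early-switch branch, $\E[\weitz_{\geq j+1}(\tau_j')]\ge U_{\PNOI_{i^*}}(\order^*)_{\geq j+1}-\eps\,\OPT$, follows from $1$-Lipschitzness of $\oo\mapsto\E[\weitz_{\geq j+1}(\oo)]$ together with the threshold-indifference condition $\E[\weitz_{\geq j+1}(\tau_j)]\ge U_{\PNOI_{i^*}}(\order^*)_{\geq j+1}$ (at $v_{i_j}=\tau_j$ the optimal policy is, by definition of the threshold, weakly willing to switch to phase two), which is precisely what Claim~\ref{coro:thresholdUtilityEq} supplies and requires no ordering among the $\tau_j$'s; your backup plan is therefore unnecessary.
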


\subsection{Removing Cost Terms  (Reducing $\PNOI_{i^*}$ to $\TPNOI_{i^*}$)} \label{subsec:cost}

In this section, we reduce problem $\PNOI_{i^*}$ to a problem with no costs $\TPNOI_{i^*}$. This step is helpful to have a finite internal value space $\mathcal{V}$ in our eventual reduction to~\cite{FuLX18} framework, while approximately preserving attainable utility.

\cite{FuLX18} requires the internal values to be supported on a set $\mathcal{V}$ with constant size. This will necessitate a discretization of the element\footnote{Elements in the stochastic dynamic program formulation correspond to boxes in our setting.} values, as those values are usually not supported on a small set. In various reductions to~\cite{FuLX18}, there are generally two ways to define the internal value $V_j$. The first option is to use $V_j$ to represent the best value (or an approximation of the value) that the agent has seen in the past. The second option is to use $V_j$ to represent the number of elements the policy has seen or selected. Given that in  Pandora's problem with nonobligatory inspection, the value the agent selects is very much dependent on all probed elements and not just a constant-size subset of elements, it is much more reasonable for us to use the first option -- use $V_j$ to represent some form of element value. 
However, almost all problems that reduce to~\cite{FuLX18} which use $V_j$ to represent element values do not have costs of inspection. The canonical way to discretize the values is to round the value up or down to an approximate value. However, if the reward is at step $t$ is $V_j - c$ for some cost $c$, then rounding $V_j$ to a nearby approximate value may completely distort the value of $V_j - c$ multiplicatively. 
To deal with this issue, we reduce the original $\PNOI_{i^*}$ problem to a problem without cost (we will call it $\TPNOI_{i^*})$ by drawing parallels between our two-phase policy and the non-exposed policy introduced by~\cite{KleinbergWW16} for the original Pandora's box problem. 

\subsubsection{Stage-Non-Exposed Policies}
\cite{KleinbergWW16} introduced the notion of  non-exposed policies (see \Cref{def:non-exposed}), which has been successfully applied to related problems with cost of inspection~\cite{Segev021}. Since optimal policy for $\PNOI$ (and hence $\PNOI_{i^*}$ for some $i^*$\footnote{Note that Weitzman's policy is non-exposed.}) may not always be non-exposed (See \Cref{ex:nonexposed} in \Cref{app:intro}), we provide a new related property that our policy satisfies.
 
Observe the following fact about non-exposed nonadaptive policies. 
\begin{definition} \label{def:nonadaptive}
    A nonadaptive policy is parametrized by $\order = (i_1, \cdots i_k, \tau_1, \cdots, \tau_k)$, and inspects boxes $i_1, \cdots i_k$ in sequential order. At step $j$, if $v_{i_j} > \tau_j$, then the policy selects box $i_j$ and terminates the process. 
\end{definition}

\begin{restatable}{claim}{claimNonAdaptiveNonExpose} \label{claim:nonAdaptiveNonExpose}
    \cite{Segev021} A nonadaptive policy parametrized by $\order = (i_1, \cdots i_k, \tau_1, \cdots, \tau_k)$ is non-exposed when $\tau_j \leq \sigma_{i_j}$. \footnote{This statement is almost an equivalence statement. A non-exposed policy with $\tau_j > \sigma_{i_j}$ must satisfy: $v_{i_j} \in (\sigma_{i_j}, \tau_j]$ with probability $0$. This can be formally dealt with easily.}
\end{restatable}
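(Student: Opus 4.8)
The plan is to verify the defining pointwise identity of \Cref{def:non-exposed}, namely $(\inspect{i} - \allocsub{i})\cdot(v_i - \sigma_i)^+ = 0$ for every box $i$, directly for each realization of the prize vector, using only the hypothesis $\tau_j \le \sigma_{i_j}$ for all $j$ together with the description of the nonadaptive policy in \Cref{def:nonadaptive}. I would classify each box according to the role it plays in a given run of the policy.

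If box $i$ is never inspected in the run, then $\inspect{i} = 0$, and since a nonadaptive policy only ever selects a box it has inspected, also $\allocsub{i} = 0$; hence the first factor $\inspect{i} - \allocsub{i}$ vanishes. If $i = i_j$ is inspected and the run continues past step $j$ --- equivalently $v_{i_j} \le \tau_j$ --- then $v_{i_j} \le \tau_j \le \sigma_{i_j}$, so $(v_{i_j} - \sigma_{i_j})^+ = 0$ and the second factor vanishes. Finally, if $i = i_j$ is inspected and the run terminates at step $j$, i.e.\ $v_{i_j} > \tau_j$, then by \Cref{def:nonadaptive} the policy selects box $i_j$, so $\allocsub{i_j} = \inspect{i_j} = 1$ and the first factor again vanishes. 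In every case the product is $0$, so by definition the policy is non-exposed.

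I do not anticipate a genuine obstacle: this is a short bookkeeping argument, and the only points requiring minor care are that non-exposure is an almost-sure (pointwise) condition rather than an in-expectation one, and that a nonadaptive policy never acquires an uninspected box, which makes the uninspected case immediate. For completeness one can also record the near-converse mentioned in the footnote by running the same trichotomy backwards: if $\tau_j > \sigma_{i_j}$ for some $j$, then on the event that step $j$ is reached and $v_{i_j} \in (\sigma_{i_j}, \tau_j]$, box $i_j$ is inspected with value exceeding $\sigma_{i_j}$ yet is not selected, so the non-exposure identity forces this event to have probability $0$.
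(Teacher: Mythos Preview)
Your proposal is correct: the three-case bookkeeping (uninspected, inspected-and-passed, inspected-and-selected) verifies the pointwise identity of \Cref{def:non-exposed} directly, and the near-converse you sketch is also sound. Note, however, that the paper does not prove this claim itself; it simply attributes the statement to~\cite{Segev021} and moves on, so there is no in-paper argument to compare against. Your direct verification is exactly the kind of short proof one would expect for this fact.
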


We can prove a claim with similar conditions to \Cref{claim:nonAdaptiveNonExpose} for two-phase policies despite the adaptivity of two-phase policies. We prove that although the optimal two-phase policy might not select a box when its value is above the threshold, the optimal policy \textit{will always enter phase two}. It is also easy to calculate the expected utility during and after the phase transition: if during the phase transition step $j$, the observed value is $v_{i_j}$, then the total utility from $\geq j$ step is just the expected utility from Weitzman's policy on remaining boxes with outside option $v_{i_j}$ minus the cost $c_{i_j}$.\footnote{From \Cref{sec:opt} we know that if a value is below threshold, the optimal mechanism can ignore it. Therefore considering $v_{i_j}$, the first value above the threshold, as the maximum observed value and therefore the outside option is valid.} This quantity is always at least $v_{i_j} - c_{i_j}$, the utility the agent would have gotten if they had just selected box $i_j$ and ended the process at step $j$. This gives us an alternative view of our two-phase policy: during the phase transition at step $j$, we immediately select box $j$ and get utility $v_{i_j} - c_{i_j}$, but we also get the ``leftover utility" from remaining boxes through Weitzman. This enables us to get rid of the cost term in similar manners to~\cite{Segev021}.

\begin{definition}[stage-non-exposed]
A two-phase policy with backup box $i^*$ parameterized by $\order = (i_1, \cdots, i_k, \tau_1, \cdots, \tau_k)$ is stage-non-exposed if for each $j \in [k]$, $\tau_j \leq \sigma_{i_j}$. 
\end{definition}

\begin{restatable}{claim}{claimTLessThanSigma} \label{claim:TLessThanSigma}
    For problem $\PNOI_{i^*}$, there exists an optimal two-phase policy parametrized by $\order = (i_1, \cdots, i_k, \tau_1, \cdots \tau_k)$ that is stage-non-exposed, namely, for each $j \in [k]$, $\tau_j \leq \sigma_{i_j}$.
\end{restatable}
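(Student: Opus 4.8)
\textbf{Proof proposal for \Cref{claim:TLessThanSigma}.}

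The plan is to start from an arbitrary optimal two-phase policy for $\PNOI_{i^*}$ that satisfies \Cref{coro:thresholdUtilityEq} (so that $U_{\PNOI_{i^*}}(\order)_{\geq j} = \weitz_{\geq j}(\tau_j)$ for all $j$), and show that any threshold $\tau_j$ that violates $\tau_j \leq \sigma_{i_j}$ can be lowered to $\sigma_{i_j}$ without decreasing the policy's utility. Concretely, I would argue by a local exchange at a single step $j$, holding the order $(i_1,\dots,i_k)$ and all other thresholds fixed, and then iterate. Fix such a step $j$ and suppose $\tau_j > \sigma_{i_j}$. The key observation is that on the event $v_{i_j} \in (\sigma_{i_j}, \tau_j]$ the box $i_j$ is ``wasted'' in the following sense: with the higher threshold the policy stays in phase one and ignores $v_{i_j}$ entirely, so $i_j$ contributes nothing beyond its inspection cost on this event; but if we lower the threshold to $\sigma_{i_j}$, the policy instead transitions into phase two and runs Weitzman with outside option $v_{i_j}$ on $\U_{j+1}$.

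The heart of the argument is to compare the two resulting utilities-from-step-$j$ and show the lowered threshold is weakly better. Using the recurrence for $U_{\PNOI_{i^*}}(\order)_{\geq j}$ together with \Cref{coro:thresholdUtilityEq}, the utility from step $j$ under threshold $\tau_j$ is $\weitz_{\geq j}(\tau_j) = \E[\max\{\weitz_{\geq (j+1)}, \tau_j\}]$ (minus costs absorbed into the $\weitz_{\geq j}$ bookkeeping), whereas under threshold $\sigma_{i_j}$ the phase-one branch now contributes $U_{\PNOI_{i^*}}(\order')_{\geq (j+1)}$ on the event $v_{i_j} \le \sigma_{i_j}$ and the Weitzman branch contributes $\E_{v_{i_j} > \sigma_{i_j}}[\weitz_{\geq (j+1)}(v_{i_j})]$ on the complementary event, again minus $c_{i_j}$. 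The crucial point — exactly the content of \Cref{claim:nonAdaptiveNonExpose} / \Cref{lem:perboxUtility} applied at box $i_j$ — is that when $\tau_j \le \sigma_{i_j}$ the per-box contribution of $i_j$, namely $\E[\allocsub{i_j} v_{i_j} - \inspect{i_j} c_{i_j}]$ in the ``immediate selection'' view, equals $\E[\allocsub{i_j}\kappa_{i_j}] = \E[(\kappa_{i_j} \text{ on the event } v_{i_j} > \tau_j)]$, so that the cost $c_{i_j}$ is exactly accounted for by the truncated value $\kappa_{i_j}$ and the step-$j$ utility telescopes cleanly into $\weitz_{\geq j}(\sigma_{i_j})$-type expressions; raising $\tau_j$ above $\sigma_{i_j}$ strictly breaks this accounting, leaving residual uncompensated cost. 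Hence lowering to $\sigma_{i_j}$ cannot hurt. I would also use the already-established \Cref{lm:thresholds_exist}-style fact that a below-threshold value can be treated as $0$, so reinterpreting $v_{i_j} \in (\sigma_{i_j},\tau_j]$ as a phase-two trigger with outside option $v_{i_j}$ is legitimate and optimal.

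After the single-step exchange, I would note that lowering $\tau_j$ to $\sigma_{i_j}$ does not affect whether the resulting policy still satisfies \Cref{coro:thresholdUtilityEq} at the other steps (the order is unchanged and the utilities from steps $> j$ are unchanged), so we may repeat the exchange at every step where the threshold exceeds the corresponding reservation value, obtaining in finitely many steps an optimal two-phase policy with $\tau_j \le \sigma_{i_j}$ for all $j \in [k]$, which is stage-non-exposed by definition. The main obstacle I anticipate is the bookkeeping in the step-$j$ utility comparison: one must be careful that the ``leftover utility'' from Weitzman on $\U_{j+1}$ with outside option $v_{i_j}$ is correctly at least $v_{i_j} - c_{i_j}$ and that the phase-one continuation value $U_{\PNOI_{i^*}}(\order')_{\geq (j+1)}$ dominates what the higher threshold would extract on that same event — this is where \Cref{obs:monotone} (monotonicity of $\OPT$ in the outside option) and the equality form of \Cref{lem:perboxUtility} do the real work, and where a naive rounding argument would fail for exactly the multiplicative-distortion reason flagged in the cost-removal discussion. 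Everything else is routine substitution into the recurrence.
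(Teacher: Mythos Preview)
Your local-exchange argument has the inequality backwards. Fix $j$ with $\tau_j>\sigma_{i_j}$ and look at the event $v_{i_j}\in(\sigma_{i_j},\tau_j]$. Under the original threshold $\tau_j$ the policy stays in phase one and collects $U_{\PNOI_{i^*}}(\order)_{\geq(j+1)}$; under your lowered threshold $\sigma_{i_j}$ it transitions to phase two and collects $\E[\weitz_{\geq(j+1)}(v_{i_j})]$. But by the indifference condition behind \Cref{coro:thresholdUtilityEq}, $U_{\PNOI_{i^*}}(\order)_{\geq(j+1)}=\E[\weitz_{\geq(j+1)}(\tau_j)]$, and since $\weitz_{\geq(j+1)}(\cdot)$ is nondecreasing and $v_{i_j}\le\tau_j$ on this event, the phase-two payoff is \emph{at most} the phase-one continuation. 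So lowering the threshold weakly \emph{hurts} on this event, not helps. Invoking \Cref{lem:perboxUtility} does not rescue this: that lemma compares a single policy's utility to the upper bound $\E[\allocsubi\kappa_i]$, not two policies with different allocation indicators; your ``uncompensated cost'' intuition is correct for nonadaptive policies that \emph{select} $i_j$ upon crossing the threshold (\Cref{claim:nonAdaptiveNonExpose}), but a two-phase policy merely switches to Weitzman and need not select $i_j$ at all.

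The paper's proof uses an ingredient you are missing: optimality of the \emph{order}, not just of the thresholds. Because deleting box $i_j$ from the sequence cannot improve an optimal policy, $U_{\PNOI_{i^*}}(\order)_{\geq j}\ge U_{\PNOI_{i^*}}(\order)_{\geq(j+1)}$. Substituting the step-$j$ recurrence and the indifference condition, and using the pointwise bound $\weitz_{\geq(j+1)}(v)-\weitz_{\geq(j+1)}(\tau_j)\le v-\tau_j$, one obtains
\[
\E\big[(v_{i_j}-\tau_j)^+\big]\ \ge\ c_{i_j}\ =\ \E\big[(v_{i_j}-\sigma_{i_j})^+\big].
\]
Since $t\mapsto\E[(v_{i_j}-t)^+]$ is nonincreasing, this forces either $\tau_j\le\sigma_{i_j}$ already, or $\Pr[v_{i_j}\in(\sigma_{i_j},\tau_j)]=0$, in which case replacing $\tau_j$ by $\min(\tau_j,\sigma_{i_j})$ changes nothing. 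Your exchange would only go through because this hidden constraint makes the troublesome event null --- but you never derive that constraint, and without it the exchange points the wrong way.
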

\begin{corollary} \label{cor:tau_is_min_sigma_and_opt}
    For problem $\PNOI_{i^*}$, there exists an optimal two-phase policy parametrized by $\order = (i_1, \cdots, i_k, \tau_1, \cdots \tau_k)$ such that for each $j \in [k]$, $\tau_j \leq \min\{\sigma_{i_j}, \OPT\}$.
\end{corollary}

\begin{restatable}{proposition}{propNonExposedUtil} \label{prop:nonExposedUtil}
Let $\ALG$ be a stage-non-exposed two-phase policy parameterized by $\order = (i_1, \cdots, i_k, \tau_1, \cdots, \tau_k)$ and let $\phasetrans{j}$ denote whether the phase transition happens at step $j$. Then, $\ALG$ gets expected utility 
\begin{align*}
\sum_{j=1}^k \E\sq*{\phasetrans{j} \cdot 
 \paren*{\E_{v_{i_j} > \tau_j}[\kappa_{i_j}] + \E_{v_{i_j} > \tau_j}[(\weitz_{\geq j}- v_{i_j})^+]}} + \E\sq*{\paren*{1 - \sum_{j=1}^k  \phasetrans{j}} \cdot v_{i^*}}.
 \end{align*}
\end{restatable}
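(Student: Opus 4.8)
The plan is to start from the recurrence for $U_{\PNOI_{i^*}}(\order)_{\geq j}$ together with the stage-non-exposed assumption $\tau_j \le \sigma_{i_j}$, and rewrite the per-step contribution so that the cost $c_{i_j}$ is absorbed into a $\kappa$-term. Concretely, I would introduce $\phasetrans{j}$, the indicator that the phase transition happens at step $j$ (i.e.\ $v_{i_1}\le\tau_1,\dots,v_{i_{j-1}}\le\tau_{j-1}$ but $v_{i_j}>\tau_j$), so that $\Pr[\phasetrans{j}] = \Pr[v_{i_j}>\tau_j]\prod_{\ell<j}\Pr[v_{i_\ell}\le\tau_\ell]$, and these events together with the ``no transition'' event $1-\sum_j\phasetrans{j}$ (reaching the backup box $i^*$) partition the probability space. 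Unrolling the recurrence from $j=1$ downward, the total expected utility is
\[
U_{\PNOI_{i^*}}(\order) \;=\; \sum_{j=1}^k \E\!\left[\phasetrans{j}\cdot\Big(\E_{v_{i_j}>\tau_j}[\weitz_{\ge(j+1)}(v_{i_j})]\Big)\right] \;-\; \sum_{j=1}^k \Pr[\text{reach step }j]\,c_{i_j} \;+\; \E\!\left[\Big(1-\textstyle\sum_j\phasetrans{j}\Big) v_{i^*}\right],
\]
where ``reach step $j$'' means all earlier boxes were below threshold. The key identity is then to show that on the event that we reach step $j$ and pay $c_{i_j}$, the contribution can be folded into $\phasetrans{j}$-conditioned terms: since $c_{i_j} = \E[(v_{i_j}-\sigma_{i_j})^+]$ by definition of the reservation value, and since $\kappa_{i_j} = \min\{v_{i_j},\sigma_{i_j}\} = v_{i_j} - (v_{i_j}-\sigma_{i_j})^+$, the net effect of ``observe $v_{i_j}$, pay $c_{i_j}$, and carry $v_{i_j}$ forward as an outside option'' rearranges into $\E[\kappa_{i_j}] + (\text{leftover})$ exactly on the sub-events where $v_{i_j}$ is large enough to matter.

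The cleanest way to organize the bookkeeping is to use the \cite{KleinbergWW16} machinery directly: $\weitz_{\ge(j+1)}(v_{i_j}) = \max\{\weitz_{\ge(j+1)}, v_{i_j}\}$, and I'd write $\max\{\weitz_{\ge(j+1)},v_{i_j}\} = v_{i_j} + (\weitz_{\ge(j+1)} - v_{i_j})^+$. Therefore
\[
\E_{v_{i_j}>\tau_j}[\weitz_{\ge(j+1)}(v_{i_j})] - c_{i_j}/\Pr[v_{i_j}>\tau_j] \;=\; \E_{v_{i_j}>\tau_j}[v_{i_j}] + \E_{v_{i_j}>\tau_j}[(\weitz_{\ge(j+1)}-v_{i_j})^+] - \tfrac{c_{i_j}}{\Pr[v_{i_j}>\tau_j]}.
\]
Now I must deal with the cost carefully: the cost $c_{i_j}$ is paid on the \emph{whole} event of reaching step $j$ (which has probability $\Pr[v_{i_j}\le\tau_j]\cdot\Pr[\text{reach step }j{+}1\text{ in phase 1 from }j] + \dots$, i.e.\ just $\Pr[\text{reach step }j]$), not merely on $\phasetrans{j}$. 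I'd split $c_{i_j} = \E[(v_{i_j}-\sigma_{i_j})^+] = \E_{v_{i_j}>\tau_j}[(v_{i_j}-\sigma_{i_j})^+]\cdot\Pr[v_{i_j}>\tau_j]$, using precisely the stage-non-exposed hypothesis $\tau_j\le\sigma_{i_j}$ which guarantees $(v_{i_j}-\sigma_{i_j})^+ = 0$ whenever $v_{i_j}\le\tau_j$. This is what lets the cost attach cleanly to the $\phasetrans{j}$ term only, after dividing through. Combining, $\E_{v_{i_j}>\tau_j}[v_{i_j}] - \E_{v_{i_j}>\tau_j}[(v_{i_j}-\sigma_{i_j})^+] = \E_{v_{i_j}>\tau_j}[v_{i_j}\wedge\sigma_{i_j}] = \E_{v_{i_j}>\tau_j}[\kappa_{i_j}]$, which produces exactly the claimed $\E\!\left[\phasetrans{j}\cdot(\E_{v_{i_j}>\tau_j}[\kappa_{i_j}] + \E_{v_{i_j}>\tau_j}[(\weitz_{\ge j}-v_{i_j})^+])\right]$ term — note the stated formula uses $\weitz_{\ge j}$ rather than $\weitz_{\ge(j+1)}$, which is fine since $(\weitz_{\ge j}-v_{i_j})^+ = (\max\{\weitz_{\ge(j+1)},\kappa_{i_j}\} - v_{i_j})^+ = (\weitz_{\ge(j+1)}-v_{i_j})^+$ on the event $v_{i_j}>\tau_j \ge \kappa_{i_j}$ would need a quick check, but since $\kappa_{i_j}\le\sigma_{i_j}$ and on this event $v_{i_j}>\tau_j$, if $\tau_j<\kappa_{i_j}$ there could be a gap — I'd verify that $v_{i_j}>\tau_j$ together with $\kappa_{i_j} = \min\{v_{i_j},\sigma_{i_j}\}$ gives $\kappa_{i_j} \le v_{i_j}$ always, so $(\weitz_{\ge j}-v_{i_j})^+ = (\weitz_{\ge(j+1)}-v_{i_j})^+$ holds pointwise regardless).

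I would carry this out by induction on $k$ (or on the step index $j$, unrolling one level of the recurrence at a time), with the base case being the step where box $i^*$ is claimed closed contributing $v_{i^*}$. The main obstacle — and the place I'd be most careful — is the accounting of \emph{where the cost is charged}: the recurrence charges $c_{i_j}$ on every realization that reaches step $j$, but the target formula only shows costs implicitly inside the $\phasetrans{j}$ terms (as the difference between $v_{i_j}$ and $\kappa_{i_j}$) and nowhere else. Reconciling these requires the stage-non-exposed property in an essential way: it is exactly the statement that $\E[(v_{i_j}-\sigma_{i_j})^+ \mid v_{i_j}\le\tau_j] = 0$, so the full cost $c_{i_j}$ equals $\Pr[\phasetrans{j}\mid\text{reach }j]\cdot\E[(v_{i_j}-\sigma_{i_j})^+\mid v_{i_j}>\tau_j]\cdot(\text{appropriate reach probability})$, letting all of it ride on the transition event. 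Once that reconciliation is in place, the rest is the routine Kleinberg--Weiss--Weinberg algebra $\max\{a,b\} = b + (a-b)^+$ and $\kappa = v - (v-\sigma)^+$, plus collecting terms over the partition $\{\phasetrans{1},\dots,\phasetrans{k},\,1-\sum_j\phasetrans{j}\}$.
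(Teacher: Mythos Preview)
Your proposal is correct and follows essentially the same approach as the paper: start from the two-phase recurrence, decompose $\weitz_{\ge(j+1)}(v_{i_j}) = v_{i_j} + (\weitz_{\ge(j+1)} - v_{i_j})^+$, use the stage-non-exposed hypothesis $\tau_j \le \sigma_{i_j}$ so that $c_{i_j} = \E[(v_{i_j}-\sigma_{i_j})^+]$ is entirely supported on $\{v_{i_j} > \tau_j\}$ and hence attaches to the $\phasetrans{j}$ event, collapse $v_{i_j} - (v_{i_j}-\sigma_{i_j})^+ = \kappa_{i_j}$, and unroll. Your side remark reconciling $\weitz_{\ge j}$ with $\weitz_{\ge(j+1)}$ via $\kappa_{i_j} \le v_{i_j}$ is correct and in fact makes explicit a step the paper leaves implicit.
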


\Cref{prop:nonExposedUtil} gives rise to our problem formulation of $\TPNOI_{i^*}$, which given an \indexthreshold{} $\order = (i_1, \cdots, i_k, \tau_1, \cdots, \tau_k)$, computes the utility of the associated stage-non-exposed two-phase policy. 
\subsubsection{Formulation with No Cost}
\begin{mdframed}
Tweaked $\PNOI_{i^*}$ (we abbreviate as $\TPNOI_{i^*}$ ): 
\begin{itemize}
    \item
    Box set: $S = \M \setminus \{i^*\}$. Let $\U_j$ denote the remaining available item set at the beginning of each step $j$. 
    \item 
    In each step $j$, the agent can either open (with no repetition) a box $i_j$ and specify a priori a threshold $\tau_j \leq \sigma_{i_j}$, or choose to stop the process. If the value $v_{i_j}$ of box $i_j$ is at most $\tau_j$, then the agent gets $0$ reward. Otherwise the agent gets $ \E_{v_{i_j} > \tau_j}[\kappa_j] + \E_{v_{i_j} > \tau_j}[(\weitz_{\U_{j+1}} - v_{i_j})^+]$ reward, and the agent has to stop the process for the next round. 
    \item 
    When the agent decides to stop, if none of the boxes $i_j$ they have opened have value larger than their specified threshold $\tau_j$, then they get final reward $\E[v_{i^*}]$. Otherwise they get nothing when they stop.  
\end{itemize}
\end{mdframed}

We will now formally analyze the relationship between the utility of $\order$ from problems $\PNOI_{i^*}$ and $\TPNOI_{i^*}$.  
We define $\class_{\TPNOI_{i^*}}$ as the class of nonadaptive policies, which can be parametrized by $\order = (i_1, \cdots, i_k, \tau_1, \cdots, \tau_k)$. A nonadaptive policy parametrized by $\order$ opens boxes $i_1, \cdots, i_k$ in sequential order, until it sees a value $v_{i_j}$ above $\tau_j$, in which case it claims the reward and stops. If none of the boxes among $i_1, \cdots, i_k$ have value above the threshold, then the nonadaptive policy gets final reward $\E[v_{i^*}]$.
We first prove that $\class_{\TPNOI_{i^*}}$ contains the optimal policy, verifying that $\class_{\TPNOI_{i^*}}$ is well defined. 
\begin{restatable}{claim}{claimClassTPNOI} \label{claim:classTPNOI}
    $\class_{\TPNOI_{i^*}}$ contains an optimal policy for $\TPNOI_{i^*}$. 
\end{restatable}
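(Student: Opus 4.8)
\textbf{Proof proposal for \Cref{claim:classTPNOI}.}

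The goal is to show that among all valid policies for the $\TPNOI_{i^*}$ process, there is an optimal one that is nonadaptive in the sense of $\class_{\TPNOI_{i^*}}$: it fixes a sequence of boxes $i_1, \dots, i_k$ and thresholds $\tau_1, \dots, \tau_k$ a priori, opens them in order, stops and claims the associated reward the first time $v_{i_j} > \tau_j$, and otherwise collects $\E[v_{i^*}]$ at the end. The plan is to start from an arbitrary optimal policy $\ALG^\star$ for $\TPNOI_{i^*}$ and argue that it can be made nonadaptive without loss. The key observation is that the structure of $\TPNOI_{i^*}$ already forces a very restricted shape on \emph{any} policy: by the rules of the process, as soon as a box $i_j$ with $v_{i_j} > \tau_j$ is opened the agent is compelled to stop on the next round, so the only genuine adaptivity left is (a) which box and threshold to pick next when all boxes opened so far had below-threshold values, and (b) whether to stop early. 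I would handle (b) first: stopping early before opening a below-threshold sequence can only forgo the chance of future positive rewards and the final $\E[v_{i^*}]$ term, and since all per-step rewards in $\TPNOI_{i^*}$ are nonnegative (the reward is either $0$ or $\E_{v_{i_j}>\tau_j}[\kappa_j] + \E_{v_{i_j}>\tau_j}[(\weitz_{\U_{j+1}}-v_{i_j})^+] \ge 0$, and the final reward $\E[v_{i^*}] \ge 0$), never stopping early (until all of $i_1,\dots,i_k$ are exhausted or a transition occurs) is weakly optimal. This lets us assume $\ALG^\star$ only stops when forced.

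For (a), the point is that the ``decision node'' the policy faces after opening $i_1, \dots, i_{j-1}$ and seeing all values below threshold is always the \emph{same} node regardless of the actual below-threshold values observed: the remaining box set $\U_j = \M \setminus \{i^*, i_1, \dots, i_{j-1}\}$ is determined, and nothing in the future reward of $\TPNOI_{i^*}$ — neither the reward upon transition, which is $\E_{v_{i_j}>\tau_j}[\kappa_j] + \E_{v_{i_j}>\tau_j}[(\weitz_{\U_{j+1}}-v_{i_j})^+]$, a quantity depending only on $i_j$, $\tau_j$, and $\U_{j+1}$, nor the final reward $\E[v_{i^*}]$ — depends on the realized below-threshold values. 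Formally, I would argue that the continuation value of $\ALG^\star$ from step $j$, conditioned on all of $v_{i_1}, \dots, v_{i_{j-1}}$ being below their thresholds, does not depend on the particular realization within that event (this is exactly the below-threshold-nonadaptive property, \Cref{def:belowThreshold}); hence an optimal policy can, without loss, commit in advance to the maximizing choice of $(i_j, \tau_j)$ at each such node. Iterating this commitment over $j = 1, 2, \dots$ produces a fixed index-threshold sequence $\order = (i_1, \dots, i_k, \tau_1, \dots, \tau_k)$ whose induced policy lies in $\class_{\TPNOI_{i^*}}$ and has utility at least that of $\ALG^\star$, hence is optimal.

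I would also note that the threshold constraint $\tau_j \le \sigma_{i_j}$ built into $\TPNOI_{i^*}$ is automatically respected, since we are only reshuffling the choices a policy is already allowed to make. The main obstacle — though it is more bookkeeping than a real difficulty — is making the ``same decision node'' argument fully rigorous: one has to set up the right induction on $|\U|$ (or on the step index) to show that the optimal continuation value is a well-defined function of $\U_j$ alone, carefully handling the forced-stop transitions and the early-stop option simultaneously, and invoking nonnegativity of rewards at the right moments. This mirrors the structure already established for two-phase policies in \Cref{sec:opt}, so I expect the argument to go through cleanly by analogy with the reasoning behind \Cref{thm:structure}, essentially using that $\TPNOI_{i^*}$ is a ``decost-ified'' reformulation that preserves the nonadaptive-below-threshold skeleton.
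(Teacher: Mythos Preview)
Your approach is essentially the same as the paper's: induction on $|\U|$, using the key observation that after a below-threshold outcome the continuation problem depends only on the remaining box set and not on the realized values. The paper's proof is precisely this, phrased very tersely.

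There is, however, one genuine slip in your handling of (b). You write that stopping early ``can only forgo the chance of future positive rewards and the final $\E[v_{i^*}]$ term,'' and conclude that never stopping early is weakly optimal. This is backwards: in $\TPNOI_{i^*}$, stopping before any transition \emph{secures} the final reward $\E[v_{i^*}]$ rather than forgoing it (see the last bullet of the $\TPNOI_{i^*}$ formulation). It can therefore be strictly optimal to stop immediately --- for instance, if $\E[v_{i^*}]$ exceeds every achievable transition reward. The claim ``never stop early'' is false in general.

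Fortunately this claim is also unnecessary. ``Stop'' is itself a deterministic, history-independent action (it corresponds to the nonadaptive policy with $k=0$, or more generally to truncating the index-threshold sequence). The paper simply treats ``first action is stop'' as a trivial base case of the induction rather than trying to rule it out. If you drop your (b) argument and instead let ``stop'' be one of the candidate deterministic first actions at each node, your induction on $|\U|$ goes through exactly as written and matches the paper's proof.
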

We then verify that $\order$ induces the same utility for both $\PNOI_{i^*}$  problem (as a parameter to two-phase policy) and $\TPNOI_{i^*}$ problem (as a parameter to nonadaptive policy).
\begin{restatable}{proposition}{claimUtilityEqTPNOI} \label{claim:utilityEqTPNOI}
   Given any stage-non-exposed two-phase policy parameterized by $\order = (i_1, \cdots, i_k, \tau_1, \cdots, \tau_k)$, then $U_{\PNOI_{i^*}}(\order) = U_{\TPNOI_{i^*}}(\order)$. 
\end{restatable}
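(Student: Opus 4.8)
The plan is to show directly that the stage-non-exposed two-phase policy parametrized by $\order$ and the $\TPNOI_{i^*}$-nonadaptive policy parametrized by $\order$ compute the same expected utility, by matching them term by term. The natural tool is \Cref{prop:nonExposedUtil}, which already gives a closed-form expression for the utility of a stage-non-exposed two-phase policy in terms of the phase-transition indicators $\phasetrans{j}$, namely
\begin{align*}
U_{\PNOI_{i^*}}(\order) = \sum_{j=1}^k \E\sq*{\phasetrans{j} \cdot \paren*{\E_{v_{i_j} > \tau_j}[\kappa_{i_j}] + \E_{v_{i_j} > \tau_j}[(\weitz_{\geq j+1}- v_{i_j})^+]}} + \E\sq*{\paren*{1 - \sum_{j=1}^k \phasetrans{j}} \cdot v_{i^*}}.
\end{align*}
(Here one has to be slightly careful whether the leftover Weitzman term is over $\U_{j}$ or $\U_{j+1}$ since box $i_j$ is already opened and revealed above its threshold; I would double-check that the statement of \Cref{prop:nonExposedUtil} and the box set in the $\TPNOI_{i^*}$ definition line up — both should use $\weitz_{\U_{j+1}}$, i.e. the remaining boxes after opening $i_j$.) On the other side, by definition the $\TPNOI_{i^*}$-nonadaptive policy opens $i_1, \dots, i_k$ in order, and at the first step $j$ where $v_{i_j} > \tau_j$ it collects reward $\E_{v_{i_j} > \tau_j}[\kappa_{i_j}] + \E_{v_{i_j} > \tau_j}[(\weitz_{\U_{j+1}} - v_{i_j})^+]$ and stops; if no such step occurs it collects $\E[v_{i^*}]$.

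The key observation is that the event ``$\phasetrans{j}$'' — the phase transition happens at step $j$ — is exactly the event that steps $1, \dots, j-1$ all had values below their thresholds and step $j$ has value above its threshold; and since the policy is below-threshold-nonadaptive, these events depend only on $v_{i_1}, \dots, v_{i_j}$, are disjoint across $j$, and have the same probabilities in both formulations (the two policies inspect the same boxes in the same order and use the same thresholds). So I would write $U_{\TPNOI_{i^*}}(\order)$ by conditioning on which step (if any) is the transition step: with probability $\Pr[\phasetrans{j}]$ the agent stops at step $j$ with the stated reward, and with probability $1 - \sum_j \Pr[\phasetrans{j}]$ the agent reaches the end and gets $\E[v_{i^*}]$. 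Matching this sum against the expression from \Cref{prop:nonExposedUtil} term by term gives the equality; the only thing to check is that the conditional expectation $\E[\,\cdot \mid \phasetrans{j}]$ of the reward at step $j$ equals $\E_{v_{i_j}>\tau_j}[\kappa_{i_j}] + \E_{v_{i_j}>\tau_j}[(\weitz_{\U_{j+1}}-v_{i_j})^+]$, which holds because conditioned on $\phasetrans{j}$ the value $v_{i_j}$ is distributed as $v_{i_j} \mid v_{i_j} > \tau_j$ (independence of the $v_i$'s makes the conditioning on earlier boxes being below threshold irrelevant to $v_{i_j}$), and the leftover Weitzman term only involves the independent values of boxes in $\U_{j+1}$.

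I do not expect a serious obstacle here — this proposition is essentially a bookkeeping lemma asserting that the two problem formulations are two descriptions of the same stochastic process. The mild care-points are: (i) making sure the index bookkeeping on $\weitz$ (whether $\geq j$ or $\geq j+1$, and whether $i^*$ is included) is consistent between \Cref{prop:nonExposedUtil}, \Cref{coro:thresholdUtilityEq}, and the $\TPNOI_{i^*}$ definition; (ii) handling the ``$v_{i_j} = \tau_j$ with positive probability'' boundary case consistently (the problem statement uses strict inequality $v_{i_j} > \tau_j$ for transition, so ties go to phase one / below-threshold, and I would keep that convention throughout); and (iii) invoking \Cref{claim:classTPNOI} so that it is legitimate to speak of ``the'' nonadaptive policy in $\class_{\TPNOI_{i^*}}$ parametrized by $\order$. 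Once these are pinned down, the proof is a one-paragraph term-by-term identification of the two expectation expressions.
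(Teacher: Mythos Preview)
Your proposal is correct and takes essentially the same approach as the paper: both invoke \Cref{prop:nonExposedUtil} to obtain a cost-free expression for $U_{\PNOI_{i^*}}(\order)$ and then observe that the definition of $U_{\TPNOI_{i^*}}(\order)$ yields the identical expression. The only cosmetic difference is that the paper matches the two utilities via their common \emph{recurrence} (same transition formula at each step $j$, same base case $\E[v_{i^*}]$ at step $k+1$), whereas you match them via the \emph{unrolled} closed form conditioned on which $\phasetrans{j}$ occurs; since the closed form in \Cref{prop:nonExposedUtil} is itself obtained by unrolling that recurrence, the two arguments are equivalent.
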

Finally, we combine threshold discretization from section~\ref{subsec:approxtwostage} and the equivalence of utility between $\TPNOI_{i^*}$ and $\PNOI_{i^*}$ in this section. 
\begin{restatable}{corollary}{claimApproxThresholdTPNOI}
\label{claim:approxThresholdTPNOI}
    There exists a stage-non-exposed two-phase policy parametrized by $\order = (i_1, \cdots, i_k, \tau_1, \cdots, \tau_k)$ where for all $j \in [k]$, $\tau_j \in \W_{L} = \{0, \eps \cdot \OPT, \cdots, \OPT\}$, such that 
    \begin{align*}
        U_{\TPNOI_{i^*}}(\order) = U_{\PNOI_{i^*}}(\order) \geq \OPT_{i^*} - \eps \cdot \OPT. 
    \end{align*}
\end{restatable}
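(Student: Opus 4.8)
The plan is to derive this corollary by composing three results already established: \Cref{cor:tau_is_min_sigma_and_opt}, which supplies an optimal two-phase policy whose thresholds satisfy $\tau_j \le \min\{\sigma_{i_j}, \OPT\}$; \Cref{claim:constantT}, which rounds the thresholds of an optimal two-phase policy onto the grid $\W_{L} = \{0, \eps\cdot\OPT, 2\eps\cdot\OPT, \dots, \OPT\}$ at an additive cost of at most $\eps\cdot\OPT$; and \Cref{claim:utilityEqTPNOI}, which says a stage-non-exposed $\order$ yields the same utility for $\PNOI_{i^*}$ and for $\TPNOI_{i^*}$. The only real work is to check that these steps compose without destroying the stage-non-exposed property.

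First I would fix an optimal two-phase policy for $\PNOI_{i^*}$, parametrized by $\order^* = (i_1, \dots, i_k, \tau_1, \dots, \tau_k)$, that simultaneously satisfies \Cref{coro:thresholdUtilityEq} and is stage-non-exposed; such a policy exists by combining \Cref{coro:thresholdUtilityEq}, \Cref{claim:TLessThanSigma}, and \Cref{claim:TLessThanOPT} (equivalently, \Cref{cor:tau_is_min_sigma_and_opt}, whose witnessing policy may be taken to satisfy \Cref{coro:thresholdUtilityEq}), and in particular $0 \le \tau_j \le \min\{\sigma_{i_j}, \OPT\}$ for every $j$. Next I would apply \Cref{claim:constantT} to $\order^*$: since each $\tau_j$ lies in $[0, \OPT]$, the construction in its proof rounds each $\tau_j$ \emph{down} to the nearest point $\tau_j'$ of $\W_{L}$, producing $\order' = (i_1, \dots, i_k, \tau_1', \dots, \tau_k')$ with $\tau_j' \le \tau_j$ for all $j$ and with $U_{\PNOI_{i^*}}(\order') \ge \OPT_{i^*} - \eps\cdot\OPT$. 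Because the rounding is monotone and leaves the index sequence $i_1, \dots, i_k$ untouched, $\tau_j' \le \tau_j \le \sigma_{i_j}$, so the two-phase policy parametrized by $\order'$ is still stage-non-exposed.

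Finally, since $\order'$ parametrizes a stage-non-exposed two-phase policy with backup box $i^*$ and thresholds in $\W_{L}$, \Cref{claim:utilityEqTPNOI} gives $U_{\TPNOI_{i^*}}(\order') = U_{\PNOI_{i^*}}(\order')$; chaining with the bound above yields $U_{\TPNOI_{i^*}}(\order') = U_{\PNOI_{i^*}}(\order') \ge \OPT_{i^*} - \eps\cdot\OPT$, which is the assertion with $\order = \order'$. The only point needing care — more a bookkeeping check than a genuine obstacle — is confirming that \Cref{claim:constantT}'s rounding is a round-\emph{down}, so that the inequality $\tau_j \le \sigma_{i_j}$ defining stage-non-exposedness is preserved, and that the starting policy can be chosen to satisfy the hypotheses of \Cref{coro:thresholdUtilityEq}, \Cref{claim:TLessThanSigma}, and \Cref{claim:TLessThanOPT} at once. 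Once $\order'$ is certified as a legitimate parameter of a stage-non-exposed two-phase policy, no new estimates beyond those in \Cref{claim:constantT} are required.
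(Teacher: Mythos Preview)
Your proposal is correct and follows essentially the same approach as the paper: start from an optimal stage-non-exposed two-phase policy (via \Cref{cor:tau_is_min_sigma_and_opt}), apply the round-down construction of \Cref{claim:constantT} to land the thresholds in $\W_L$ while preserving $\tau_j' \le \sigma_{i_j}$, and then invoke \Cref{claim:utilityEqTPNOI} for the equality $U_{\TPNOI_{i^*}}(\order') = U_{\PNOI_{i^*}}(\order')$. You are in fact more careful than the paper's own write-up in explicitly noting that the starting policy must satisfy \Cref{coro:thresholdUtilityEq} (the hypothesis of \Cref{claim:constantT}) and that the rounding is downward.
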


\subsection{Discretization (Reducing $\TPNOI_{i^*}$  to $\DTPNOI_{i^*}$)} \label{subsec:discretization}

Currently, it is still not extremely clear how we would reduce from the $\TPNOI_{i^*}$ to a stochastic program. We will first briefly describe (without proof) how we could modify $\TPNOI_{i^*}$ into a problem that still has the same optimal utility, but whose reward functions are myopic (which is required by the stochastic dynamic program formulation). Observing this new and more adaptive formulation of $\TPNOI_{i^*}$(we call it Adaptive $\TPNOI_{i^*})$ will help us decide which values we need to discretize. 

\begin{mdframed}
Adaptive $\TPNOI_{i^*}$
\begin{itemize}
    \item
    Box set: $\U = \M \setminus \{i^*\}$. Let $\U_j$ denote the remaining available item set at the beginning of each step $j$. 
    \item 
    \textbf{During phase one}, in each step $j$, the agent can either open (with no repetition) a box $i_j$ and specify a priori a threshold $\tau_j \leq \sigma_{i_j}$, or choose to stop the process. If the value $v_{i_j}$ of box $i_j$ is at most $\tau_j$, then the agent gets $0$ reward. Otherwise \textbf{the agent gets $\kappa_{i_j}$ reward, update their internal value $V_j$ to $v_{i_j}$, and phase two starts.}
    \item 
    \textbf{During phase two, in each step $j$ the agent can open a box $i_j$ and get reward $(\kappa_{i_j} - v_{i_j})^+$. The agent update their internal value $V_{j}$ to $\max(\kappa_{i_j}, V_{j-1})$.}
    \item 
    When the agent decides to stop, if none of the boxes $i_j$ they have opened have value larger than their specified threshold $T_i$, then they get final reward $\E[v_{i^*}]$. Otherwise they get nothing when they stop.  
\end{itemize}
\end{mdframed}
In order for Adaptive $\TPNOI_{i^*}$ to be converted to a stochastic dynamic program (with format and constraints specified in~\Cref{subsec:FLX}, we need $V_j$ to have constant support and the number of choices of threshold $\tau_j$ for each box to be $O(n)$. We  have already seen in \Cref{subsec:approxtwostage} and \Cref{claim:approxThresholdTPNOI} that we could assume the thresholds are supported on $\W_{L} = \{0, \eps \cdot \OPT, 2 \cdot \eps \cdot \OPT, \cdots, \OPT\}$ with only additive $\eps \cdot \OPT$ loss to the attainable utility. The remaining challenge is to discretize the internal state $V_j$ onto a constant sized support. When the value $V_j$ is updated, it could either be updated to the value $v_{i_j}$ of a box $i_j$ during phase transition, or the value of $\kappa_{i_j}$ during phase two. Hence we need to discretize both of these quantities. 

Normally, for a problem without cost such as Probemax, the optimal expected utility from the agent is either above, or within a constant factor to $\E[\max_i v_i]$, the expected maximum value from the elements. In this case, the standard way to discretize an element value $v_i$ is to truncate the value space at $\E[\max_i v_i]/\eps$ (the truncation at $\E[\max_i v_i]/\eps$ is essential to ensure that the probability that the value of any element is above the truncated upper limit is at most $\eps$), and then discretize the values into increments of $\eps \cdot \E[\max_{i} v_i]$. Since the optimal agent utility is close to $\E[\max_{i} v_i]$, this rounding only affects the agent utility by $\eps$ factor. 

This method indeed works for discretizing our $\kappa_i$s. Since Weitzman's policy is a valid policy for the Pandora's box with nonobligatory inspection problem, it must be the case $\E[\max_{i} \kappa_{i_j}] \leq \OPT$. So the usual truncation plus discretization scheme works. 

However, for approximating $v_i$s as internal state values, the above scheme no longer works, since there is a potentially super constant gap between $\OPT$ and $\E[\max_i v_i]$. Discretizing the values into multiples of $\eps \cdot \E[\max_{i} v_i]$ is too coarse to generate meaningful approximation guarantees. On the other hand, discretizing $v_i$ into multiples of $\eps \cdot \OPT$ will yield good approximation for the agent utility, but the resulting support will have a super constant size. 

We resolve this issue by observing that $V_j := v_{i_j}$ actually only occurs once during phase change. Moreover, notice that the only effect of $v_{i_j}$ as an internal state is to compute the value of $\E_{v_{i_j} > \tau_j}[(\weitz_{\U_{j+1}} - v_{i_j})^+]$. Hence our discretization of $v_{i_j}$ (let's call it $\widetilde{v_{i_j}}$) doesn't need to be close to $v_{i_j}$ in \textit{value} at all. We just need the value of $\E_{v_{i_j} > \tau_j}[(\weitz_{\U_{j+1}} - \widetilde{v_{i_j}})^+]$ to be close to its original value. We further observe that although $v_{i_j}$ is hard to bound, the actual quantity that we need $\E[(\weitz_{\U_{j+1}} - \widetilde{v_{i_j}})^+]$ is bounded by $\OPT$. 

We will prove that given a fixed order of boxes $(i_1, \cdots, i_k)$, we can discretize $v_{i_j}$ onto a support $\W$, which contains multiples of $\eps^2 \cdot \OPT$ and is of size $O(\poly(1/\eps))$, such that $\{\E[(\weitz_{\U_{j+1}} - w)]\}_{w \in \W}$ covers $[0, \OPT]$ with granularity $\leq \eps^2 \cdot \OPT$. So \textit{given this fixed order of inspection}, discretizing $v_{i_j}$ onto $W$ will preserve the agent's utility reasonably. 

\begin{restatable}{proposition}{claimSupportExistence}\label{claim:supportExistence}
For any \indexthreshold{} $\order = (i_1, \cdots, i_k, \tau_1, \cdots \tau_j)$, there exists a support $\W$ of size $O(\poly(1/\eps))$ such that 
\begin{align*}
    \W \subseteq \set*{c \cdot \eps^2 \cdot \OPT: c \leq \frac{V_U}{\eps^2 \cdot \OPT}} \cup \{\infty\}, 
\end{align*}
where $V_U = \frac{\E[\max_i v_i]}{\eps}$, and for all $j \in [k]$ and for all $w \in R^+$, there exists a $w' \in \W, w' > w$ such that 
$\E[(\weitz_{\geq j} - w)^+] - \E[(\weitz_{\geq j} - w')^+] \leq \eps (1 - 2\eps) \cdot \OPT$. 
\end{restatable}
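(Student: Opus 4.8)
\textbf{Proof proposal for \Cref{claim:supportExistence}.}

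The plan is to construct $\W$ by discretizing the \emph{output range} of the function $w \mapsto \E[(\weitz_{\geq j} - w)^+]$ rather than its input, exploiting the fact that this output lies in $[0,\OPT]$ for every relevant $w$. Fix the order $(i_1, \ldots, i_k)$ once and for all; this fixes the sets $\U_j = \M \setminus \{i_1, \ldots, i_{j-1}\}$ and hence the random variables $\weitz_{\geq j} = \max_{w \in \U_j} \kappa_w$. Note first that $\E[\weitz_{\geq j}] = \E[\max_{w \in \U_j}\kappa_w] \le \E[\max_i \kappa_i] \le \OPT$, since Weitzman's policy on $\U_j$ is a feasible policy for $\PNOI$ and attains exactly $\E[\weitz_{\geq j}]$. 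So for any $w \ge 0$ we have $0 \le \E[(\weitz_{\geq j} - w)^+] \le \E[\weitz_{\geq j}] \le \OPT$. The key monotonicity observation is that for each fixed $j$, the map $\phi_j(w) := \E[(\weitz_{\geq j} - w)^+]$ is continuous, nonincreasing, $1$-Lipschitz in $w$ (since $(a-w)^+$ is $1$-Lipschitz in $w$ pointwise), with $\phi_j(0) \le \OPT$ and $\phi_j(w) \to 0$ as $w \to \infty$.

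Here is how I would build $\W$. Consider the target grid of values $\{0, \eps^2\OPT, 2\eps^2\OPT, \ldots, \OPT\}$ in the output space, which has $O(1/\eps^2)$ points. For each $j \in [k]$ and each target output level $\ell = m \cdot \eps^2 \OPT$, let $w_{j,\ell}$ be a threshold with $\phi_j(w_{j,\ell})$ just below $\ell$ --- more precisely, the smallest multiple of $\eps^2\OPT$ that is $\ge$ the value $\inf\{w : \phi_j(w) \le \ell\}$ (and $\infty$ if no finite $w$ achieves it). Because $\phi_j$ is $1$-Lipschitz and nonincreasing, rounding the threshold up to the next multiple of $\eps^2\OPT$ decreases $\phi_j$ by at most $\eps^2\OPT$, so the collection $\{\phi_j(w_{j,\ell})\}_\ell$ is $\eps^2\OPT$-dense in $[0, \phi_j(0)] \subseteq [0,\OPT]$. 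Then set $\W = \{\infty\} \cup \bigcup_{j \in [k]} \bigcup_{\ell} \{w_{j,\ell}\}$, after also truncating: any $w_{j,\ell} > V_U = \E[\max_i v_i]/\eps$ can be replaced by $\infty$, since for $w \ge V_U$ we have $\phi_j(w) \le \E[\weitz_{\geq j}\cdot \indic{\weitz_{\geq j} > V_U}] \le \E[\max_i v_i \cdot \indic{\max_i v_i > V_U}]$, and one checks this residual is at most $\eps \cdot \E[\max_i v_i] \le \eps \cdot \OPT$ (or with a slightly sharper tail/Markov argument, within the stated $\eps(1-2\eps)\OPT$ slack) --- so $\infty$ is an adequate surrogate for all such large thresholds. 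The resulting $\W$ is a subset of multiples of $\eps^2\OPT$ below $V_U$, plus $\{\infty\}$; its size is $O(k/\eps^2)$ a priori, but since the relevant quantity $\phi_j(w)$ only depends on $w$ through a single $1$-Lipschitz nonincreasing function per $j$, and we only ever need $O(1/\eps^2)$ distinct output levels, the union over $j$ can be collapsed: actually we just need, for the whole family $\{\phi_j\}_{j}$ simultaneously, a common $O(\poly(1/\eps))$-size set of thresholds that is output-dense for each $\phi_j$ --- and this is where I would argue more carefully (see obstacle below) that $O(\poly(1/\eps))$ suffices independent of $k$.

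Given such a $\W$, verifying the claimed property is immediate: fix $j$ and $w \in \R^+$. If $\phi_j(w) \le \eps^2\OPT$ then $w' = \infty \in \W$ works since $\phi_j(\infty) = 0$ and the gap is $\le \eps^2\OPT \le \eps(1-2\eps)\OPT$ for $\eps$ small. Otherwise pick the target level $\ell$ just below $\phi_j(w)$ (within $\eps^2\OPT$), and take $w' = w_{j,\ell} \in \W$; by construction $w' \ge \inf\{u : \phi_j(u) \le \ell\}$, hence $\phi_j(w') \le \ell \le \phi_j(w)$, giving $w' $ with $\phi_j(w) - \phi_j(w') \le \phi_j(w) - (\ell - \eps^2\OPT) \le 2\eps^2\OPT \le \eps(1-2\eps)\OPT$ for $\eps$ below an absolute constant. (One also needs $w' > w$; this follows since $\phi_j$ is nonincreasing and $\phi_j(w') \le \phi_j(w)$ forces $w' \ge w$, and a strict version can be arranged by taking $w'$ a hair larger, which only helps the inequality.)

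The main obstacle I anticipate is ensuring $|\W| = O(\poly(1/\eps))$ rather than $O(k/\eps^2)$: naively one threshold set per $j$ blows up with $k$. The fix is to observe that all the functions $\phi_j$ are "nested" in a useful sense --- as $j$ increases, $\U_j$ shrinks, so $\weitz_{\geq j}$ is stochastically nonincreasing and hence $\phi_j(w) \ge \phi_{j+1}(w)$ pointwise --- so a single grid of thresholds chosen finely enough for $\phi_1$ (the largest function, with the widest relevant $w$-range) is automatically output-dense for every $\phi_j$, because shrinking the function only makes a fixed threshold grid \emph{denser} in output space, not coarser. Thus it suffices to discretize with respect to $\phi_1$ alone, giving $O(1/\eps^2)$ thresholds plus $\{\infty\}$, all bounded by $V_U$ after truncation --- independent of $k$. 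The remaining bookkeeping is the tail bound controlling the truncation at $V_U$, which is the standard Markov-type estimate $\Pr[\max_i v_i > V_U] \le \eps$ combined with the observation that the truncated contribution to $\phi_j$ is at most $\E[\max_i v_i \cdot \indic{\max_i v_i > V_U}]$; bounding this by the stated slack requires only a routine calculation relating $\E[\max_i v_i]$ and the chosen $V_U = \E[\max_i v_i]/\eps$, together with $\E[\max_i v_i] \le \OPT$.
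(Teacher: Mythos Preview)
Your approach is genuinely different from the paper's and, once patched, arguably cleaner. The paper does \emph{not} build a single support for $\phi_1$ and then propagate it downward. Instead it (i) partitions the index set $[k]$ into $O(1/\eps^2)$ consecutive buckets $B_1,\dots,B_l$ so that $\E[\weitz_{\geq f(B_i)}]-\E[\weitz_{\geq l(B_i)}]\le \eps^2\OPT$ within each bucket, (ii) for each bucket builds a support $\W_i$ of size $O(1/\eps)$ tailored to $\phi_{f(B_i)}$, and (iii) uses a separate lemma (their Claim~4.9, proved via the inequality $\E[\max(X,Y)]-\E[Y]\ge \E[(\max(X,Y)-T)^+]-\E[(Y-T)^+]$) to transfer the gap bound from $\phi_{f(B_i)}$ to every $\phi_r$ with $r$ in that bucket, at the cost of an extra $\eps^2\OPT$. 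The final support is $\bigcup_i \W_i \cup \W_L \cup\{\infty\}$, of size $O(1/\eps^{4})$. Your idea of doing the whole thing for $j=1$ and arguing that the gaps can only shrink as $j$ grows avoids the bucketing entirely and gives a smaller support; what it buys is simplicity, while what the paper's approach buys is that each step is very local and mechanical.

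That said, there are two concrete gaps in your proposal. First, your nesting argument is not valid as written. You deduce from stochastic monotonicity of $\weitz_{\geq j}$ only that $\phi_j\le \phi_1$ pointwise, and then assert that ``shrinking the function only makes a fixed threshold grid denser in output space.'' That implication is false for general pairs of nonincreasing functions with $\phi_j\le\phi_1$: one can have $\phi_j(w)=\phi_1(w)$ at one grid point and $\phi_j(w')\ll\phi_1(w')$ at the next, making the $\phi_j$-gap larger. The fix is to use the stronger fact you already noted---that $\weitz_{\geq 1}\ge \weitz_{\geq j}$ almost surely (pointwise on the common probability space, since $\U_j\subseteq\U_1$). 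Then for $w<w'$,
\[
\phi_1(w)-\phi_1(w')=\E\bigl[\min\bigl((\weitz_{\geq 1}-w)^+,\,w'-w\bigr)\bigr]\ \ge\ \E\bigl[\min\bigl((\weitz_{\geq j}-w)^+,\,w'-w\bigr)\bigr]=\phi_j(w)-\phi_j(w'),
\]
since $x\mapsto \min((x-w)^+,w'-w)$ is nondecreasing. That is the correct monotonicity statement you need.

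Second, and more seriously, the inequality $\E[\max_i v_i]\le \OPT$ that you invoke at the end is false---it is precisely the reverse, $\OPT\le \E[\max_i v_i]$, and the ratio can be as large as $n$ (this is exactly why discretizing $v$ is hard in this problem; see the paper's Claim~4.12). So your proposed ``routine'' bound on $\E[\max_i v_i\cdot\indic{\max_i v_i>V_U}]$ by $O(\eps\cdot\OPT)$ does not go through. In the paper this tail is not controlled inside the present proposition at all; the case $v_{i_j}>V_U$ is instead carved out and handled separately in the downstream utility comparison (their Claim~4.11) using only $\Pr[\max_i v_i>V_U]\le\eps$. Your write-up should do the same rather than trying to absorb the tail into the $\eps(1-2\eps)\OPT$ slack.
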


We will construct a $\W$ that satisfies ~\Cref{claim:supportExistence} in two steps.

\paragraph{Step one: partition the order into constant number of consecutive buckets. } For a consecutive bucket $B$ of boxes, we will define $f(B)$ and $l(B)$ to be the position of the first and last box in $B$ in a pre-specified ordered sequence of boxes. Now, given a particular \indexthreshold{} $\order = (i_1, \cdots, i_{k}, \tau_1, \cdots, \tau_{k})$, we will partition the ordered sequence of boxes into consecutive buckets $B_1, \cdots B_l$ such that each bucket $B_i$ contains the maximum number of boxes where it's still the case that $\E[\weitz_{\geq f(B_i)}] - \E[\weitz_{\geq l(B_i)}] \leq \eps^2 \cdot \OPT$. (Namely, $\Delta_{i} = \E[\weitz_{\geq f(B_i)}] - \E[\weitz_{\geq f(B_{i+1})}] > \eps^2 \cdot \OPT$). Notice that $l \cdot \eps^2 \cdot \OPT < \sum_{i=1}^{l} \Delta_{i} = \weitz_{\geq 1} \leq \OPT$. Thus $l$ is at most $\eps^2$. 

\paragraph{Step two: find constant support for $v_{i_j}$s for each bucket.} Next, for each bucket $B_i$, we will find a constant size support $0 = w_{i0} < w_{i1} < \cdots < w_{ip} = \E[\max_{i} v_i]/ \eps = V_{U}$ such that given $w_{i(j+1)}$, $w_{ij}$ is the smallest multiple of $\eps^2 \cdot \OPT$ such that
\begin{align*}
    \E[(\weitz_{\geq f(B_{i})} - w_{ij})^+] - \E[(\weitz_{\geq f(B_{i})} - w_{i(j+1)})^+] \leq \eps (1 - 3\eps) \cdot \OPT. 
\end{align*}
We will use $\W_{i}$ to denote the support set for bucket $i$, namely, $\W_{i} = \{w_{ij} : 0 \leq j \leq p\}$.

\begin{restatable}{claim} {claimVSupportWeitzDiff}\label{claim:vSupportWeitzDiff}
    For any $i_r$ in bucket $B_i$ (namely, when $f(B_i) \leq r \leq l(B_i)$), then 
    $$\E[(\weitz_{\geq r} - w_{ij})^+] - \E[(\weitz_{\geq r} - w_{i(j+1)})^+] \leq \eps (1 - 2\eps) \cdot \OPT.$$
\end{restatable}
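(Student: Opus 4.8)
The plan is to leverage the bucketing construction from Step one, which guarantees that within a single bucket $B_i$, the quantity $\E[\weitz_{\geq \ell}]$ changes by at most $\eps^2 \cdot \OPT$ as $\ell$ ranges over the positions in $B_i$. The key observation is that the function $w \mapsto \E[(\weitz_{\geq r} - w)^+]$ for a fixed index-threshold position $r$ is closely controlled by $\E[\weitz_{\geq r}]$ itself, so moving from the reference position $f(B_i)$ (used to define the support $\W_i$) to an arbitrary position $r$ inside the same bucket can only perturb the relevant differences by a small additive amount. Concretely, I would first record the elementary monotonicity fact that for $r \geq f(B_i)$ we have $\U_r \subseteq \U_{f(B_i)}$, hence $\weitz_{\geq r} \leq \weitz_{\geq f(B_i)}$ pointwise, and therefore $(\weitz_{\geq r} - w)^+ \leq (\weitz_{\geq f(B_i)} - w)^+$ for every $w$.

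Next I would quantify the gap in the other direction. For any random variables $X \leq Y$ and any constant $w \geq 0$, one has $0 \leq (Y - w)^+ - (X - w)^+ \leq Y - X$; taking expectations, $0 \leq \E[(Y-w)^+] - \E[(X-w)^+] \leq \E[Y] - \E[X]$. Applying this with $Y = \weitz_{\geq f(B_i)}$ and $X = \weitz_{\geq r}$ gives
\begin{align*}
0 \leq \E[(\weitz_{\geq f(B_i)} - w)^+] - \E[(\weitz_{\geq r} - w)^+] \leq \E[\weitz_{\geq f(B_i)}] - \E[\weitz_{\geq r}] \leq \eps^2 \cdot \OPT,
\end{align*}
where the last inequality is exactly the defining property of the bucket $B_i$ (since $f(B_i) \leq r \leq l(B_i)$). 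So for each fixed $w$, the value of $\E[(\weitz_{\geq r} - w)^+]$ lies within an additive $\eps^2 \cdot \OPT$ of $\E[(\weitz_{\geq f(B_i)} - w)^+]$.

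Finally I would combine these estimates. Writing the target difference as a telescoping comparison through the reference position,
\begin{align*}
\E[(\weitz_{\geq r} - w_{ij})^+] - \E[(\weitz_{\geq r} - w_{i(j+1)})^+]
&\leq \E[(\weitz_{\geq f(B_i)} - w_{ij})^+] - \Bigl(\E[(\weitz_{\geq f(B_i)} - w_{i(j+1)})^+] - \eps^2 \cdot \OPT\Bigr)\\
&= \Bigl(\E[(\weitz_{\geq f(B_i)} - w_{ij})^+] - \E[(\weitz_{\geq f(B_i)} - w_{i(j+1)})^+]\Bigr) + \eps^2 \cdot \OPT\\
&\leq \eps(1 - 3\eps)\cdot \OPT + \eps^2 \cdot \OPT = \eps(1 - 2\eps)\cdot \OPT,
\end{align*}
where the first inequality uses the pointwise upper bound on the $w_{ij}$ term and the lower-direction estimate on the $w_{i(j+1)}$ term, and the second inequality is the defining property of the support points $w_{ij}$ from Step two. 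This is exactly the claimed bound. I do not expect a serious obstacle here; the only point requiring mild care is making sure the perturbation is applied in the correct direction to each of the two subtracted terms (upper bound the positive term at position $r$ by its value at $f(B_i)$, lower bound the subtracted term at $r$ using the bucket estimate), so that the errors add rather than potentially cancel in an unhelpful way — and in the worst case they add, giving the stated $\eps^2 \cdot \OPT$ slack that was deliberately reserved in Step two by using $\eps(1-3\eps)$ there instead of $\eps(1-2\eps)$.
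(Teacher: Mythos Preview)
Your proposal is correct and follows essentially the same approach as the paper: upper-bound the $w_{ij}$ term by pointwise monotonicity $\weitz_{\geq r}\le\weitz_{\geq f(B_i)}$, lower-bound the $w_{i(j+1)}$ term using the bucket property, and combine with the $\eps(1-3\eps)\cdot\OPT$ gap from Step two. The paper packages your elementary inequality $0\le (Y-w)^+-(X-w)^+\le Y-X$ (for $X\le Y$) as a separate auxiliary claim, but the content is identical.
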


The above claim enables us to prove \Cref{claim:supportExistence} by taking a union over the support we constructed for each bucket. The resulting support is denoted by $\W$. We now construct the Discrete $\TPNOI_{i^*}$ problem by rounding $\kappa$ \emph{down} to the next multiple of $\eps^2 \cdot \OPT$ and rounding $v$ \emph{up} to the nearest support in $\W$ when computing the agent's reward. 
\begin{mdframed}
Discrete $\TPNOI_{i^*}$ (we abbreviate as $\DTPNOI_{i^*}$): 
\begin{itemize}
    \item 
    Item set: $\U = \M \setminus \{i^*\}$. Let $\U_j$ denote the remaining available item set at the beginning of each step $j$. 
    \item 
    We will create a mapping for the continuous values to discrete values in $\W$: 
    \begin{itemize}
        \item $\widetilde{\kappa_i} = \floor{\frac{\kappa_i}{\eps^2 \cdot \OPT}} \cdot \eps^2 \cdot \OPT$ 
        \item $\widetilde{\weitz_{\U_i}} = \max_{w \in \U_i}\widetilde{\kappa_w}$
        \item 
        $\widetilde{v_i}$ is equal to the smallest support in $\W$ that has value at least $v_i$ (if $v_i > V_U$, then $\widetilde{v_i} = \infty$). 
    \end{itemize}
    \item 
    In each step $j$, the agent can either open (with no repetition) a box $i_j$ and specify a priori a threshold $\tau_j \in \W_{L}$ such that $\tau_j \leq \sigma_{i_j}$, or choose to stop the process. If the value $v_{i_j}$ of box $i_j$ is at most $\tau_j$, then the agent gets $0$ reward. Otherwise the agent gets $\E_{v_{i_j} > \tau_j}[(\widetilde{\weitz_{\U_{j+1}}} - \widetilde{v_{i_j}})^+] + \E_{v_{i_j} > \tau_j}[\widetilde{\kappa_{i_j}}]$ reward, and the agent has to stop the process for the next round.  
    \item 
    When the agent decides to stop, if none of the boxes $i_j$ they have opened have value larger than their specified threshold $\tau_j$, then they get final reward $\E[v_{i^*}]$. Otherwise they get nothing when they stop.  
\end{itemize}
\end{mdframed}
The exact same argument as in \Cref{claim:classTPNOI} shows that the optimal strategy for the $\DTPNOI_{i^*}$ problem is nonadaptive. Thus we can define $\class_{\DTPNOI_{i^*}}$ as a subset of $\class_{\TPNOI_{i^*}}$ (which only allows thresholds to be in $\W_{L} = \{0, \eps \cdot \OPT, \cdots, \OPT\}$).
It is clear that given a nonadaptive strategy $\ALG$ parametrized with \indexthreshold{} $\order \in \class_{\DTPNOI_{i^*}}$, $\ALG$ gets more expected reward from $\TPNOI_{i^*}$  compared to $\DTPNOI_{i^*}$ (because we round the positive terms, namely $\kappa_i$, downward and we round the negative terms, namely $v_i$, upwards). Now we will prove that the reward $\ALG$ gets from $\DTPNOI_{i^*}$ is within an additive $\eps \cdot \OPT$ away from the reward $\ALG$ gets from $\TPNOI_{i^*}$ . 
\begin{restatable}{proposition}{claimDTPNOI} \label{claim:DTPNOI}
    Given any nonadaptive policy $\ALG$ parametrized with $\order = (i_1, \cdots, i_k, \tau_1, \cdots, \tau_k)$. the expected reward $\ALG$ gets from $\DTPNOI_{i^*}$ is at least the expected reward $\ALG$ gets from $\TPNOI_{i^*}$ minus $2 \eps \cdot OPT$. Formally,
    \begin{align*}
        U_{\DTPNOI_{i^*}}(\order) \geq U_{\TPNOI_{i^*}}(\order) - 2\eps \cdot \OPT. 
    \end{align*}
\end{restatable}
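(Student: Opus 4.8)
The plan is to localize the loss to individual steps and then sum. Since $\ALG$ is below-threshold-nonadaptive, in both $\TPNOI_{i^*}$ and $\DTPNOI_{i^*}$ its expected reward decomposes as $\sum_{j=1}^{k}\Pr\sq*{\phasetrans{j}}\cdot(\text{step-}j\text{ reward collected upon transition})+\Pr\sq*{\text{no transition}}\cdot\E\sq*{v_{i^*}}$, where the step-$j$ reward is a deterministic number once we condition on $v_{i_j}>\tau_j$. The final term $\Pr\sq*{\text{no transition}}\cdot\E\sq*{v_{i^*}}$ is identical in the two problems (discretization does not change the laws of the $v_{i_j}$), so it cancels, and it remains to bound the per-step gap
\[
D_j:=\E_{v_{i_j}>\tau_j}\sq*{\kappa_{i_j}+(\weitz_{\U_{j+1}}-v_{i_j})^+}-\E_{v_{i_j}>\tau_j}\sq*{\widetilde{\kappa_{i_j}}+(\widetilde{\weitz_{\U_{j+1}}}-\widetilde{v_{i_j}})^+}.
\]
Because at most one phase transition can occur, the events $\phasetrans{j}$ are pairwise disjoint, so $\sum_j\Pr\sq*{\phasetrans{j}}\le 1$; hence it suffices to show $D_j\le 2\eps\cdot\OPT$ for every $j$ (I expect in fact $D_j\le\eps\cdot\OPT$), after which $U_{\TPNOI_{i^*}}(\order)-U_{\DTPNOI_{i^*}}(\order)=\sum_j\Pr\sq*{\phasetrans{j}}D_j\le\max_j D_j$.

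Next I would split $D_j=A_j+B_j+C_j$ with $A_j:=\E_{v_{i_j}>\tau_j}\sq*{\kappa_{i_j}-\widetilde{\kappa_{i_j}}}$, $B_j:=\E_{v_{i_j}>\tau_j}\sq*{(\weitz_{\U_{j+1}}-\widetilde{v_{i_j}})^+-(\widetilde{\weitz_{\U_{j+1}}}-\widetilde{v_{i_j}})^+}$, and $C_j:=\E_{v_{i_j}>\tau_j}\sq*{(\weitz_{\U_{j+1}}-v_{i_j})^+-(\weitz_{\U_{j+1}}-\widetilde{v_{i_j}})^+}$; this telescopes to $D_j$, and each summand is nonnegative since $\kappa$ is rounded down, $\weitz\ge\widetilde\weitz$ pointwise, and $v\le\widetilde v$. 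For $A_j$: $\widetilde{\kappa_{i_j}}=\floor*{\kappa_{i_j}/(\eps^2\OPT)}\cdot\eps^2\OPT$ differs from $\kappa_{i_j}$ by less than $\eps^2\OPT$ pointwise, so $A_j<\eps^2\cdot\OPT$. For $B_j$: for every box $w$ we have $\widetilde{\kappa_w}\le\kappa_w<\widetilde{\kappa_w}+\eps^2\OPT$, hence $0\le\weitz_{\U_{j+1}}-\widetilde{\weitz_{\U_{j+1}}}<\eps^2\OPT$ pointwise; since $x\mapsto(x-\widetilde{v_{i_j}})^+$ is $1$-Lipschitz and $\weitz_{\U_{j+1}}$ is independent of $v_{i_j}$, $B_j\le\E\sq*{\weitz_{\U_{j+1}}-\widetilde{\weitz_{\U_{j+1}}}}<\eps^2\cdot\OPT$.

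The substantive piece is $C_j$, and this is exactly where \Cref{claim:supportExistence} is needed, because the super-constant gap between $\OPT$ and $\E\sq*{\max_i v_i}$ rules out any bound of the form ``$\widetilde{v_{i_j}}$ is close to $v_{i_j}$''. Fixing a realization $v_{i_j}=v$, \Cref{claim:supportExistence} (applied to the relevant Weitzman set $\U_{j+1}$) supplies some $w'\in\W$ with $w'>v$ and $\E\sq*{(\weitz_{\U_{j+1}}-v)^+}-\E\sq*{(\weitz_{\U_{j+1}}-w')^+}\le\eps(1-2\eps)\cdot\OPT$; since $\widetilde v$ is by definition the smallest element of $\W$ that is at least $v$ and $w'\in\W$ satisfies $w'>v$, minimality forces $\widetilde v\le w'$, so $(\weitz_{\U_{j+1}}-\widetilde v)^+\ge(\weitz_{\U_{j+1}}-w')^+$ pointwise and therefore $\E\sq*{(\weitz_{\U_{j+1}}-v)^+}-\E\sq*{(\weitz_{\U_{j+1}}-\widetilde v)^+}\le\eps(1-2\eps)\cdot\OPT$ (the case $v>V_U$, where $\widetilde v=\infty$, is covered since $\infty\in\W$ and the second term is then $0$). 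Averaging over $v$ drawn from the law of $v_{i_j}$ conditioned on $v_{i_j}>\tau_j$, using that the expectation in $\weitz_{\U_{j+1}}$ is taken over $\{v_w:w\in\U_{j+1}\}$ independently of $v$, yields $C_j\le\eps(1-2\eps)\cdot\OPT$. Combining, $D_j\le 2\eps^2\OPT+\eps(1-2\eps)\OPT=\eps\cdot\OPT\le 2\eps\cdot\OPT$, and summing against $\sum_j\Pr\sq*{\phasetrans{j}}\le 1$ finishes the proof. The crux — and the only delicate point — is carrying out the $C_j$ bound at the level of a single realization of $v_{i_j}$: one must verify that rounding $v$ \emph{up} to $\widetilde v$ can only \emph{increase} the leftover-Weitzman term (the $\widetilde v\le w'$ step), handle the $\infty$/truncation case, and only then interchange with the outer conditional expectation; the remaining estimates on $A_j$ and $B_j$ are routine.
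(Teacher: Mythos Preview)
Your argument is correct and somewhat cleaner than the paper's. Both proofs rest on the same three estimates (the $\eps^2\OPT$ rounding loss on $\kappa$, the $\eps^2\OPT$ loss on $\widetilde\weitz$, and the \Cref{claim:supportExistence}/\Cref{claim:vSupportWeitzDiff} bound on rounding $v$ up), but the organization differs. The paper runs an induction on $j$, carrying an extra truncation term $\Pr[\max_{j'\in\U_j} v_{i_{j'}}>V_U]\cdot\OPT$ through the recurrence and bounding it by $\eps\cdot\OPT$ at the end via Markov; this is where the second $\eps\cdot\OPT$ in the stated $2\eps\cdot\OPT$ comes from. You instead unroll the recurrence up front (since the $\phasetrans{j}$ are disjoint), bound each $D_j$ separately, and use $\sum_j\Pr[\phasetrans{j}]\le 1$; this is more transparent and yields the sharper $\eps\cdot\OPT$.

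One caveat worth flagging: your handling of the case $v>V_U$ appeals to \Cref{claim:supportExistence} as stated (``for all $w\in\R^+$''), with $w'=\infty$. Taken at face value that is fine, but the paper's own proof of \Cref{claim:supportExistence} only verifies the gap between adjacent finite support points via \Cref{claim:vSupportWeitzDiff}, and in the proof of \Cref{claim:DTPNOI} the paper deliberately treats $v_{i_j}>V_U$ separately rather than citing \Cref{claim:supportExistence} there. If you prefer not to lean on that edge of the proposition, your framework handles it just as easily: bound the $v>V_U$ contribution to $C_j$ by $\E[\weitz_{\U_{j+1}}]\le\OPT$, multiply by $\Pr[\phasetrans{j}]\cdot\Pr[v_{i_j}>V_U\mid v_{i_j}>\tau_j]$, and observe that summing over $j$ gives at most $\Pr[\max_i v_i>V_U]\cdot\OPT\le\eps\cdot\OPT$, recovering the paper's $2\eps\cdot\OPT$.
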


\begin{restatable}{corollary}{claimDTPNOIopt} \label{claim:DTPNOIopt}
There exists a constant-size support $\W$ such that the optimal expected utility from $\DTPNOI_{i^*}$ is at least $\OPT_{i^*} - 3\eps \cdot \OPT$
\end{restatable}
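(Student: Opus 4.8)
The plan is to derive \Cref{claim:DTPNOIopt} as an immediate consequence of the chain of reductions already established in this section, by composing the additive losses incurred at each stage. The target statement asserts that $\OPT_{\DTPNOI_{i^*}} \geq \OPT_{i^*} - 3\eps\cdot\OPT$ for an appropriate choice of the constant-size support $\W$; since $\OPT_{\DTPNOI_{i^*}}$ is the best utility over all policies in $\class_{\DTPNOI_{i^*}}$, it suffices to exhibit a single $\order$ whose $\DTPNOI_{i^*}$-utility is at least $\OPT_{i^*} - 3\eps\cdot\OPT$.

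First I would invoke \Cref{claim:approxThresholdTPNOI} to obtain a stage-non-exposed two-phase policy parametrized by $\order = (i_1, \cdots, i_k, \tau_1, \cdots, \tau_k)$ with all thresholds in $\W_L = \{0, \eps\cdot\OPT, \cdots, \OPT\}$ satisfying $U_{\TPNOI_{i^*}}(\order) = U_{\PNOI_{i^*}}(\order) \geq \OPT_{i^*} - \eps\cdot\OPT$. Next, I would fix the support $\W$ to be the one produced by \Cref{claim:supportExistence} for \emph{this particular} order $\order$ (note this is exactly why the reduction fixes an order before discretizing — the support depends on the order); this is the ``constant-size support'' referenced in the statement. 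With $\W$ so chosen, $\order$ is a valid parameter for a policy in $\class_{\DTPNOI_{i^*}}$, so I can apply \Cref{claim:DTPNOI} to this $\order$ and conclude $U_{\DTPNOI_{i^*}}(\order) \geq U_{\TPNOI_{i^*}}(\order) - 2\eps\cdot\OPT$.

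Chaining the two inequalities gives $U_{\DTPNOI_{i^*}}(\order) \geq \OPT_{i^*} - \eps\cdot\OPT - 2\eps\cdot\OPT = \OPT_{i^*} - 3\eps\cdot\OPT$, and since $\OPT_{\DTPNOI_{i^*}} \geq U_{\DTPNOI_{i^*}}(\order)$ by optimality, the corollary follows.

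The only subtlety — and the one place I would be careful — is the interleaving of quantifiers: \Cref{claim:supportExistence} constructs $\W$ depending on the order, while \Cref{claim:approxThresholdTPNOI} gives the good order only after threshold discretization, and \Cref{claim:DTPNOI} needs the thresholds of the policy it is applied to already lie in $\W_L$. I would therefore make explicit that the order is chosen first (via \Cref{claim:approxThresholdTPNOI}, which already delivers $\W_L$-thresholds), then $\W$ is chosen as a function of that fixed order, so that all three cited results apply consistently to the same $\order$. I expect no real obstacle beyond bookkeeping; everything is a direct composition of the section's lemmas.
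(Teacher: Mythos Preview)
Your proposal is correct and follows essentially the same approach as the paper's proof: invoke \Cref{claim:approxThresholdTPNOI} to obtain a good $\order$ with thresholds in $\W_L$, then apply \Cref{claim:DTPNOI} to that same $\order$ and chain the two additive losses. Your explicit identification of $\W$ via \Cref{claim:supportExistence} for the fixed order, and your remark on the quantifier ordering, are details the paper leaves implicit but are exactly the right bookkeeping.
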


At this point we have essentially established that if we know a near  optimal \indexthreshold{} for problem $\DTPNOI_{i^*}$, we can immediately find the optimal support $\W$ that satisfies~\Cref{claim:DTPNOIopt}. Unfortunately, we do not have such super power, as the near optimal solution is what we are trying to find in the first place! However, remember each $w \in \W$ must be at most $V_U = \frac{\E[\max_i v_i]}{\eps}$, and also $w$ must be a multiple of $\eps^2 \cdot \OPT$. We first extablish that the ratio between $\E[\max_{i} v_i]$ and $\OPT$ is at most $n$. Consequently, there are only polynomial number of possibilities for the choice of $w$. Since $\W$ is of size $\poly\paren*{\frac{1}{\eps}}$, there are only $O\paren*{n^{poly\paren*{\frac{1}{\eps}}}}$ many possible choices for $\W$.

\begin{restatable}{claim}{claimLessThanNOPT} \label{claim:lessThanNOPT}
$\E[\max_{i} v_i] \leq n \cdot \OPT$.
\end{restatable}
Hence once we provide a PTAS for the problem $\DTPNOI_{i^*}$, given a particular $\W$, we then can simply run the PTAS for all possible configurations of $\W$ and choose a $\W$ whose PTAS policy yields the maximum expected reward. By~\Cref{claim:DTPNOIopt}, this expected reward must be at least $(1 - \eps) \cdot (\OPT_{i^*} - 3\eps \cdot \OPT)$. 

\subsection{Obtaining the PTAS (Reducing $\DTPNOI_{i^*}$ to $\STDP_{i^*}$)} \label{subsec:STDP}

Now we are at the last step, which is to show that there exists a PTAS for the $\DTPNOI_{i^*}$ problem by reducing $\DTPNOI_{i^*}$ to the stochastic dynamic program with constant value space, whose format is defined in~\cite{FuLX18}. Based on a discretized version of Adaptive $\TPNOI_{i^*}$ (formulated in~\Cref{subsec:discretization}), our stochastic dynamic program is as follows. 

\begin{mdframed}
    Stochastic Dynamic Program (we abbreviate as  $\STDP_{i^*}$    ): 
    \begin{itemize}
        \item 
        $\M$ is the set of all boxes. As before, let box $i^*$ denote the backup box we have fixed.   
        \item 
        $n$ is maximum number of rounds, which is just $|\M|$. 
        \item 
        $\mathcal{V}$ is the set of all possible values of the system, which we will set as $\mathcal{V} = W$.  
        \item 
        $\mathcal{A}$ will represent the action space. Specifically, 
        For all $i \neq i^*$, let $\mathcal{A}_i = \{a_i^\tau\}_{\tau \in \W_L^i} \cup \{\infty\}$, where $a_i^T$ represent the action of opening box $i$ with threshold $T$, and $\W_L^i$ includes all elements in $\W_L = \{0, \eps \cdot \OPT, \cdots, \OPT\}$ that are at most $\sigma_{i}$. For the back up box $i^*$, let $\mathcal{A}_{i^*} = \{a_{i^*}^o\}$, where $a_{i^*}^o$ represent the action of opening the backup box without threshold. (The reward from claiming $i^*$ closed will be encoded in the final reward function $h$. ) 
        
        $\mathcal{A} = \bigcup_{i \in \M} \mathcal{A}_i$ is just the union over all possible actions for each box. 
        \item 
        $V_j$ represents the current value of the program, while $f$ is the transition function. Assume at step at we take an action for box $i_j \neq i^*$, then we will set transition function  
        \begin{align*}
            V_{j+1} = f(V_j, a_j = a_{i_j}^\tau) = \begin{cases}
                0 & \text{if $v_{i_j} \leq \tau$ and $V_j = 0$} \\
                \widetilde{v_{i_j}} & \text{if $v_{i_j} > \tau$ and $V_j = 0$}\\
                \max(\widetilde{\kappa_{i_j}}, V_j) & \text{if $V_j > 0$} 
              \end{cases}   
        \end{align*}
        For the action $b_{i^*}^o$, which opens the back up box, we will set 
        \begin{align*}
             V_{j+1} = f(V_j, a_j = a_{i^*}^o) = 
             \begin{cases}
                0 & \text{if $V_j = 0$}\\ \max(\widetilde{\kappa_{i^*}}, V_j) & \text{if $V_j > 0$}
             \end{cases}
        \end{align*}
        \item 
        We define the reward function $g$ for each state transition as for any $i_j \neq i^*$, 
        \begin{align*}
            g(V_j, a_j = a_{i_j}^\tau) = \begin{cases}
                0 & \text{if $v_{i_j}\leq \tau$ and $V_j = 0$}\\
                (\widetilde{\kappa_{i_j}} - V_j)^+ & \text{if $v_{i_j} > \tau$ or $V_j > 0$} 
            \end{cases}  
        \end{align*}
        Similarly, for $i_j = i^*$, 
        \begin{align*}
            g(V_j, a_j = a_{i^*}^{o}) = \begin{cases}
                0 & \text{if $V_j = 0$}\\
                (\widetilde{\kappa_{i^*}} - V_j)^+ & \text{if $V_j > 0$}
            \end{cases}
        \end{align*}
        \item 
        Finally, we define the final reward function as the reward the agent gets by claiming the backup box closed. This reward is not allowed if the threshold is crossed for some box. 
        \begin{align*}
            h(V_{n+1}) = \begin{cases}
                    \E[v_{i^*}] & \text{if $V_j = 0$}\\
                    0 & \text{if $V_j > 0$}
                \end{cases}
        \end{align*}
        \item 
        The total reward the agent gets is $\sum_{j=1}^n g(V_j, a_j) + h(V_{n+1})$. 
    \end{itemize}
\end{mdframed}
Firstly, as always, when we change the formulation of the problem, we need to define the class of policies that can be parameterized by $\order = (i_1, \cdots, i_k, \tau_1, \cdots, \tau_k)$. To do this, we first quickly observe that once $V_j > 0$, it is optimal to open all remaining boxes in problem $\STDP_{i^*}$. This makes sense, since we are deliberately trying to design the stochastic dynamic program so that in ``phase two" (once $V_j > 0$), the expected reward of the optimal policy correspond to the expected reward from $\DTPNOI_{i^*}$. 

\begin{restatable}{claim}{claimOPTSTDP} \label{claim:OPTSTDP}
    Once $V_j > 0$, it is optimal to open all remaining boxes, including the backup box.  
\end{restatable}

Using the same argument as in~\Cref{claim:classTPNOI} (essentially that policies that take deterministic actions given an internal value does as well as policies that can take randomized actions), we can show that there exists an optimal policy that is below-threshold-nonadaptive\footnote{Definition of below-threshold-nonadaptive is given in ~\Cref{def:belowThreshold}.}. Combining this with~\Cref{claim:OPTSTDP}, we conclude that there exists an optimal policy of the following form.

\begin{algorithm}\caption{$\STDP$ Policy($\order=(i_1, \cdots, i_k, \tau_1, \cdots, \tau_{k})$)} \label{algoSTDP}
\begin{algorithmic}[1]
\For{$j = 1, \cdots, k-1$}
\State{Probe element $i_j$ with threshold $\tau_j$, observe value $v_{i_j}$ from the box.}
\If{$v_{i_j} > \tau_j$}
    \State{Probe all remaining elements}
    \State{\algorithmicreturn}
\EndIf
\EndFor
\State{Get final reward $\E[v_{i^*}]$.}
\end{algorithmic}
\end{algorithm}

As with the previous subsections for class of policies, we define $\class_{\STDP_{i^*}}$ as the set of all valid $\STDP$ policies with $\tau_j$ supported on $\W_{L}$ and at most $\sigma_{i_j}$ for each $j$. This class is used to establish the following result.

\begin{restatable}{proposition} {claimAssociatedALG}\label{claim:associatedALG}
    Given a valid \indexthreshold{} $\order = (i_1, \cdots i_{k}, t_1, \cdots, t_{k})$, then 
    \begin{align*}
        U_{\DTPNOI_{i^*}}(\order) = U_{\STDP_{i^*}}(\order). 
    \end{align*}
\end{restatable}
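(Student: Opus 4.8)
The plan is to decompose both utilities along the common phase-one trajectory of the two policies and then match the conditional rewards-to-go. Fix $\order = (i_1, \ldots, i_k, \tau_1, \ldots, \tau_k)$, let $\ALG$ be the policy it parameterizes in $\class_{\DTPNOI_{i^*}}$, and let $\ALG'$ be the associated $\STDP$ policy of \Cref{algoSTDP} in $\class_{\STDP_{i^*}}$. Both policies are below-threshold-nonadaptive: as long as no opened box has exceeded its threshold they open $i_1, i_2, \ldots$ in this fixed order with the same thresholds, and under $\ALG'$ this regime coincides with the regime $V_j = 0$. Hence for every $j$ the event that step $j$ is reached with no prior threshold crossing has the same probability $\prod_{\ell < j}\Pr[v_{i_\ell}\le\tau_\ell]$ in both problems, and it suffices to prove that, conditioned on this event, the expected reward accrued from step $j$ onward agrees between $\DTPNOI_{i^*}$ and $\STDP_{i^*}$ --- both when the threshold is first crossed at step $j$ and when it is never crossed.

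The no-crossing case is immediate: $\ALG$ collects the final reward $\E[v_{i^*}]$, while $\ALG'$ keeps $V_j = 0$ throughout, so every immediate reward $g(0, \cdot)$ is $0$ and the run ends with $h(V_{n+1}) = \E[v_{i^*}]$. For the crossing-at-step-$j$ case, condition further on $v_{i_j} > \tau_j$ and on the remaining box values. In $\STDP_{i^*}$ we are at $V_j = 0$, so the transition rule sets $V_{j+1} = \widetilde{v_{i_j}}$, the immediate reward is $g(0, a_{i_j}^{\tau_j}) = (\widetilde{\kappa_{i_j}} - 0)^+ = \widetilde{\kappa_{i_j}}$, and by \Cref{claim:OPTSTDP} the policy then opens every remaining box; a box $w$ opened while the current value equals $V$ contributes $g(V, a_w) = (\widetilde{\kappa_w} - V)^+$ and updates the value to $\max(\widetilde{\kappa_w}, V)$, while $h(V_{n+1}) = 0$ since $V_{n+1} > 0$.

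The crux is the telescoping of the phase-two rewards. Sweeping through the remaining boxes while maintaining $V$ as a running maximum initialized at $\widetilde{v_{i_j}}$, the successive increments $(\widetilde{\kappa_w} - V)^+$ lift $V$ from $\widetilde{v_{i_j}}$ up to $\max\{\widetilde{v_{i_j}}, \max_w \widetilde{\kappa_w}\}$, so their sum equals $(\max_w \widetilde{\kappa_w} - \widetilde{v_{i_j}})^+ = (\widetilde{\weitz_{\U_{j+1}}} - \widetilde{v_{i_j}})^+$, the maximum being over exactly the set of boxes remaining after step $j$, which is the same set feeding the Weitzman-leftover term of $\DTPNOI_{i^*}$. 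Adding back $\widetilde{\kappa_{i_j}}$, the conditional reward-to-go of $\ALG'$ equals $\widetilde{\kappa_{i_j}} + (\widetilde{\weitz_{\U_{j+1}}} - \widetilde{v_{i_j}})^+$; taking expectations over $v_{i_j} > \tau_j$ and the remaining values reproduces $\E_{v_{i_j} > \tau_j}[\widetilde{\kappa_{i_j}}] + \E_{v_{i_j} > \tau_j}[(\widetilde{\weitz_{\U_{j+1}}} - \widetilde{v_{i_j}})^+]$, which is precisely the (already pre-averaged) reward $\ALG$ collects in $\DTPNOI_{i^*}$ upon crossing at step $j$. The only structural difference --- that $\DTPNOI_{i^*}$ hands out this amount in one step and halts whereas $\STDP_{i^*}$ spreads the same amount over the subsequent rounds --- disappears under expectation. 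Summing the matched crossing contributions over $j = 1, \ldots, k$ together with the matched no-crossing contribution, each weighted by the shared phase-one survival probability, gives $U_{\DTPNOI_{i^*}}(\order) = U_{\STDP_{i^*}}(\order)$.

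I expect the main obstacle to be bookkeeping rather than any deep estimate, concentrated in three places: (i) verifying that the $(\cdot)^+$-truncated running-maximum updates of $\STDP_{i^*}$ telescope cleanly as claimed, including the degenerate case $\max_w \widetilde{\kappa_w} \le \widetilde{v_{i_j}}$ where phase two contributes nothing; (ii) verifying that the discretization maps $\widetilde{\kappa_{(\cdot)}}$ and $\widetilde{v_{(\cdot)}}$ hard-wired into the $\STDP_{i^*}$ transition and reward functions reproduce exactly the quantities used to define $\DTPNOI_{i^*}$; and, most delicately, (iii) verifying that the set of boxes over which $\widetilde{\weitz}$ is evaluated in phase two --- in particular the treatment of the backup box $i^*$ and of any box not listed in $\order$ --- is identical in the two formulations, so that the directive to open all remaining boxes in \Cref{algoSTDP} sweeps exactly the index set feeding $\widetilde{\weitz_{\U_{j+1}}}$ in $\DTPNOI_{i^*}$; a discrepancy there (say an extra $\widetilde{\kappa_{i^*}}$ term on one side only) would be the single way the equality could break, so this must be pinned down from the definitions. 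Once these alignments are checked, the identity follows from linearity of expectation as above.
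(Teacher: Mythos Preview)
Your proposal is correct and follows essentially the same route as the paper's proof: both identify the phase-one trajectory with the $V_j=0$ regime, compute the total phase-two reward in $\STDP_{i^*}$ as a telescoping running-maximum sum equal to $(\widetilde{\weitz_{\geq(j+1)}}-\widetilde{v_{i_j}})^+$, add the step-$j$ reward $\widetilde{\kappa_{i_j}}$, and then observe that the resulting recurrence (paper) / conditional decomposition (you) coincides with that of $\DTPNOI_{i^*}$, with the no-crossing branch contributing $\E[v_{i^*}]$ on both sides. Your write-up is more explicit about the telescoping and, notably, about the alignment of the index set feeding $\widetilde{\weitz_{\U_{j+1}}}$ (in particular the role of $i^*$), which the paper's proof passes over in one line; that extra care is warranted but does not constitute a different argument.
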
 

We are finally ready to prove the main theorem of the section, \Cref{thm:ptas}. The proof uses the relationship among $U_{\PNOI_{i^*}}, U_{\TPNOI_{i^*}}$, $U_{\DTPNOI_{i^*}}, U_{\STDP_{i^*}}$ and is provided in the appendix.

\bibliographystyle{alpha}
\bibliography{MasterBib}
\appendix
\pagebreak

\section{Missing Proofs of \Cref{sec:intro}} \label{app:intro}

\crNP*
\begin{proof}[Proof of \Cref{cr:NP}]
We show that the decision version of the problem is in NP by providing an efficient verification algorithm for polynomial size certificates. The decision problem asks whether the optimal expected utility is above a given target.
Given a set of $n$ boxes $\M$, each with a distribution $F_i$, support $\Theta_i$, and cost $c_i$, as input, and initial order (inspection order for phase one) $\pi$, and a target expected utility $T$, as the certificate, we show in polynomial time how to verify whether an optimal two-phase policy using initial order $\pi$ generates expected utility that is at least $T$.

We try all the possibilities for $k$, and run the procedure for $k = 0, \ldots, n$. 
We consider $n$ sets $\U_0, \U_1, \ldots, \U_{n-1}$, where $\U_0 = \M$, and for $j \geq 1$, $\U_j = \M \setminus \{\pi(1), \ldots, \pi(j)\}$. First, for each value $\theta$ in the set of distribution supports $\Theta = \cup \Theta_i$, and sets $\U_j$ with $j \leq k$, we find Weitzman's optimal utility for uninspected set $\U_j$ and outside option $\theta$ and denote it as $W^{(\theta)}_j$. Since Weitzman's policy is an efficient algorithm~\cite{Weitzman79} this step is done in polynomial time. The next step is to find the optimal thresholds $\tau(j)$ and utilities $\OPT_j$ for all sets $\U_j$ with $j < k$ conditioned on using initial order $\pi$ and cutoff index $k$. We start from $j = k - 1$ and continue backwards to $j = 0$. Let $\OPT_{k-1} = \E[v_{\pi(k)}]$. We recursively define $\tau(j)$ and $\OPT_j$. Let $\tau(j) := \argmin_{\theta \in \Theta} W^{(\theta)}_j \geq \OPT_j$, and if no such $\theta$ exists, let $\tau(j) = \max \Theta + 1$. Let $\OPT_j = \E_{\theta_j \sim F_j} [W^{(\theta_j)}_{j+1} \one_{\theta_j \geq \tau(j+1)}] + \E_{\theta_j \sim F_j} [\OPT_{j+1} \one_{\theta_j < \tau(j+1)}]$. The optimal utilities are defined to reflect the best of continuing with phase one or switching to phase two. All the steps can be done in polynomial time. The final verification is comparing $\OPT_0$ and target utility $T$, returning true if $\OPT_0 \geq T$, and false otherwise.

\end{proof}

\thmComitting*

\begin{proof}[Proof of \Cref{thm:comitting}] The proof that we present here is a simplified version of proof of Theorem 5.3. in~\cite{GuhaMS08}, and follows similar logic. By \Cref{thm:structure} (and also by~\cite{GuhaMS08} for discrete and finite distributions), we know that there is at most one box that the optimal policy may claim closed. If the optimal policy does not have such a box, then Weitzman's policy is optimal implying the statement. Therefore, suppose that the optimal has a unique box $B$ that it may claim closed with probability $\beta$. We focus on two modified version of $A$. The first version, $A_\text{open}$, opens $B$ when $A$ claims $B$ closed and selects the maximum value observed. The second version, $A_\text{closed}$, claims $B$ closed whenever $A$ opens it. We first compare the utility of these two versions with the optimal expected utility $\OPT$ (utility of $A$). We show
\begin{align*}
    U(A_\text{open}) &\geq \OPT - \beta c_B;\\
    U(A_\text{closed}) &\geq \OPT - (1-\beta)(U(W)-c_B);
\end{align*}
where $U(W)$ is the expected utility of Weitzman's policy. The first inequality follows from $A_\text{open}$ paying extra cost of $c_B$ whenever $A$ claims $B$ closed, and receiving at least as much value compared to $A$. The second inequality follows from $A_\text{closed}$ not paying the cost of opening $B$ and losing at most the highest expected utility conditioned on not claiming any boxed closed which is $U(W)$. Multiplying the first inequality by $(1-\beta)$ and the second by $\beta$ gives:
\begin{align*}
   (1-\beta) U(A_\text{open}) + \beta U(A_\text{closed}) &\geq \OPT - \beta (1 - \beta) U(W).
\end{align*}
Note that since $A_\text{open}$ never claims a closed box, its utility is always at most that of Weitzman's policy. Also, since $A_\text{closed}$ never opens box $B$, its utility is at most that of the committing policy corresponding to $B$ (that never opens this box). Therefore, $U(A_\text{open})$ and $U(A_\text{closed})$ are both at most $\OPT^\text{commit}$, where $\OPT^\text{commit}$ is the maximum utility among committing policies. Upper bounding  $U(A_\text{open})$, $U(A_\text{closed})$, and $U(W)$ by $\OPT^\text{commit}$ gives
\begin{align*}
    \OPT^\text{commit}\left(1 + \beta (1-\beta) \right) &\geq \OPT,\\
    \OPT^\text{commit} &\geq \frac{1}{1+ \beta (1-\beta)}\OPT.    
\end{align*}
Since $0 \leq \beta \leq 1$, the minimum value for the right hand side occurs at $\beta = 1/2$, implying the statement.
\end{proof}

\begin{example}\label{ex:nonexposed}
This example shows that the optimal policy of Pandora's problem with nonobligatory inspection may not be non-exposed. Consider the following two boxes with $\eps$ being a sufficiently small number: 
\begin{itemize}
    \item 
    box $A$: $v_A = 
    \begin{cases}
        0 & \text{w.p. } 1/2 \\
        2 & \text{w.p. } 1/2
    \end{cases}$, $c_A = \eps$
    \item 
    box $B$: $v_B = 
    \begin{cases}
        0 & \text{w.p.} 1 - \eps\\
        \frac{1}{\eps} & \text{w.p. } \eps
    \end{cases}$, $c_B = \frac{1}{2}$
\end{itemize}
Based on the distributions and costs, $\sigma_A = 2 - 2\eps$ and $\sigma_B = 1/(2\eps)$. 

The optimal policy starts by opening box $A$. If $v_A = 0$, then it claims box $B$ closed. However, if $v_A = 2$, it continues with opening box $B$ and selecting $B$ if $v_B = 1/\eps$. Therefore, in this case, although $A$ has been inspected and $v_A > \sigma_A$, the optimal policy does not select it; which makes it an example of the optimal policy not satisfying non-exposure.
\end{example}

\section{Missing Proofs of \Cref{sec:opt}} \label{app:opt}

\obsMonotone*

\begin{proof}[Proof of \Cref{obs:monotone}]
For any $\oo, \oo'$ where $\oo' > \oo$, let $\OAL$ be an optimal policy for $\Pro(\U, \oo)$. We will show that there exists a policy for $\Pro(\U, \oo')$ which gets at least as much utility as $\OPT(\U, \oo)$. Consider a policy $\OAL'$ for $\Pro(\U, \oo')$ where it pretends the outside option is $\oo$ and at each stage does exactly what $\OAL$ would do conditioned on the revealed information. For any fixed sequence of values of the boxes, $\OAL'$ always pays the same costs as $\OAL$ and returns a value that is either equal to or greater than the value returned by $\OAL$.  
\end{proof}

The following lemma shows that given an optimal policy $\LOAL$ of a subproblem $\Pro(\U, \oo)$, we can construct another policy $A'$ such that for any reachable state in $RS_{\LOAL}(\U, \oo)$ follows $\LOAL$, and for any other reachable state follows $A$. This lemma is used in the proofs of \Cref{lm:once_exceeds_always_exceeds} and $\Cref{lm:last_structural_lemma}$.

\begin{lemma} 
\label{coroModifyOptimal}
Let $\LOAL$ be a policy that is optimal for the problem $\Pro(\U, \oo)$, then for any optimal policy $A$ for the problem $\Pro(\M, 0)$, then we can construct another optimal policy $A'$ such that for any state $(\U', \oo') \in RS(A')$, if $(\U', \oo') \in RS_{\LOAL}(\U, \oo)$, then $H^{A'}(\U', \oo') = H^{\LOAL}(\U', \oo')$. Otherwise, if $(\U', \oo') \in RS(A')$ but $(\U', \oo') \not \in RS_{\LOAL}(\U, \oo)$, then $H^{A'}(\U', \oo') = H^{A}(\U', \oo')$.
\end{lemma}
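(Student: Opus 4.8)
The plan is to construct $A'$ by a "state-splicing" argument: run $A$ from the starting state $(\M, 0)$, but the moment the run enters a state that belongs to $RS_{\LOAL}(\U, \oo)$, permanently switch over to following $\LOAL$. Formally, I would define $A'$ on each state $(\U', \oo')$ by a case split: if $(\U', \oo') \in RS_{\LOAL}(\U, \oo)$, set $H^{A'}(\U', \oo') = H^{\LOAL}(\U', \oo')$; otherwise set $H^{A'}(\U', \oo') = H^{A}(\U', \oo')$. Note this is a well-defined (deterministic, history-independent) policy because by assumption actions depend only on the state $(\U', \oo')$, not on how it was reached; the only subtlety is that once $A'$ has entered $RS_{\LOAL}(\U, \oo)$ it stays there, since $RS_{\LOAL}(\U, \oo)$ is closed under the state transitions of $\LOAL$ (any plausible successor of a $\LOAL$-reachable state under $\LOAL$ is again $\LOAL$-reachable), so $A'$ never "leaves" and reverts to $A$. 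The two conclusions of the lemma are then immediate from the definition.

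The real content is showing $A'$ is still optimal for $\Pro(\M, 0)$. First I would argue: if $(\U, \oo)$ is not reachable by $A'$ from $(\M, 0)$, then $A'$ agrees with $A$ on every state $A'$ can actually reach (the $RS_{\LOAL}$ branch is vacuous along any realized trajectory), so $A' = A$ as policies on $RS(A')$ and optimality is trivial. The interesting case is when $A'$ does reach $(\U, \oo)$ (or some state in $RS_{\LOAL}(\U, \oo)$). Here I would compare the expected utility of $A$ and $A'$ conditioned on the realized trajectory up to the first time $\rho$ the run hits a state $(\U^*, \oo^*) \in RS_{\LOAL}(\U, \oo)$. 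Up to time $\rho$ the two policies take identical actions (both follow $A$ on states outside $RS_{\LOAL}(\U,\oo)$), hence incur identical costs and see identical information. From time $\rho$ onward, $A'$ follows $\LOAL$ from $(\U^*, \oo^*)$, earning expected continuation utility $\OPT(\U^*, \oo^*)$ by optimality of $\LOAL$ for $\Pro(\U,\oo)$ — here I need that $\LOAL$ being pointwise optimal for $\Pro(\U, \oo)$ implies it is optimal for every subproblem $\Pro(\U^*, \oo^*)$ with $(\U^*,\oo^*) \in RS_{\LOAL}(\U,\oo)$, which is exactly the pointwise-optimality assumption from \Cref{sec:model}. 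Meanwhile $A$'s continuation from $(\U^*, \oo^*)$ earns at most $\OPT(\U^*, \oo^*)$ by definition of the optimum. So conditioned on any realized prefix, $A'$ does at least as well as $A$; taking expectations over prefixes, $U(A') \ge U(A) = \OPT(\M, 0)$, and since $\OPT(\M,0)$ is the optimum, $A'$ is optimal.

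The main obstacle I anticipate is the bookkeeping needed to make the "first hitting time $\rho$" argument rigorous: one must handle the fact that $A$ and $A'$ operate on the same probability space of box realizations, verify that the prefix-conditioning is valid (the set of trajectories hitting $RS_{\LOAL}(\U,\oo)$ at a given state with a given prefix is measurable and the costs paid along the prefix are identical pathwise), and carefully invoke pointwise optimality so that the comparison works even for states reached with probability zero. A secondary point to be careful about is that $A'$ might reach a state in $RS_{\LOAL}(\U,\oo)$ via a prefix that $\LOAL$ itself would never produce from $(\U,\oo)$ — but this cannot happen, precisely because membership in $RS_{\LOAL}(\U,\oo)$ is defined via plausible $\LOAL$-sequences, and $A'$ switches to $\LOAL$ only at the boundary, so every subsequent state is a genuine $\LOAL$-successor. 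Once these measure-theoretic details are nailed down, the argument is essentially the standard exchange argument for dynamic programs.
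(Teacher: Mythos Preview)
Your proposal is correct and follows essentially the same approach as the paper: the construction of $A'$ by the state-wise case split is identical, and the closure observation (once inside $RS_{\LOAL}(\U,\oo)$ you stay there) is the same validity check the paper performs when it shows $RS(A') \subseteq RS(A) \cup RS_{\LOAL}(\U,\oo)$. The only cosmetic difference is in the optimality step: you argue via a first-hitting-time coupling comparing continuations of $A$ and $A'$, whereas the paper argues directly that at every reachable state $A'$ takes a pointwise-optimal first action (inherited from either $A$ or $\LOAL$); both arguments rest on the same pointwise-optimality assumption you identified, and neither requires the measure-theoretic bookkeeping you worried about.
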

\begin{proof}
For any optimal policy $A$, let us construct the policy $A'$ such that at any state that is not a state in $RS_{\LOAL}(\U, \oo)$, $A'$ always takes the same action as $A$, however at a state $(\U', \oo') \in RS_{\LOAL}(\U, \oo)$, $A'$ will take the same action as $\LOAL$. We will first verify that this policy is valid (namely, $A'$ never reaches a state where the action at that state is ill-defined). To prove this, we will show that $RS(A') \subset RS(A) \cup RS_{\LOAL}(\U, \oo)$. For any state $(\U', \oo') \in RS(A')$ but $(\U', \oo') \not \in RS_{\LOAL}(\U, \oo)$, any sequence of states that start with $(\M, 0)$ and end with $(\U', \oo')$ that is plausible for policy $A'$ must not include the state $(\U, \oo)$ (otherwise since $(\U, \oo) \in RS_{\LOAL}(\U, \oo)$, $A'$ will take the same action as $\LOAL$ at $(\U, \oo)$, and similarly $A'$ will take the same action as $\LOAL$ at the next state, etc, until $A'$ reaches $(\U', \oo')$, therefore $(\U', \oo')$ must be reachable by $\LOAL$ from $(\U, \oo)$, which is a contraction). Thus $A'$ must take the same action as $A$ for all states this sequence of states, this means that this sequence of states is plausible for policy $A$ as well, which means that $(\U', \oo') \in RS(A)$. We conclude that $(\U', \oo')$ is either in $RS(A)$, or in $RS_{\LOAL}(\U, \oo)$. 

Now since $\LOAL$ is locally optimal at $(\U, \oo)$, for any $(\U', \oo') \in RS_{\LOAL}(\U, \oo)$, $H^{\LOAL}(\U', \oo')$ is the optimal first action for the problem $\Pro(\U', \oo')$. Similarly, for any $(\U', \oo') \in RS(A)$, $H^{A}(\U', \oo')$ is the optimal first action for the problem $\Pro(\U', \oo')$. We conclude that at any state $(\U', \oo') \in RS_{\LOAL}(\U, \oo)$, $H^{A'}(\U', \oo') = H^{\LOAL}(\U', \oo')$ is the locally optimal first action, and at any state $(\U', \oo') \in RS(A') \setminus RS_{\LOAL}(\U, \oo) \subseteq RS(A)$, $H^{A'}(\U', \oo') = H^{A}(\U', \oo')$ is the locally optimal first action as well. Hence $A'$ is optimal. 
\end{proof}

\lmOnceExceedsAlwaysExceeds*

\begin{proof}[Proof of \Cref{lm:once_exceeds_always_exceeds}]
We know by \Cref{def:thresholds} that if $\oo_0 > \tau(\U_0)$, then no optimal policy uses a backup box for the problem $\Pro(\U, \oo)$. Since $(\U, \oo) \in RS(\OAL)$, $\OAL$ is also optimal for the problem $\Pro(\U_0, \oo_0)$. Assume for contradiction that $\oo_i \leq \tau(\U_i)$ for some $i \in [k]$, then there exists an optimal policy $\LOAL$ that uses a backup box for the problem $\Pro(\U_i, \oo_i)$. By \Cref{coroModifyOptimal}, we know that there exists another optimal policy $\OAL'$ for problem $\Pro(\U, \oo)$ such that $\OAL'$ uses a backup box for the problem $\Pro(\U_i, \oo_i)$, and $(\U_i, \oo_i) \in RS_{\OAL'}(\U, \oo)$. 
Since $\OAL'$ uses a backup box for problem $\Pro(\U_i, \oo_i)$, and $(\U_i, \oo_i) \in RS_{\OAL'}(\U, \oo)$, $\OAL'$ must also use a backup box for problem $\Pro(\U, \oo)$, which is a contradiction to no optimal policy uses a backup box for the problem $\Pro(\U, \oo)$. 
\end{proof}

\lmLastStructuralLemma*

\begin{proof}[Proof of \Cref{lm:last_structural_lemma}]

If no optimal policy uses a backup box for $\Pro(\M, 0)$ with positive probability, then Weitzman's policy is an optimal policy satisfying the statement. Note that for any reachable state $(\U, \oo)$ of Weitzman's policy, $\oo > \tau(\U)$, otherwise there is an optimal policy that claims a closed box, which is in contradiction with the initial assumption.  

Now, suppose there exists an optimal policy that uses a backup box for $\Pro(\M, 0)$. Let $(i_1, g_1), \cdots, (i_k, g_k)$ be the the first action of pointwise\footnote{As mentioned in \Cref{sec:model}, a policy is pointwise optimal if it is optimal for any reachable state, even those with probability $0$.} optimal deterministic policies for problems $\Pro(\U_0, 0)$, $\Pro(\U_1, 0)$, \ldots, $\Pro(\U_k, 0)$, respectively, where $\U_0 = \M$, $\U_j = \M \setminus \{i_1, \ldots, i_j\}$, and $k$ is the first time in the sequence, where the action taken, i.e., $g_k$, is terminal. Note that since for each problem in the sequence the outside option is $0$, claiming a closed box has at least as much utility as taking the outside option. Therefore, we assume $g_k = \text{Close}$.

Now, consider a deterministic optimal policy $OAL$ that for problems $\Pro(\U_0, 0)$, $\Pro(\U_1, 0)$, \ldots, $\Pro(\U_k, 0)$, takes actions $(i_1, g_1), \cdots, (i_k, g_k)$ respectively. We show how to modify it to satisfy the conditions in the statement.

    Claim: If for some $h$ it is the case that for all $j \in [h]$ and for all $\oo_j \leq \tau(\U_j)$, $H^{\OAL}(\U_j, \oo_j) = H^{\OAL}(\U_j, 0)$, then there exists another optimal solution $\OAL'$ such that for all $j \in [h+1]$ and for all $\oo_j \leq \tau(\U_j)$, $H^{\OAL'}(\U_j, \oo_j) = H^{\OAL'}(\U_j, 0)$.

Proof of the claim: Firstly, observe that when for all $j \in [h]$ and for all $\oo_j \leq \tau(\U_j)$, $H^{\OAL}(\U_j, \oo_j) = H^{\OAL}(\U_j, 0)$, then conditioned on $\oo_j \leq \tau(\U): \forall j \in [h]$, the first $h$ actions $\OAL$ performs are exactly $(i_1, g_1), \cdots, (i_h, g_h)$. By \Cref{lm:once_exceeds_always_exceeds}, we know that for any $\oo_{h+1} \leq \tau(\U_{h+1})$ and any plausible sequence of states for $\OAL$ that starts with $(\M, 0)$ and ends at $(\U_{h+1}, \oo_{h+1})$, none of the intermediate states have their value exceed the threshold. This means that these intermediate states are exactly of form $(\U_j, \oo_j)$ where $\oo_j \leq \tau(\U_j)$. Since $\oo_{h+1} \leq \tau(\U_{h+1})$, we know that an optimal policy for the problem $\Pro(\U_{h+1}, 0)$ is also an optimal policy for the problem $\Pro(\U_1, \oo_{h+1})$. Now, let $\LOAL$ be an optimal policy for the problem $\Pro(\U_{h+1}, 0)$, and let $\LOAL(\oo_{h+1})$ be the corresponding policy for the problem $\Pro(\U_{h+1}, \oo_{h+1})$ that treats $\oo_{h+1}$ as value $0$. By \Cref{coroModifyOptimal}, we can construct another optimal strategy $\OAL'$, where $\OAL'$ takes the same action as $\OAL$, unless it is at a state in $ RS_{\LOAL(\oo_{h+1})}(\U_{h+1}, \oo_{h+1})$ for some $\oo_{h+1}$, in which case it will take the same action as $\LOAL(\oo_{h+1})$. Clearly, $(\U_1, \oo_1), \cdots (\U_h, \oo_h)$ are not reachable from $(\U_{h+1}, \oo_{h+1})$ for any $\oo_{h+1}$, hence $\OAL'$ takes the same action as $\OAL$ for those states. Furthermore, now $H^{\OAL'}(\U_{h+1}, \oo_{h+1}) = H^{\OAL'}(\U_{h+1}, 0)$ for any $\oo_{h+1} \leq \tau(\U)$. 

From a repeated application of the claim we have just proven, we know that from our original optimal policy $\OAL$, we can construct another optimal policy such that for all $j \in [k]$ and for all $\oo_j \leq \tau(\U_j)$, $H^{\OAL'}(\U_j, \oo_j) = H^{\OAL'}(\U_j, 0)$. This also implies that all reachable states $(\U, \oo) \in RS(\OAL)$ are of form $(\U_j, \oo_j)$ for some $j \in [k]$, hence the first condition in our lemma is satisfied. 

Now we will modify our optimal policy further so that the second condition in our lemma is satisfied. We know that for any $\U \subset \M$ and for any $\oo > \tau(\U)$, no optimal policy for $\Pro(\U, \oo)$ claims a box closed. Conditioned on not claiming any box closed, we know that an optimal policy for $\Pro(\U, \oo)$ is the Weitzman's algorithm.  Thus by Corollary~\ref{coroModifyOptimal}, we can modify $\OAL'$ and construct another algorithm $\OAL''$, where for any $(\U, \oo) \in RS(\OAL'')$ where $ \oo > \tau(\U)$, and for any $(\U', \oo') \in RS_{W}(\U, \oo)$, $H^{\OAL''}(\U, \oo) = W(\U, \oo)$. For any $(\U, \oo) \in RS(\OAL'')$ that does not satisfy our previous condition, $H^{\OAL''}(\U, \oo) = H^{\OAL'}(\U, \oo) = H^{\OAL}(\U, 0)$. By \Cref{lm:once_exceeds_always_exceeds}, we know that for any $(\U, \oo) \in RS(\OAL)$ where $ \oo > \tau(\U)$ and any $(\U', \oo') \in RS_{W}(\U, \oo)$, $ \oo' > \tau(\U')$. Hence $\OAL''$ takes the same action as $\OAL'$ for any reachable state $(\U, \oo)$ where $ \oo \leq \tau(\U)$. We conclude that $\OAL''$ satisfies our second condition, while still satisfying our first condition.

\end{proof}

\section{Missing Proofs of \Cref{sec:ptas}} \label{app:ptas}
The following claims use the fact that when 
$\kappa_{w} : w \in \U$ are fixed, $\weitz_{\U}(\oo)$ can still be viewed as a function in $\oo$. 
\begin{claim} \label{claim:weitzSubadditive}
$\weitz_{\U}(\cdot): \R_{\geq 0} \rightarrow \R_{\geq 0}$ is a non-decreasing and subadditive function. 
\end{claim}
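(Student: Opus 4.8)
The plan is to reduce the statement to an elementary fact about the map $x \mapsto \max\{M,x\}$. With the values $\kappa_w$, $w \in \U$, held fixed, write $M := \max_{w \in \U}\kappa_w$, a fixed real number; then by definition $\weitz_\U(\oo) = \max\{M,\oo\}$, so the whole claim is a statement about the shape of this one–variable function on $\R_{\ge 0}$.

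First I would record that $\weitz_\U$ indeed maps into $\R_{\ge 0}$: for any $\oo \ge 0$ we have $\weitz_\U(\oo) = \max\{M,\oo\} \ge \oo \ge 0$. Next, monotonicity is immediate, since $x \mapsto \max\{M,x\}$ is the pointwise maximum of the constant function $x \mapsto M$ and the identity $x \mapsto x$, both non-decreasing, and a pointwise maximum of non-decreasing functions is non-decreasing. For subadditivity, fix $\oo_1,\oo_2 \ge 0$ and bound $\weitz_\U(\oo_1) + \weitz_\U(\oo_2)$ from below in two ways: using $\weitz_\U(\oo_i) \ge \oo_i$ gives $\weitz_\U(\oo_1)+\weitz_\U(\oo_2) \ge \oo_1 + \oo_2$, and using $\weitz_\U(\oo_1) \ge M$ together with $\weitz_\U(\oo_2) \ge 0$ (which is where the domain restriction $\oo_2 \ge 0$ is used) gives $\weitz_\U(\oo_1)+\weitz_\U(\oo_2) \ge M$. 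Taking the larger of the two bounds yields $\weitz_\U(\oo_1) + \weitz_\U(\oo_2) \ge \max\{M,\oo_1+\oo_2\} = \weitz_\U(\oo_1+\oo_2)$, which is precisely subadditivity.

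There is no genuine obstacle here; the argument is a few lines of case-free inequality manipulation, and the only subtlety worth flagging in the write-up is the non-negativity of each summand, which is exactly what licenses the ``$\ge M$'' bound in the subadditivity step and relies on the domain being $\R_{\ge 0}$ rather than all of $\R$. I would also remark that since this reasoning is carried out for an arbitrary fixed choice of the $\kappa_w$'s, both properties continue to hold pointwise, so that by taking expectations over the $\kappa_w$'s the expected-utility quantities $\E[\weitz_{\geq j}(\cdot)]$ used elsewhere inherit the same monotonicity and subadditivity in the outside-option argument.
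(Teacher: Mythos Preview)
Your proof is correct and follows essentially the same approach as the paper: both fix the $\kappa_w$'s, set $M=\max_{w\in\U}\kappa_w$, and reduce the claim to the elementary observation that $x\mapsto\max\{M,x\}$ is non-decreasing and subadditive on $\R_{\ge 0}$. The only cosmetic difference is that the paper bounds $\weitz_\U(\oo_1+\oo_2)$ from above via the intermediate step $\max\{M,\oo_1+\oo_2\}\le\max\{M,\oo_1\}+\oo_2$, whereas you bound $\weitz_\U(\oo_1)+\weitz_\U(\oo_2)$ from below; your explicit note that non-negativity of the domain is what licenses the $\ge M$ bound, and your remark that the pointwise argument passes to expectations, are welcome clarifications.
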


\begin{proof}
    Since $\weitz_{\U}(\oo)$ takes the max between a fixed number and $\oo$ (and then takes the expectation over the fixed number), $\weitz_{\U}(\oo)$ must be monotonically non-decreasing in $\oo$. Furthermore,
    \begin{align*}
        \weitz_{\U}(\oo + \beta) &= \max\set*{\max_{w \in \U} \kappa_w, \oo + \beta} \leq \sq*{\max\set*{\max_{w \in \U} \kappa_w, \oo}} + \beta \\
        &\leq \max\set*{\max_{w \in \U} \kappa_w, \oo} + \max\set*{\max_{w \in \U} \kappa_w, \beta} = \weitz_{\geq j}(\alpha) + \weitz_{\geq j}(\beta), 
    \end{align*}
    hence $\weitz_{\U}(\cdot)$ is also subadditive. 
\end{proof}

\begin{claim} \label{claim:weitzSubtract}
For any $\alpha, \beta \in \N$ such that $\alpha > \beta$, $\weitz_{\U}(\alpha) - \weitz_{\U}(\beta) \leq \alpha - \beta$. 
\end{claim}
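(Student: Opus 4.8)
The plan is to observe that this is an immediate consequence of the case analysis already present in the proof of \Cref{claim:weitzSubadditive}. Write $M := \max_{w \in \U} \kappa_w$ and treat it as fixed, as in the hypothesis preceding the claim, so that $\weitz_{\U}(\oo) = \max\{M, \oo\}$. For $\alpha > \beta$ it then suffices to establish the pointwise inequality $\max\{M, \alpha\} \le \max\{M, \beta\} + (\alpha - \beta)$. I would split into two cases. If $M \ge \alpha$, the left-hand side equals $M$ while the right-hand side is at least $M$, so the bound holds. If $M < \alpha$, the left-hand side equals $\alpha$; since $\beta \le \max\{M, \beta\}$, we get $\alpha = \beta + (\alpha - \beta) \le \max\{M, \beta\} + (\alpha - \beta)$. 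Rearranging yields $\weitz_{\U}(\alpha) - \weitz_{\U}(\beta) \le \alpha - \beta$.

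Equivalently, and even more briefly, one may simply invoke the intermediate inequality $\max\{\max_{w \in \U}\kappa_w,\, \oo + \delta\} \le \max\{\max_{w\in\U}\kappa_w,\, \oo\} + \delta$ that already appears inside the proof of \Cref{claim:weitzSubadditive}, applied with $\oo = \beta$ and $\delta = \alpha - \beta > 0$. If one prefers to keep the $\kappa_w$ random rather than conditioning on them, the pointwise inequality $\weitz_{\U}(\alpha) \le \weitz_{\U}(\beta) + (\alpha - \beta)$ holds for every realization of $(\kappa_w)_{w \in \U}$, and taking expectations (which preserves inequalities) gives $\E[\weitz_{\U}(\alpha)] - \E[\weitz_{\U}(\beta)] \le \alpha - \beta$.

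There is essentially no obstacle here: the statement in fact holds for all reals $\alpha > \beta$, so the restriction $\alpha, \beta \in \N$ plays no role in the argument and is presumably imposed only because that is the regime in which the claim is later applied (thresholds living on a discretized grid). The one thing to be careful about is to use the correct intermediate bound from \Cref{claim:weitzSubadditive} — namely the one-sided Lipschitz estimate, not subadditivity — since $\weitz_{\U}(\gamma) \le \gamma$ is false in general and so a naive application of subadditivity would not close the argument.
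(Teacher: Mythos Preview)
Your proposal is correct and follows essentially the same approach as the paper: both arguments fix $M = \max_{w\in\U}\kappa_w$, split into the two cases $M \ge \alpha$ and $M < \alpha$, and verify the pointwise inequality $\max\{M,\alpha\} - \max\{M,\beta\} \le \alpha - \beta$ directly. Your additional remarks about taking expectations and about the irrelevance of the $\N$ restriction are accurate.
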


\begin{proof}
    For any $c \in \N$, $\max(c, \oo) - \max(c, \beta)$ is equal to $0$ when $\oo \leq c$, and is equal to $\oo - \max(c, \beta)$ when $\oo > c$. Both of these quantities are at most $\oo - \beta$. Let $c$ be $\max_{w \in \U} \kappa_w$ yields the claim. 
\end{proof}

\coroThresholdUtilityEq*

\begin{proof}
    Essentially, at each step $j$ in a two-phase policy, the agent decides whether to move to phase two, or to forfeit the value $v_{i_j}$ forever and continue in phase one. Hence for any $\tau > \tau_{i_j}$, it must be at least as good to choose to continue to stage two, namely, 
        $U_{\PNOI_{i^*}}(\order)_{\geq j} \leq \weitz_{\geq j}(\tau)$.
    Changing the $\tau_j$s so that 
    they comply with the condition 
    $U_{\PNOI_{i^*}}(\order)_{\geq j}$ does not affect optimality. 
\end{proof}

\claimTLessThanOPT*

\begin{proof}
Let $\OAL$ parametrized by $\order^* = (i_1, \cdots, i_k, \tau_1, \cdots \tau_k)$ be any optimal two-phase policy for problem $\PNOI_{i^*}$ that satisfied~\Cref{coro:thresholdUtilityEq}. Notice that for any $j < k$, $U_{\PNOI_{i^*}}(\order^*)_{\geq j} \geq U_{\PNOI_{i^*}}(\order^*)_{\geq (j+1)}$. Moreover, the expected future utility from $\OAL$ in the first step is just $U_{\PNOI_{i^*}}(\order^*)_{\geq 1} = \OPT$. At step $j$, we know that $U_{\PNOI_{i^*}}(\order^*)_{\geq (j+1)} \leq \OPT$. Since by~\Cref{coro:thresholdUtilityEq} $U_{\PNOI_{i^*}}(\order^*)_{\geq (j+1)} = \weitz_{\geq (j+1)}(\tau_j)$, $\tau_j$ must also be at most $\OPT$. 
\end{proof}

\claimConstantT*

\begin{proof}
Let $\order'$ have the same initial order $i_1, \cdots, i_k$ as $\order^*$, however, the thresholds in $\order'$ will be those in $\order^*$, but rounded down to a multiple of $\eps \cdot \OPT$. Namely, in $\order'$, for $j = 1, \cdots, k$, the threshold $\widetilde{\tau_j} = \eps \cdot \floor{\frac{\tau_j}{\eps}}$. Notice that since by Claim~\ref{claim:TLessThanOPT} $\tau_j \leq \OPT$, and $\widetilde{\tau_j} \leq \tau_j$, $\widetilde{\tau_j}$ is also at most $\OPT$. Hence $\widetilde{\tau_j} \in \W_{L} = \{0, \eps \cdot \OPT, 2 \cdot \eps \cdot \OPT, \cdots, \OPT\}$. 

We will now prove that $U_{\PNOI_{i^*}}(\order')_{\geq j} \geq U_{\PNOI_{i^*}}(\order^*)_{\geq j} - \eps \cdot \OPT$ for all $i \in [k]$ using induction.
    
We start off by assuming that for all $l > j$, $U_{\PNOI_{i^*}}(\order')_{\geq l} \geq U_{\PNOI_{i^*}}(\order^*)_{\geq l} - \eps \cdot \OPT$. In our base case where $j = k$, any two-phase policy just claims box $i_k$ closed in step $k$. Thus 
    \begin{align*}
        U_{\PNOI_{i^*}}(\order')_{\geq j} = \E[v_{i_k}] = U_{\PNOI_{i^*}}(\order^*)_{\geq j} \geq U_{\PNOI_{i^*}}(\order^*)_{\geq j} - \eps \cdot \OPT.
    \end{align*}  
        
Now when $j < k$, $\order^*$ and $\order'$ will open box $i_j$ with threshold $\tau_j$ and $\tilde{\tau_j}$ respectively. Hence
    \begin{align*}
        U_{\PNOI_{i^*}}(\order^*)_{\geq j} = \Pr[v_{i_j} \leq \tau_j] \cdot U_{\PNOI_{i^*}}(\order^*)_{\geq (j+1)} + \Pr[v_{i_j} > \tau_j] \cdot \E_{v_{i_j} > \tau_j}\sq*{\weitz_{\geq(j+1)}(v_{i_j})} - c_{i_j}
    \end{align*}
    and 
    \begin{align*}
        U_{\PNOI_{i^*}}(\order')_{\geq j} &= \Pr[v_{i_j} \leq  \widetilde{\tau_j}] \cdot U_{\PNOI_{i^*}}(\order')_{\geq(j+1)} + \Pr[v_{i_j} > \widetilde{\tau_j}] \cdot \E_{v_{i_j} > \widetilde{\tau_j}}\sq*{\weitz_{\geq(j+1)}(v_{i_j})} - c_{i_j}\\
        &= \Pr[v_{i_j} \leq  \widetilde{\tau_j}] \cdot U_{\PNOI_{i^*}}(\order')_{\geq (j+1)} + \Pr\sq*{\tau_{j} \geq v_{i_j} > \widetilde{\tau_j}} \cdot \E_{\tau_j \geq v_{i_j} > \widetilde{\tau_j}}\sq*{\weitz_{\geq(j+1)}(v_{i_j})} \\
        & \quad + \Pr[v_{i_j} > \tau_j] \cdot \E_{v_{i_j} > \tau_j}\sq*{\weitz_{\geq(j+1)}(v_{i_j})} - c_{i_j}.
    \end{align*}
    By the induction hypothesis, 
    \begin{align} \label{eq:one}
        U_{\PNOI_{i^*}}(\order')_{\geq (j+1)} \geq U_{\PNOI_{i^*}}(\order^*)_{\geq (j+1)} - \eps \cdot \OPT. 
    \end{align}
    Given that $\order^*$ is the parameter for an optimal two-phase policy that satisfies~\Cref{coro:thresholdUtilityEq}, we know that for any $\tau > \tau_{j+1}$, $\weitz_{\geq(j+1)}(\tau) \geq \OPT_{\geq (j+1)}$. By Claim~\ref{claim:weitzSubadditive}, $\weitz_{\geq (j+1)}(\cdot)$ is monotone and subadditive under addition. Therefore 
    \begin{align} \label{eq:two}
        \E_{\tau_j \geq v_{i_j} > \widetilde{\tau_{i_j}}}\sq*{\weitz_{\geq(j+1)}(v_{i_j})} &\geq \weitz_{\geq(j+1)}(\widetilde{\tau_{i_j}}) \\
        &\geq \weitz_{\geq(j+1)}(\tau_{i_j} - \eps) \geq  \weitz_{\geq(j+1)}(\tau_j) - \eps = \OPT_{\geq (j+1)} - \eps. 
    \end{align}
    By plugging in inequalities~(\ref{eq:one}) and (\ref{eq:two}) into our expansion of $U_{\PNOI_{i^*}}(\order')_{\geq j}$, we get   
    \begin{align*}
        U_{\PNOI_{i^*}}(\order')_{\geq j} &\geq \Pr[v_{i_j} \leq  \widetilde{\tau_{i_j}}] \cdot \paren*{U_{\PNOI_{i^*}}(\order^*)_{\geq (j+1)} - \eps \cdot \OPT} + \Pr\sq*{\tau_{i_j} \geq v_{i_j} > \widetilde{\tau_{i_j}}} \cdot \paren*{U_{\PNOI_{i^*}}(\order^*)_{\geq(j+1)} - \eps \cdot \OPT} \\
        &\quad + \Pr[v_{i_j} > \tau_{i_j}] \cdot \E_{v_{i_j} > \tau_{i_j}}\sq*{\weitz_{\geq(j+1)}(v_{i_j})} - c_{i_j}\\
        &\geq \paren*{\Pr[v_{i_j} \leq \tau_{i_j}] \cdot U_{\PNOI_{i^*}}(\order^*)_{\geq(j+1)} + \Pr[v_{i_j} > \tau_{i_j}] \cdot \E_{v_{i_j} > \tau_{i_j}}\sq*{\weitz_{\geq(j+1)}(v_{i_j})} - c_{i_j}} - \eps \cdot \OPT \\
        &= U_{\PNOI_{i^*}}(\order^*)_{\geq j} - \eps \cdot \OPT.  
    \end{align*}
    Finally, we conclude that the expected utility from $\order'$, which is equal to $U_{\PNOI_{i^*}}(\order')_{\geq 1}$, is at least $$U_{\PNOI_{i^*}}(\order^*)_{\geq 1} - \eps \cdot \OPT = \OPT_{i^*} - \eps \cdot \OPT.$$
\end{proof}

\claimTLessThanSigma*

\begin{proof}
    Let $\OAL$ be an optimal two-phase policy for $\PNOI_{i^*}$ parametrized by $\order = (i_1, \cdots, i_k, \tau_1, \cdots, \tau_k)$ that satisfies~\Cref{coro:thresholdUtilityEq}.  Since $\order$ is optimal, it must not be the case where removing an box from the order-threshold sequence improves utility. Therefore for any step $j < k$, $U_{\PNOI_{i^*}}(\order) \geq U_{\PNOI_{i^*}}(\order)_{ \geq (j+1)}$. 
    We can now expand the utility recurrence formula for two stage polices and get
    \begin{align*}
        U_{\PNOI_{i^*}}(\order)_{\geq j} &= \Pr[v_{i_j} \leq \tau_j] \cdot U_{\PNOI_{i^*}}(\order)_{\geq (j+1)} + \Pr[v_{i_j} > \tau_j] \cdot \E_{v_{i_j} > \tau_j}\sq*{\weitz_{\geq(j+1)}(v_{i_j})} - c_{i_j} \\
        &\geq U_{\PNOI_{i^*}}(\order)_{ \geq (j+1)}.
    \end{align*}
    By an exchange of terms, 
    \begin{align*}
        &\Pr[v_{i_j} > \tau_j] \cdot \E_{v_{i_j} > \tau_j}\sq*{\weitz_{\geq(j+1)}(v_{i_j})} - c_{i_j} > (1 - \Pr[v_{i_j} \leq  \tau_j]) \cdot U_{\PNOI_{i^*}}(\order)_{ \geq (j+1)}\\
        \Rightarrow & \Pr[v_{i_j} > \tau_j] \cdot \paren*{\E_{v_{i_j} > \tau_j}\sq*{\weitz_{\geq(j+1)}(v_{i_j})} - U_{\PNOI_{i^*}}(\OAL)_{ \geq (j+1)}} > c_{i_j}.
    \end{align*}
    By~\Cref{coro:thresholdUtilityEq}, it must be the case that $U_{\PNOI_{i^*}}(\order)_{\geq (j+1)} =\weitz_{\geq(j+1)}(\tau_{i_j})$; moreover, $c_{i_j} = \E[(v_{i_j} - \sigma_{i_j})^+]$ by definition. Hence 
    \begin{align*}
        \Pr[v_{i_j} > \tau_j] \cdot \paren*{\E_{v_{i_j} > \tau_j}\sq*{\weitz_{\geq(j+1)}(v_{i_j}) - \weitz_{\geq(j+1)}(\tau_{j})}} \geq \E[(v_{i_j} - \sigma_{i_j})^+].
    \end{align*}
    By Claim~\ref{claim:weitzSubtract}, $\weitz_{\geq(j+1)}(v_{i_j}) - \weitz_{\geq(j+1)}(\tau_{j}) \leq v_{i_j} - \tau_{j}$, thus 
    \begin{align*}
        \Pr[v_{i_j} > \tau_j] \cdot \E_{v_{i_j} > \tau_j}\sq*{v_{i_j} - \tau_j} &\geq \E[(v_{i_j} - \sigma_{i_j})^+]\\
        \Rightarrow  \E[(v_{i_j} - \tau_j)^+] &\geq \E[(v_{i_j} - \sigma_{i_j})^+]\\
    \end{align*}
    For all $j$, let $\tau_j' = \min(\tau_j, \sigma_{i_j})$. Then we could create another $\order' = (i_1, \cdots, i_k, \tau_1', \cdots, \tau_k')$ that is also optimal and satisfy conditions in the claim. 
\end{proof}

\begin{corollary} \label{cor:tau_is_min_sigma_and_opt}
    For problem $\PNOI_{i^*}$, there exists an optimal two-phase policy parametrized by $\order = (i_1, \cdots, i_k, \tau_1, \cdots \tau_k)$ such that for each $j \in [k]$, $\tau_j \leq \min\{\sigma_{i_j}, \OPT\}$.
\end{corollary}
\begin{proof}
By \Cref{coro:thresholdUtilityEq} and \Cref{claim:TLessThanSigma}. 
\end{proof} 

\propNonExposedUtil*

\begin{proof}
Firstly, we will again use the recurrence formula for two stage policy as well as expand the definition of $c_{i_j}$. 
    \begin{align*}
        U_{\PNOI_{i^*}}(\order)_{\geq j} &= \Pr[v_{i_j} \leq \tau_j] \cdot  U_{\PNOI_{i^*}}(\order)_{\geq (j+1)} + \Pr[v_{i_j} > \tau_j] \cdot \E_{v_{i_j} > \tau_j}[\weitz_{\geq (j+1)}(v_{i_j})] - c_{i_j} \\
        &= \Pr[v_{i_j} \leq \tau_j] \cdot  U_{\PNOI_{i^*}}(\order)_{\geq (j+1)} + \Pr[v_{i_j} > \tau_j] \cdot \E_{v_{i_j} > \tau_j}[\weitz_{\geq (j+1)}(v_{i_j}) - v_{i_j}]\\
        & \quad + \Pr[v_{i_j} > \tau_j] \cdot \E_{v_{i_j} > \tau_j}[v_{i_j}] - \E[(v_{i_j} - \sigma_{i_j})^+]\\
        &= \Pr[v_{i_j} \leq \tau_j] \cdot  U_{\PNOI_{i^*}}(\order)_{\geq (j+1)} + \Pr[v_{i_j} > \tau_j] \cdot \E_{v_{i_j} > \tau_j}[(\weitz_{\geq (j+1)}(0) - v_{i_j})^+] \\
        &\quad + \Pr[v_{i_j} > \tau_j] \cdot \tau_j + \E_{v_{i_j}}[(v_{i_j} - \tau_j)^+] - \E_{v_{i_j}}[(v_{i_j} - \sigma_{i_j})^+].
    \end{align*} 
Since $\ALG$ is a stage-non-exposed policy, for all $j \in [k]$, $\tau_j \leq \sigma_{i_j}$. Hence for any realized value of the random variable $v_{i_j}$, 
    \begin{align*}
        (v_{i_j} - \tau_j)^+ - (v_{i_j} - \sigma_{i_j})^+ = (\min(v_{i_j}, \sigma_{i_j}) - \tau_j)^+. 
    \end{align*}
    Taking the expectation over $v_{i_j}$ gives us 
    \begin{align*}
        \E_{v_{i_j}}[(v_{i_j} - \tau_{i_j})^+] - \E_{v_{i_j}}[(v_{i_j} - \sigma_{i_j})^+] = \E_{v_{i_j}}[(\min(v_{i_j}, \sigma_{i_j}) - \tau_j)^+]. 
    \end{align*}
    Thus 
    \begin{align*}
        \Pr[v_{i_j} > \tau_{i_j}] \cdot \tau_{i_j} + \E_{v_{i_j}}[(v_{i_j} - \tau_{i_j})^+] - \E_{v_{i_j}}[(v_{i_j} - \sigma_{i_j})^+] 
        &= \Pr[v_{i_j} > \tau_{i_j}] \cdot \tau_j + \E_{v_{i_j}}[(\min(v_{i_j}, \sigma_{i_j}) - \tau_j)^+]\\
        &= \Pr[v_{i_j} > \tau_j] \cdot \E_{v_{i_j} > \tau_j}[\min(v_{i_j}, \sigma_{i_j})]\\
        &= \Pr[v_{i_j} > \tau_j] \cdot \E_{v_{i_j} > \tau_j}[\kappa_{i_j}].
    \end{align*} 
    We can now rewrite the utility recurrence for $\ALG$ as 
    \begin{align*}
        U_{\PNOI_{i^*}}(\order)_{\geq j} = \Pr[v_{i_j} \leq \tau_j] \cdot U_{\PNOI_{i^*}}(\order)_{\geq (j+1)} + \Pr[v_{i_j} > \tau_j] \cdot \paren*{\E_{v_{i_j} > \tau_j}[(\weitz_{\geq (j+1)} - v_{i_j})^+] + \E_{v_{i_j} > \tau_j}[\kappa_{i_j}]}
    \end{align*}    
    Unrolling the recurrence gives the formula in the claim. 
\end{proof}

\claimClassTPNOI*

\begin{proof}
We prove that there is an optimal non adaptive solution for problem $\TPNOI_{i^*}$ by induction. Assume for any available item set $\U'$ where $|\U'| < |\U| = |\M \setminus \{i^*\}| = n-1$, there exists an optimal non adaptive solution to the tweaked problem. Observe that there must exist an optimal policy for the problem $\TPNOI_{i^*}$ with set $\U$ such that the first action is deterministic -- if the first action is randomized then that means there are two actions that are equally as good. Let $\OAL$ denote this this optimal deterministic policy. If the first action of $\OAL$ is to stop, then $\OAL$ is already non adaptive. On the other hand, if the first action of $\OAL$ is to open some box $i_1$ with threshold $\tau_1$. After the first step, either $v_{i_1} > \tau_1$ and the process stops, or $v_{i_1} \leq \tau_1$ and the agent still has $0$ reward. Thus a non adaptive optimal policy $\OAL_2$ for $\U \setminus \{i_1\}$ is also a locally optimal policy for the second case (where $v_{i_1} \leq \tau_1$). We can now device a new non adaptive optimal policy $\OAL'$ for tweaked problem on $\U$, where in the first step, $\OAL'$ opens box $i_1$ with threshold $\tau_1$, but in the case where $v_{i_1} \leq \tau_1$, $\OAL'$ takes future actions according to $\OAL_2$.

\end{proof}

\claimUtilityEqTPNOI*

\begin{proof}
Given a stage-non-exposed two phase policy parametrized by $\order = (\i_1, \cdots, i_k, \tau_1, \cdots, \tau_k)$, then by~\Cref{prop:nonExposedUtil}, the utility recurrence 
\begin{align*}
    U_{\PNOI_{i^*}}(\order)_{\geq j} = \Pr[v_{i_j} \leq \tau_j] \cdot U_{\PNOI_{i^*}}(\order)_{\geq (j+1)} + \Pr[v_{i_j} > \tau_j] \cdot \paren*{\E_{v_{i_j} > \tau_j}[(\weitz_{\geq (j+1)} - v_{i_j})^+] + \E_{v_{i_j} > \tau_j}[\kappa_{i_j}]}.
\end{align*} 
Similarly, for a non adaptive policy parametrized by $\order$ for problem $\TPNOI_{i^*}$, at step $j$, the policy stops with probability $\Pr[v_{i_j} > \tau_j]$, in which case the agent gets reward $\paren*{\E_{v_{i_j} > \tau_j}[(\weitz_{\geq (j+1)} - v_{i_j})^+] + \E_{v_{i_j} > \tau_j}[\kappa_{i_j}]}$. Hence the utility recurrence for $\TPNOI_{i^*}$ is also 
\begin{align*}
    U_{\TPNOI_{i^*}}(\order)_{\geq j} = \Pr[v_{i_j} \leq \tau_j] \cdot U_{\TPNOI_{i^*}}(\order)_{\geq (j+1)} + \Pr[v_{i_j} > \tau_j] \cdot \paren*{\E_{v_{i_j} > \tau_j}[(\weitz_{\geq (j+1)} - v_{i_j})^+] + \E_{v_{i_j} > \tau_j}[\kappa_{i_j}]}.
\end{align*}
Moreover, at step $k+1$, $U_{\PNOI_{i^*}}(\order)_{\geq (k+1)} = U_{\TPNOI_{i^*}}(\order)_{\geq (k+1)} = \E[v_{i^*}]$. We conclude that $U_{\PNOI_{i^*}}(\order) = U_{\TPNOI_{i^*}}(\order)$. 
\end{proof}

\claimApproxThresholdTPNOI*

\begin{proof}
Let $\order = (i_1, \cdots, i_k, \tau_1, \cdots, \tau_k)$ be a parameter for a stage-non-expose two-phase policy, and let $\order' = (i_1, \cdots, i_k, \tau_1', \cdots, \tau_k')$, where $\tau_j' = \floor{\frac{\tau_j}{\eps \cdot \OPT}} \cdot \eps \cdot \OPT$. Then $\order'$ is also a parameter for a stage-non-expose two-phase policy, since we have only decreased the thresholds. By \Cref{claim:utilityEqTPNOI}, $U_{\PNOI_{i^*}}(\order') = U_{\TPNOI_{i^*}}(\order')$. By \Cref{claim:constantT}, $U_{\PNOI_{i^*}}(\order') \geq \OPT_{i^*} - \eps \cdot \OPT$. 
\end{proof}

\begin{claim} \label{claim:diffMonotone}
    For any random variable $X$ and $Y$, $\E[\max(X, Y)] - \E[Y] \geq  \E[(\max(X, Y) - T)^+] - \E[(Y-T)^+]$. 
\end{claim}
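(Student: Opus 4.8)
The plan is to prove the inequality by first establishing a sure (pointwise) inequality between the relevant random variables and then taking expectations. The key observation is that the map $\phi(t) := \min(t,T) = t - (t-T)^+$ is non-decreasing in $t$. Since $\max(X,Y) \ge Y$ holds surely, applying $\phi$ to both sides gives $\min(\max(X,Y),T) \ge \min(Y,T)$ surely, i.e.
\[
\max(X,Y) - (\max(X,Y)-T)^+ \ \ge\ Y - (Y-T)^+ .
\]
Rearranging, $\max(X,Y) - Y \ge (\max(X,Y)-T)^+ - (Y-T)^+$ holds surely, and taking expectations of both sides yields exactly $\E[\max(X,Y)] - \E[Y] \ge \E[(\max(X,Y)-T)^+] - \E[(Y-T)^+]$, the claimed inequality.

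I would write this up in essentially the two displayed lines above. The only point deserving a word of care is integrability, so that the expectation of the rearranged pointwise inequality splits into the four separate terms appearing in the statement; this is immediate in every invocation of the claim in the paper, since there $X$ and $Y$ are built from the finite-mean, bounded-below random variables $\kappa_w$ and from bounded outside options.

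I do not expect any genuine obstacle here: the entire content of the claim is the monotonicity of $t \mapsto \min(t,T)$ together with the elementary identity $\min(s,T) = s - (s-T)^+$; once these are in hand, the statement follows in two lines. (The claim is used downstream to bound the marginal gain from raising an outside option $Y$ to $\max(X,Y)$ on a $T$-truncated quantity by the corresponding gain on the untruncated quantity, which is precisely what the pointwise inequality encodes.)
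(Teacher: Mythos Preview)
Your proof is correct. Both your argument and the paper's establish the same pointwise inequality
\[
\max(X,Y) - Y \;\ge\; (\max(X,Y)-T)^+ - (Y-T)^+
\]
and then take expectations, but the routes differ. You obtain the pointwise bound in one stroke from the monotonicity of $t\mapsto \min(t,T)=t-(t-T)^+$, applied to the sure inequality $\max(X,Y)\ge Y$. The paper instead proves the exact identity $(\max(X,Y)-T)^+ - (Y-T)^+ = (X-\max(Y,T))^+$ via case analysis, and then bounds $(X-\max(Y,T))^+ \le (X-Y)^+ = \max(X,Y)-Y$. Your approach is shorter and more conceptual; the paper's approach yields a slightly stronger intermediate identity, though that extra precision is not used elsewhere. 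Your remark on integrability is appropriate and matches how the claim is invoked.
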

\begin{proof}
    Notice that $(\max(X, Y) - T)^+ - (Y - T)^+$ is only positive when $X > Y$ and $X > T$, in which case the term can be rewritten as $X - T - (Y - T)^+ = X - \max(Y, T)$. Thus $(\max(X, Y) - T)^+ - (Y - T)^+ = (X - \max(Y, T))^+$. We conclude that 
    \begin{align*}
         \E[(\max(X, Y) - T)^+] - \E[(Y - T)^+] = \E[(X - \max(Y, T))^+] \leq \E[(\max(X, Y) - Y)^+] = \E[\max(X, Y)] - \E[Y]. 
    \end{align*}
\end{proof}

\claimVSupportWeitzDiff*
\begin{proof}
    We know that $\weitz_{\geq r} = \max_{r' \geq r} \kappa_{r'}$ and $\weitz_{\geq f(B_i)} = \max_{r' \geq f(B_i)} \kappa_{r'}$, thus $$\weitz_{\geq f(B_i)} = \max \paren*{\weitz_{\geq r}, \max_{f(B_i) \leq r' < r} \kappa_{r'}}.$$ Then by ~\Cref{claim:diffMonotone}, we know that 
    \begin{align*}
       \E[(\weitz_{\geq f(B_i)} - w_{i(j+1)})^+] -  \E[(\weitz_{\geq r} - w_{i(j+1)})^+] \leq \E[(\weitz_{\geq f(B_i)}] - \E[(\weitz_{\geq r}] \leq \eps^2 \cdot \OPT.  
    \end{align*}
    Moreover, 
    \begin{align*}
       \E[(\weitz_{\geq r} - w_{ij})^+] \leq \E[(\weitz_{\geq f(B_i)} - w_{ij})^+].
    \end{align*}
    Thus 
    \begin{align*}
        \E[(\weitz_{\geq r} - w_{ij})^+] - \E[(\weitz_{\geq r} - w_{i(j+1)})^+] &\leq \E[(\weitz_{\geq f(B_i)} - w_{ij})^+] - \E[(\weitz_{\geq f(B_i)} - w_{i(j+1)})^+] + \eps^2 \cdot \OPT\\
        &\leq \eps (1 - 3\eps) \cdot \OPT + \eps^2 \cdot OPT = \eps (1 - 2\eps) \cdot \OPT. 
    \end{align*}
\end{proof}

\claimSupportExistence*
\begin{proof}
We will now take the union of the support we found for each bucket $i \in [l]$ and also the low value range support to create the entire range of support.  
\begin{align*}
    \W = \paren*{\bigcup_{i=1}^p W_l} \cup \W_{L} \cup \{\infty\}.  
\end{align*} 
By construction, $\W$ contains only multiples of $\eps^2 \cdot \OPT$. By ~\Cref{claim:vSupportWeitzDiff}, for two nearest support $w < w'$ in $\W$, it must be the case that 
\begin{align*}
    \E[(\weitz_{\geq j} - w)^+] - \E[(\weitz_{\geq j} - w')^+] \leq \eps (1 - 2\eps) \cdot \OPT. 
\end{align*}
The only thing we need to verify is that $p+1 = |\W_i|$ is of constant size (specifically, $O\paren*{\frac{1}{\eps^2}}$) for all bucket $i$. Firstly, we observe that if we don't demand $w_{ij}$ to be a multiple of $\eps^2 \cdot \OPT$, it is trivial to construct a support set with size $O\paren*{\frac{1}{\eps^2}}$ such that 
\begin{align*}
    \E[(\weitz_{\geq f(B_{i})} - w_{ij})^+] - \E[(\weitz_{\geq f(B_{i})} - w_{i(j+1)})^+] \leq \eps (1 - 4\eps) \cdot \OPT.
\end{align*}
Then we can use a similar argument to ~\Cref{claim:vSupportWeitzDiff} to argue that rounding down the $w_{ij}$s onto the nearest multiple of $\eps^2 \cdot \OPT$ can only increase the Weitz term difference by $\eps^2 \cdot \OPT$, yielding the proposition.  
\end{proof}


\claimDTPNOI*

\begin{proof}
    Obviously at step $k$ (the last step) the only action to take is to get reward $\E[v_{i^*}]$, which is the same between $\TPNOI_{i^*}$  and $\DTPNOI_{i^*}$. Now we will start by assuming by induction that 
    \begin{align*}
        U_{\DTPNOI_{i^*}}(\order)_{\geq (j+1)} \geq U_{\TPNOI_{i^*}}(\order)_{\geq (j+1)} - \eps \cdot OPT - \Pr[\max_{j' \in \U_{j+1}} v_{i_{j'}} > V_{U}]. 
    \end{align*}
    Writing out the reward recurrence for $\ALG$ formally for the problem $\DTPNOI_{i^*}$, 
    \begin{align*}
        U_{\DTPNOI_{i^*}}(\order)_{\geq j} = \Pr[v_{i_j} \leq \tau_j] \cdot U_{\DTPNOI_{i^*}}(\order)_{\geq (j+1)} + \Pr[v_{i_j} > \tau_j] \cdot \paren*{\E_{v_{i_j} > \tau_j}[(\widetilde{\weitz_{\geq (j+1)}} - \widetilde{v_{i_j}})^+] + \E_{v_{i_j} > \tau_j}[\widetilde{\kappa_{i_j}}]}. 
    \end{align*}
    By the fact that $\max(\cdot)$ is an submodular function, we know that $\widetilde{\weitz_{\geq (j+1)}} = \max_{j' \in \U_{j+1}} \widetilde{\kappa_{j'}} \geq \max_{j' \in \U_{j+1}} \set*{(1 - \eps^2) \cdot \kappa_{j'}} \geq (1 - \eps^2) \cdot \max_{j' \in \U_{j+1}} \kappa_{j'} = \weitz_{\geq (j+1)}$. For any $j < k$, let let $v_{i_j}^L, v_{i_j}^U$ be $v_{i_j}$ rounded down/up to the nearest support in $W$ respectively. Notice that there are two possibilities: either $v_{i_j}^U = \infty$ and $v_{i_j}^L = V_U$, or Claim~\ref{claim:vSupportWeitzDiff} tells us that for any $r < k$ and any fixed realization of $v_{i_j}$, 
    $\E[(\weitz_{\geq r} - v_{i_j}^L)^+] - \E[(\weitz_{\geq r} - v_{i_j}^U)^+] \leq \eps (1 - 2\eps) \cdot \OPT.$ Now we will split $\Pr[v_{i_j} > \tau_j] \cdot \E_{v_{i_j} > \tau_j}[(\widetilde{\weitz_{\geq j}} - \widetilde{v_{i_j}})^+]$ into two terms: 
    \begin{align*}
        &\Pr[v_{i_j} > \tau_j] \cdot \E_{v_{i_j} > \tau_j}[(\widetilde{\weitz_{\geq (j+1)}} - \widetilde{v_{i_j}})^+] \\
        = &\Pr[ V_{U} \geq v_{i_j} > \tau_j] \cdot \E_{V_{U} \geq v_{i_j} > \tau_j}[(\widetilde{\weitz_{\geq (j+1)}} - \widetilde{v_{i_j}})^+] + \Pr[v_{i_j} > V_U] \cdot  \E_{v_{i_j} > V_U}[(\widetilde{\weitz_{\geq (j+1)}} - \widetilde{v_{i_j}})^+]. 
    \end{align*}
    Now, since $\widetilde{v_{i_j}} = \infty$ when $v_{i_j} \geq V_U$, this means that $\E_{v_{i_j} > V_U}[(\widetilde{\weitz_{\geq (j+1)}} - \widetilde{v_{i_j}})^+] = 0$. Thus
    \begin{align*}
        \Pr[v_{i_j} > \tau_j] \cdot \E_{v_{i_j} > \tau_j}[(\widetilde{\weitz_{\geq (j+1)}} - \widetilde{v_{i_j}})^+] = \Pr[ V_{U} \geq v_{i_j} > \tau_j] \cdot \E_{V_{U} \geq v_{i_j} > \tau_j}[(\widetilde{\weitz_{\geq (j+1)}} - \widetilde{v_{i_j}})^+]. 
    \end{align*}
    Now, we will use the fact that $\E[(\weitz_{\geq r} - v_{i_j}^L)^+] - \E[(\weitz_{\geq r} - v_{i_j}^U)^+] \leq \eps (1 - 2\eps) \cdot \OPT$ for any $v_{i_j} \leq V_{U}$ to bound the difference between $\E_{V_{U} \geq v_{i_j} > \tau_j}[(\widetilde{\weitz_{\geq (j+1)}} - \widetilde{v_{i_j}})^+]$ and $\E_{V_{U} \geq v_{i_j} > \tau_j}[(\weitz_{\geq (j+1)} - v_{i_j})^+]$. 
    Firstly, for any fixed value of $v_{i_j}$ where $v_{i_j} \leq V_U$, 
    \begin{align*}
        \E[(\widetilde{\weitz_{\geq (j+1)}} - \widetilde{v_{i_j}})^+] =& \E[(\widetilde{\weitz_{\geq (j+1)}} - v_{i_j}^U)^+]\\
        \geq & \E[(\weitz_{\geq (j+1)} - v_{i_j}^U)^+] - \eps^2 \cdot OPT\\
        \geq &\E[(\weitz_{\geq (j+1)} - v_{i_j}^L)^+] -  \paren*{
            \E\sq*{(\weitz_{\geq (j+1)} - v_{i_j}^L)^+} - \E\sq*{(\weitz_{\geq (j+1)} - v_{i_j}^U)^+}
        } - \eps^2 \cdot OPT\\
        \geq & \E[(\weitz_{\geq (j+1)} - v_{i_j}^L)^+] - \eps (1- 2\eps) \cdot OPT - \eps^2 \cdot OPT\\
        = & \E[(\weitz_{\geq (j+1)} - v_{i_j}^L)^+] - \eps (1 - \eps)\cdot OPT.
    \end{align*}
    Thus when we take expectation over $V_U \geq v_{i_j} > \tau_j$, 
    \begin{align*}
        \E_{V_U \geq v_{i_j} > \tau_j}\sq*{(\widetilde{\weitz_{\geq (j+1)}} - \widetilde{v_{i_j}})^+} \geq \E_{V_U \geq v_{i_j} > \tau_j}[(\weitz_{\geq (j+1)} - v_{i_j}^L)^+] - \eps (1 - \eps)\cdot OPT. 
    \end{align*}
    From the definition of $\widetilde{\kappa_{i_j}}$, we know that $\widetilde{\kappa_{i_j}} \geq \kappa_{i_j} - \eps^2 \cdot OPT$. Let $\U_{j}$ be the set $\M \setminus \{i_1, \cdots, i_{j-1}\}$, 
    Now, we can finally bound $U_{\DTPNOI_{i^*}}(\order)_{\geq j)}$ from $U_{\TPNOI_{i^*}}(\order)_{\geq j}$ as follows. 
    \begin{align*}
        U_{\DTPNOI_{i^*}}(\order)_{\geq j} 
        &\geq \Pr[v_{i_j} \leq \tau_j] \cdot (U_{\TPNOI_{i^*}}(\order)_{\geq (j+1)} - \eps \cdot \OPT - \Pr[\max_{j' \in \U_{j+1}} v_{i_{j'}} > V_{U}] \cdot \OPT) \\
        &\quad + \Pr[V_U \geq v_{i_j} > \tau_j] \cdot \paren*{\E_{V_U \geq v_{i_j} > \tau_j}[(\weitz_{\geq (j+1)} - v_{i_j})^+] - \eps (1 - \eps) \cdot OPT} \\
        & \quad + \Pr[v_{i_j} > \tau_j] \cdot \paren*{\E_{v_{i_j} > \tau_j}[\kappa_{i_j}] - \eps^2 \cdot \OPT} \\
        &=  \Pr[v_{i_j} \leq \tau_j] \cdot U_{\TPNOI_{i^*}}(\order)_{\geq j} + \Pr[v_{i_j}> \tau_j] \cdot \paren*{\E_{v_{i_j} > \tau_j}[(\weitz_{\geq (j+1)} - v_{i_j})^+]  + \E_{v_{i_j} > \tau_j}[\kappa_{i_j}]} \\
        &\quad -\Pr[v_{i_j} > V_U] \cdot \E_{v_{i_j} > V_U}[(\weitz_{\geq (j+1)} - v_{i_j})^+] - \Pr[v_{i_j} \leq \tau_j] \cdot \Pr[\max_{j' \in \U_{j+1}} v_{i_{j'}} > V_{U}] \cdot \OPT- \eps \cdot \OPT\\
        &\geq U_{\TPNOI_{i^*}}(\order)_{\geq j} - \eps \cdot \OPT -  \Pr[v_{i_j} \leq \tau_j] \cdot \Pr[\max_{j' \in \U_{j+1}} v_{i_{j'}} > V_{U}] \cdot \OPT - \E_{v_{i_j} > V_U}[(\weitz_{\geq (j+1)} - v_{i_j})^+].
    \end{align*}
    Notice that Weitzman over a subset of boxes is a valid policy for the $\PNOI_{i^*}$ problem, hence $\E_{v_{i_j} > V_U}[(\weitz_{\geq (j+1)} - v_{i_j})^+] \leq \E[\weitz_{\geq (j+1)}]$ is at most $\OPT$. We now conclude that 
    \begin{align*}
        U_{\DTPNOI_{i^*}}(\order)_{\geq j} &\geq U_{\TPNOI_{i^*}}(\order)_{\geq j} - (\Pr[v_{i_j} > V_U] + \Pr[v_{i_j} \leq \tau_j] \cdot \Pr[\max_{j' \in \U_{j+1}} v_{i_{j'}} > V_{U}]) \cdot \OPT - \eps \cdot \OPT \\
        &\geq U_{\TPNOI_{i^*}}(\order)_{\geq j} - \Pr[\max_{j' \in \U_{j}} v_{i_{j'}} > V_{U}]) \cdot \OPT - \eps \cdot \OPT. 
    \end{align*}
   
    We conclude that at step $1$,
    \begin{align*}
        U_{\DTPNOI_{i^*}}(\order) \geq U_{\TPNOI_{i^*}}(\order) - \Pr[\max_{j \in \M} v_{j} > V_U] \cdot OPT - \eps \cdot \OPT. 
    \end{align*}
    Given that $V_{U} = \frac{\E[\max_{j \in \M} v_j]}{\eps}$, the probability that $\max_{j \in \M} v_{j}$ is great than $V_{U}$ is at most $\eps$. Hence 
    \begin{align*}
        U_{\DTPNOI_{i^*}}(\order) \geq U_{\TPNOI_{i^*}}(\order) - 2 \eps \cdot \OPT. 
    \end{align*}
\end{proof}

\claimDTPNOIopt*
\begin{proof}
    By ~\Cref{claim:approxThresholdTPNOI}, there exists an $\order = (i_1, \cdots, i_k, \tau_1, \cdots, \tau_k)$ where for all $j \in [k]$, $\tau_j \in \W_{L} = \{0, \eps \cdot \OPT, \cdots, \OPT\}$, such that 
    \begin{align*}
        U_{\TPNOI_{i^*}}(\order) = U_{\PNOI_{i^*}}(\order) \geq \OPT_{i^*} - \eps \cdot \OPT. 
    \end{align*}
    This $\order$ is also a valid input to polices in $\class_{\DTPNOI_{i^*}}$. Therefore by~\Cref{claim:DTPNOI} we know that 
    \begin{align*}
        U_{\DTPNOI_{i^*}}(\order) \geq U_{\TPNOI_{i^*}}(\order) - 2\eps \cdot \OPT \geq \OPT_{i^*} - 3\eps \cdot \OPT. 
    \end{align*}
\end{proof}

\claimLessThanNOPT*

\begin{proof}
    We know that one policy for the $\PNOI_{i^*}$ problem is to claim a box $i$ closed, pay no price, and get expected utility $\E[v_i]$. Since $\OPT$ is optimal among all possible policies for the $\PNOI_{i^*}$ problem, $OPT \geq \max_{i} \E[v_i]$. Since the values $v_i$ are at least $0$,
    \begin{align*}
        \E\sq*{\max_{i} v_i} \leq \E\sq*{\sum_{i} v_i} = \sum_{i} \E[v_i] \leq \sum_{i} \max_{i} \E[v_i] = n \cdot \max_{i} \E[v_i] \leq n \cdot \OPT. 
    \end{align*}
\end{proof}

\claimOPTSTDP*

\begin{proof}
    Let $t$ be the first iteration where $V_j>0$ and let $S^U$ be the set of boxes the agent ends up opening in rounds $> t$. The total reward the agent gets is just $\max \paren*{\max_{i \in S^U}\widetilde{\kappa_i}, V_j}$ (since $V_j > 0$, the final reward is $0$). In order to maximum this term, we should make the set $S^U$ as large as possible, namely, open all remaining boxes. 
\end{proof} 

\claimAssociatedALG*
 
\begin{proof}
    Let us first define the recurrence for the expected reward from $\ALG \in \class_{\STDP_{i^*}}$ parametrized by $\order$ for problem $\STDP_{i^*}$. Firstly, assuming that stage $j$ is the first step where $V_j > 0$, then $V_j = \widetilde{v_{i_j}}$, and we will open all unopened boxes. Let $\U_{j+1} := \M \setminus \{i_1, \cdots, i_j\}$. We know that then the reward $\ALG$ gets from steps $\geq (j+1)$ is just the maximum $\widetilde{(\kappa_{r} - V_j)^+}$ among $r \in \U_{j+1}$. This is equal to $(\widetilde{\weitz_{\geq (j+1)}} - \widetilde{v_{i_j}})^+$. Meanwhile, during step $j$, the reward $\ALG$ gains is simply $\widetilde{\kappa_{i_j}}$. Hence we can now get the following reward recurrence:
    \begin{align*}
        U_{\STDP_{i^*}}(\order)_{\geq j} = \Pr[v_{i_j} \leq \tau_j] \cdot U_{\STDP_{i^*}}(\order)_{\geq (j+1)} + \Pr[v_{i_j} > \tau_j] \cdot \paren*{\E_{v_{i_j} > \tau_j}[(\widetilde{\weitz_{\geq (j+1)}} - \widetilde{v_{i_j}})^+] + \E_{v_{i_j} > \tau_j}[\widetilde{\kappa_{i_j}}]}.
    \end{align*}
    Moreover, at the end of the policy, $U_{\STDP_{i^*}}(\order)_{\geq (k+1)} = \E[v_{i^*}]$. These recurrence specifications are exactly the same as for $\order$ from problem $\DTPNOI_{i^*}$. Therefore 
    \begin{align*}
        U_{\DTPNOI_{i^*}}(\order) = U_{\STDP_{i^*}}(\order). 
    \end{align*}
\end{proof}

\THMPTAS*
\begin{proof}
Notice that given an adaptive algorithm $\ALG$ for problem $\STDP_{i^*}$, we could always find a corresponding policy in $\class_{\STDP_{i^*}}$ that has at least as much expected utility. Hence, given a PTAS to $\STDP_{i^*}$ problem (this is guaranteed to exist by~\cite{FuLX18}), we can get an $\order$ such that $$U_{\STDP_{i^*}}(\order) \geq (1 - \eps) \cdot U_{\STDP_{i^*}}(\order^*),$$ where $\order^*$ is the optimal \indexthreshold{} for problem $\STDP_{i^*}$.  
By~\Cref{claim:associatedALG} and ~\Cref{claim:DTPNOIopt}, we know that there exists a support $\W$ such that 
\begin{align*}
    U_{\STDP_{i^*}}(\order^*) = U_{\DTPNOI_{i^*}}(\order^*) \geq \OPT_{i^*} - 3 \eps \cdot \OPT. 
\end{align*}
Then
\begin{align*}
    U_{\STDP_{i^*}}(\order) \geq (1 - \eps) \cdot (\OPT_{i^*} - 3 \eps \cdot \OPT). 
\end{align*}
Now, we have already reasoned about the fact that given a fixed $\order$, our reformulations always had non-increasing expected utility compared to original formulation. Thus  $U_{\STDP_{i^*}}(\order) \leq U_{\PNOI_{i^*}}(\order)$. Thus for the $\order$ returned by our reduction from $\PNOI_{i^*}$, it must be the case that
\begin{align*}
    U_{\PNOI_{i^*}}(\order) \geq \OPT_{i^*} - O(\eps) \cdot \OPT. 
\end{align*}
Let us use $\ALG^{(i^*)}$ to denote the two-phase policy parametrized by $\order$ returned from problem $\PNOI_{i^*}$. Then we can conclude that by doing our reduction for $\PNOI_{i^*}$ for all $i^* \in \M$, then taking the better between the best $\ALG^{(i^*)}$ for all $i^* \in \M$ and Weitzman's policy, we can find a policy with reward at least $\OPT - O(\eps) \cdot \OPT$.

During our reduction to stochastic dynamic program, all steps are fully polynomial except from we tried all choices for $\W$, which takes $O\paren*{n^{\poly(1/\eps)}}$ time, which has polynomial dependence on $n$. Running the PTAS for the stochastic dynamic program itself also only takes time that has polynomial dependence on $n$. Therefore our policy finding scheme is a PTAS for the Pandora's box with nonobligatory inspection problem. 
\end{proof}

\end{document}